\documentclass[a4paper,11pt]{article}

\usepackage{fullpage,latexsym,amsthm,amsmath,color,amssymb,url,hyperref}
\usepackage{tikz}
\usepackage[ruled,linesnumbered]{algorithm2e}
\usetikzlibrary{arrows,decorations.pathreplacing,shapes}

\usepackage{color}

\input{colordvi}


\newcommand{\mynewtheorem}[2]{
  \newaliascnt{#1}{dummy}
  \newtheorem{#1}[#1]{#2}
  \aliascntresetthe{#1}
  \expandafter\def\csname #1autorefname\endcsname{#2}
}

\definecolor{Red}{rgb}{1, 0 ,0}
\definecolor{Blue}{rgb}{0, 0 ,1}

\newcommand{\blue}[1]{{\color{Blue}{#1}}}

\newtheorem{lemma}{Lemma}
\newtheorem{claim}{Claim}

\newtheorem{observation}{Observation}
\newtheorem{corollary}{Corollary}
\newtheorem{definition}{Definition}
\newtheorem{theorem}{Theorem}

\newcommand{\T}{\mathbf{T}}
\renewcommand{\L}{\mathbf{L}}
\newcommand{\lca}{\mathsf{lca}}

\newcommand{\infL}{\prec_{\L}}

\renewcommand{\H}{\mathsf{H}}
\newcommand{\PQ}{\textsf{PQ}}
\newcommand{\FPQ}{\textsf{FPQ}}
\newcommand{\perm}{\mathsf{permute}}
\newcommand{\rev}{\mathsf{reverse}}


\title{\textbf{Tree-layout based graph classes: \\ proper chordal graphs}\footnote{Research supported by the project DEMOGRAPH (ANR-16-CE40-0028) and the French-German Collaboration ANR/DFG Project UTMA (ANR-20-CE92-0027). An extended abstract of this paper appeared in:  \emph{International Symposium on Theoretical Aspects of Computer Science} (STACS). Leibniz International Proceedings in Informatics (LIPIcs), Volume 289, pp. 55:1-55:18, 2024.
}}


\author{Christophe Paul\thanks{LIRMM, Univ Montpellier, CNRS, Montpellier, France.} \and Evangelos Protopapas \thanks{LIRMM, Univ Montpellier, CNRS, Montpellier, France.}}

\date{\today}

\begin{document}

\maketitle

\begin{abstract}
Many standard graph classes are known to be characterized by means of layouts (a vertex ordering) excluding some patterns. Important such graph classes are  among others: proper interval graphs, interval graphs, chordal graphs, permutation graphs, (co-)comparability graphs. For example, a graph $G=(V,E)$ is a proper interval graph if and only if $G$ has a layout $\L_G$ such that for every triple of vertices such that $x\prec_{\L_G} y\prec_{\L_G} z$, if $xz\in E$, then $xy\in E$ and $yz\in E$. Such a triple $x$, $y$, $z$ is called an \emph{indifference triple} and layouts excluding indifference triples are known as \emph{indifference layouts}. 

In this paper, we investigate the concept of \emph{tree-layouts}. A tree-layout $\T_G=(T,r,\rho_G)$ of a graph $G=(V,E)$ is a tree $T$ rooted at some node $r$ and equipped with a one-to-one mapping $\rho_G$ between $V$ and the nodes of $T$ such that for every edge $xy\in E$, either $x$ is an ancestor of $y$, denoted $x\prec_{\T_G} y$, or $y$ is an ancestor of $x$. Clearly, layouts are tree-layouts. 

Excluding a pattern in a tree-layout is defined similarly as excluding a pattern in a layout, but now using the ancestor relation. Unexplored graph classes can be defined by means of tree-layouts excluding some patterns. As a proof of concept, we show that excluding non-indifference triples in tree-layouts yields a natural notion of \emph{proper chordal graphs}. We characterize proper chordal graphs and position them in the hierarchy of known subclasses of chordal graphs. We also provide a canonical representation of proper chordal graphs that encodes all the indifference tree-layouts rooted at some vertex. Based on this result, we first design a polynomial time recognition algorithm for proper chordal graphs. We then show that the problem of testing isomorphism between two proper chordal graphs is in $\mathsf{P}$, whereas this problem is known to be $\mathsf{GI}$-complete on chordal graphs.

\end{abstract}

\newpage



\section{Introduction}
\label{sec_intro}

\paragraph{Context.} A graph class $\mathcal{C}$ is hereditary if for every graph $G\in \mathcal{C}$ and every induced subgraph $H$ of $G$, which we denote $H\subseteq_{\sf i} G$, we have that $H\in \mathcal{C}$. A \emph{minimal forbidden subgraph} for $\mathcal{C}$ is a graph $F\notin \mathcal{C}$ such that for every induced subgraph $H\subseteq_{\sf i} F$, $H\in\mathcal{C}$. Clearly, a hereditary graph class $\mathcal{C}$ is characterized by its set of minimal  forbidden subgraphs.
Let $\mathcal{F}$ be a set of graphs that are pairwise not induced subgraphs of one another. We say that a graph $G$ is an \emph{$\mathcal{F}$-free graph}, if it does not contain any graph of $\mathcal{F}$ as an induced subgraph. 
Many graph classes are characterized by a finite set $\mathcal{F}$ of minimal forbidden subgraphs. If $\mathcal{F}=\{H\}$, then we simply say that $G$ is $H$-free if $H\not\subseteq_{\sf i} G$.  A popular example of such a class is the set of cographs~\cite{Lerchs71,Sumner73}. A graph $G$ is a \emph{cograph} if either $G$ is the single vertex graph, or it is the disjoint union of two cographs, or its complement is a cograph. It is well known that $G$ is a cograph if and only if it is a $P_4$-free graph~\cite{Lerchs71,CorneilLS81}. Unfortunately, the set of minimal obstructions of a hereditary graph family may not be finite. This is the case for \emph{chordal graphs}~\cite{HajnalS58,Berge61} which are defined as the graphs that do not contain a chordless cycle of length at least $4$ as an induced subgraph.

An interesting approach, initiated by Skrien~\cite{Skrien82} and Damaschke~\cite{Damaschke90}, to circumvent this issue, is to embed graphs in some additional structure such as vertex orderings, also called \emph{layouts}. An \emph{ordered graph} is then defined as a pair $(G,\prec_G)$ such that $\prec_G$ is a total ordering of the vertex set $V$ of the graph $G=(V,E)$. We say that an ordered graph $(H,\prec_H)$ is a \emph{pattern} of the ordered graph $(G,\prec_G)$, which we denote by $(H,\prec_H)\subseteq_{\sf p} (G,\prec_G)$, if $H\subseteq_{\sf i} G$ and for every pair of vertices $x$ and $y$ of $H$, $x{\prec_G} y$ if and only if $x{\prec_H} y$. A graph $G$ excludes the pattern $(H,\prec_H)$, if there exists a layout $\prec_G$ of $G$ such that $(H,\prec_H)\not\subseteq_{\sf p} (G,\prec_G)$. More generally, a graph class $\mathcal{C}$ excludes a set $\mathcal{P}$ of patterns if for every graph $G\in \mathcal{C}$, there exists a layout $\prec_G$ such that for every pattern $(H,\prec_H)\in\mathcal{P}$, $(H,\prec_H)\not\subseteq_{\sf p} (G,\prec_G)$. We let $\frak{L}(\mathcal{P})$ denote the class of graphs excluding a pattern from  $\mathcal{P}$.  Hereafter, a small size pattern $(H,\prec_H)$ will be encoded by listing its set of (ordered) edges and non-edges. There are two patterns on two vertices and eight patterns on three vertices, see \autoref{fig_size3_pattern}.

\begin{figure}[htbh]
\begin{center}
\begin{tikzpicture}[thick,scale=1.3]
\tikzstyle{sommet}=[circle, draw, fill=black, inner sep=0pt, minimum width=4pt]

\begin{scope}[xshift=-1.5cm,yshift=2cm]
\draw (-1,0) to[bend left=40] (0,0);
\draw (-1,0) node[sommet]{};
\draw (0,0) node[sommet]{};
\node[below] (x) at (-1,0) {$1$};
\node[below] (a) at (-0.5,0) {$\prec$};
\node[below] (y) at (0,0) {$2$};
\node[below] (p) at (-0.5,-0.5) {$\langle{12}\rangle$};
\end{scope}

\begin{scope}[xshift=2.5cm,yshift=2cm]
\draw[red,thick,dashed] (-1,0) to[bend left=30] (0,0);
\draw (-1,0) node[sommet]{};
\draw (0,0) node[sommet]{};
\node[below] (x) at (-1,0) {$1$};
\node[below] (a) at (-0.5,0) {$\prec$};
\node[below] (y) at (0,0) {$2$};
\node[below] (p) at (-0.5,-0.5) {$\langle\overline{12}\rangle$};
\end{scope}

\begin{scope}[shift=(0:-4.5)]
\draw (-1,0) to[bend left=40] (1,0);
\draw (0,0) to[bend left=30] (1,0);
\draw(-1,0) to[bend left=30] (0,0);
\draw (-1,0) node[sommet]{};
\draw (0,0) node[sommet]{};
\draw (1,0) node[sommet]{};
\node[below] (x) at (-1,0) {$1$};
\node[below] (a) at (-0.5,0) {$\prec$};
\node[below] (y) at (0,0) {$2$};
\node[below] (b) at (0.5,0) {$\prec$};
\node[below] (z) at (1,0) {$3$};
\node[below] (p) at (0,-0.5) {$\langle{12},{13},{23}\rangle$};
\end{scope}

\begin{scope}[shift=(0:-1.5)]
\draw (-1,0) to[bend left=40] (1,0);
\draw[red,thick,dashed] (0,0) to[bend left=30] (1,0);
\draw(-1,0) to[bend left=30] (0,0);
\draw (-1,0) node[sommet]{};
\draw (0,0) node[sommet]{};
\draw (1,0) node[sommet]{};
\node[below] (x) at (-1,0) {$1$};
\node[below] (a) at (-0.5,0) {$\prec$};
\node[below] (y) at (0,0) {$2$};
\node[below] (b) at (0.5,0) {$\prec$};
\node[below] (z) at (1,0) {$3$};
\node[below] (p) at (0,-0.5) {$\langle 12,13,\overline{23}\rangle$};
\end{scope}
                
\begin{scope}[shift=(0:1.5)]
\draw (-1,0) to[bend left=40] (1,0);
\draw (0,0) to[bend left=30] (1,0);
\draw[red,thick,dashed] (-1,0) to[bend left=30] (0,0);
\draw (-1,0) node[sommet]{};
\draw (0,0) node[sommet]{};
\draw (1,0) node[sommet]{};
\node[below] (x) at (-1,0) {$1$};
\node[below] (a) at (-0.5,0) {$\prec$};
\node[below] (y) at (0,0) {$2$};
\node[below] (b) at (0.5,0) {$\prec$};
\node[below] (z) at (1,0) {$3$};
\node[below] (p) at (0,-0.5) {$\langle\overline{12},{13},{23}\rangle$};
\end{scope}

\begin{scope}[shift=(0:4.5)]
\draw (-1,0) to[bend left=40] (1,0);
\draw[red,thick,dashed] (0,0) to[bend left=30] (1,0);
\draw[red,thick,dashed] (-1,0) to[bend left=30] (0,0);
\draw (-1,0) node[sommet]{};
\draw (0,0) node[sommet]{};
\draw (1,0) node[sommet]{};
\node[below] (x) at (-1,0) {$1$};
\node[below] (a) at (-0.5,0) {$\prec$};
\node[below] (y) at (0,0) {$2$};
\node[below] (b) at (0.5,0) {$\prec$};
\node[below] (z) at (1,0) {$3$};
\node[below] (p) at (0,-0.5) {$\langle\overline{12},{13},\overline{23}\rangle$};
\end{scope}

\begin{scope}[xshift=-4.5cm,yshift=-2cm]
\draw[red,thick,dashed] (-1,0) to[bend left=40] (1,0);
\draw (0,0) to[bend left=30] (1,0);
\draw  (-1,0) to[bend left=30] (0,0);
\draw (-1,0) node[sommet]{};
\draw (0,0) node[sommet]{};
\draw (1,0) node[sommet]{};
\node[below] (x) at (-1,0) {$1$};
\node[below] (a) at (-0.5,0) {$\prec$};
\node[below] (y) at (0,0) {$2$};
\node[below] (b) at (0.5,0) {$\prec$};
\node[below] (z) at (1,0) {$3$};
\node[below] (p) at (0,-0.5) {$\langle{12},\overline{13},{23}\rangle$};
\end{scope}
                
\begin{scope}[xshift=-1.5cm,yshift=-2cm]
\draw[red,thick,dashed] (-1,0) to[bend left=40] (1,0);
\draw[red,thick,dashed] (0,0) to[bend left=30] (1,0);
\draw (-1,0) to[bend left=30] (0,0);
\draw (-1,0) node[sommet]{};
\draw (0,0) node[sommet]{};
\draw (1,0) node[sommet]{};
\node[below] (x) at (-1,0) {$1$};
\node[below] (a) at (-0.5,0) {$\prec$};
\node[below] (y) at (0,0) {$2$};
\node[below] (b) at (0.5,0) {$\prec$};
\node[below] (z) at (1,0) {$3$};
\node[below] (p) at (0,-0.5) {$\langle{12},\overline{13},\overline{23}\rangle$};

\end{scope}
                
\begin{scope}[xshift=1.5cm,yshift=-2cm]
\draw[red,thick,dashed] (-1,0) to[bend left=40] (1,0);
\draw (0,0) to[bend left=30] (1,0);
\draw[red,thick,dashed] (-1,0) to[bend left=30] (0,0);
\draw (-1,0) node[sommet]{};
\draw (0,0) node[sommet]{};
\draw (1,0) node[sommet]{};
\node[below] (x) at (-1,0) {$1$};
\node[below] (a) at (-0.5,0) {$\prec$};
\node[below] (y) at (0,0) {$2$};
\node[below] (b) at (0.5,0) {$\prec$};
\node[below] (z) at (1,0) {$3$};
\node[below] (p) at (0,-0.5) {$\langle\overline{12},\overline{13},{23}\rangle$};
\end{scope}

\begin{scope}[xshift=4.5cm,yshift=-2cm]
\draw[red,thick,dashed] (-1,0) to[bend left=40] (1,0);
\draw[red,thick,dashed]  (0,0) to[bend left=30] (1,0);
\draw[red,thick,dashed]  (-1,0) to[bend left=30] (0,0);
\draw (-1,0) node[sommet]{};
\draw (0,0) node[sommet]{};
\draw (1,0) node[sommet]{};
\node[below] (x) at (-1,0) {$1$};
\node[below] (a) at (-0.5,0) {$\prec$};
\node[below] (y) at (0,0) {$2$};
\node[below] (b) at (0.5,0) {$\prec$};
\node[below] (z) at (1,0) {$3$};
\node[below] (p) at (0,-0.5) {$\langle\overline{12},\overline{13},\overline{23}\rangle$};
\end{scope}

\end{tikzpicture}
\end{center}
\caption{The  patterns on at most $3$ vertices. $\mathcal{L}(\langle\overline{12},{13},{23}\rangle)$ is the class of chordal graphs. \label{fig_size3_pattern}}
\end{figure}
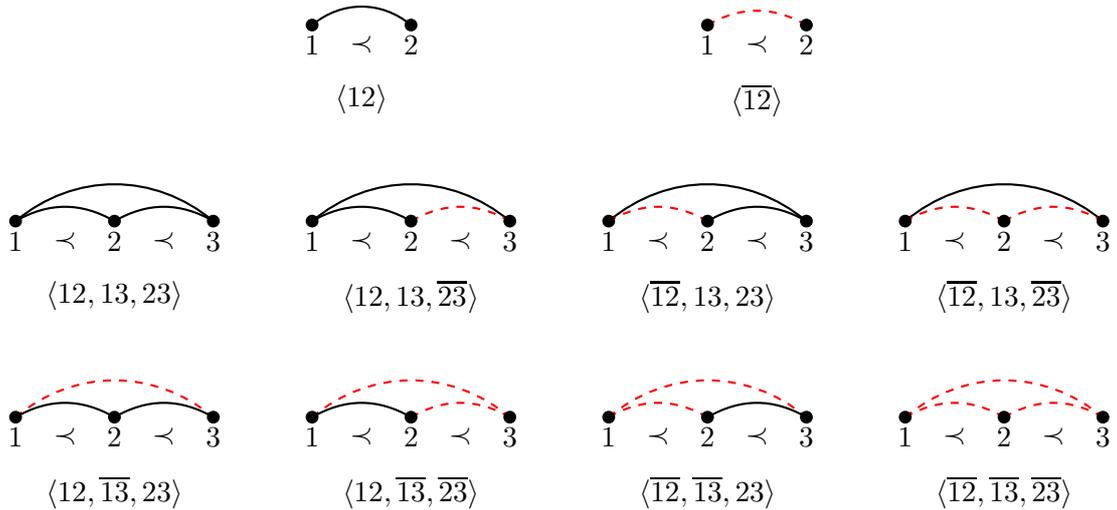

Interestingly, it is known that chordal graphs are characterized by excluding $\mathcal{P}_{\sf chordal}=\big\{\langle\overline{12},{13},{23}\rangle\big\}$, see \autoref{fig_size3_pattern}~\cite{Damaschke90,Duchet84}. This characterization relies on the fact that a graph is chordal if and only if it admits a \emph{simplicial elimination ordering}~\cite{Dirac61,Rose70}. A vertex is \emph{simplicial} if its neighbourhood induces a clique. A simplicial elimination ordering can be defined by a layout $\prec_G$ of $G=(V,E)$ such that every vertex $x$ is simplicial in the subgraph $G[\{y\in V\mid y\prec_{G} x\}]$. Observe that a vertex is simplicial if and only if it is not the mid vertex of a $P_3$, the induced path on three vertices implies the excluded pattern characterization.

In fact, Ginn~\cite{Ginn99} prove that for every pattern $(H,\prec_H)$ such that $H$ is neither the complete graph nor the edge-less graph, characterizing the graph family $\mathcal{L}((H,\prec_H))$ requires an infinite family of forbidden induced subgraphs. Observe however that excluding a unique pattern is important for that result. Indeed, cographs are characterized as $P_4$-free graphs (see discussion above) and need a set $\mathcal{P}_{\sf cograph}$ of several excluded patterns (see \autoref{fig_cograph_patterns}) to be characterized~\cite{Damaschke90}:
\[
\begin{array}{rcl}
\mathcal{P}_{\sf cograph} & =  & \big\{\langle {12},\overline{13},{23} \rangle, \langle \overline{12},{13},\overline{23}\rangle, \langle\overline{12},{13},\overline{14},{23},{24},\overline{34}\rangle, 
\langle{12},\overline{13},{14},\overline{23},\overline{24},{34}\rangle\big\}
\end{array}
\]

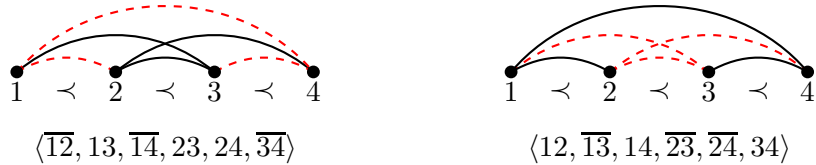
\begin{figure}[htbh]
\begin{center}
\begin{tikzpicture}[thick,scale=1.3]
\tikzstyle{sommet}=[circle, draw, fill=black, inner sep=0pt, minimum width=4pt]
                
\begin{scope}[shift=(0:-4)]
\draw (-1,0) to[bend left=40] (1,0);
\draw (0,0) to[bend left=30] (1,0);
\draw (0,0) to[bend left=40] (2,0);
\draw[red,thick,dashed](-1,0) to[bend left=30] (0,0);
\draw[red,thick,dashed](1,0) to[bend left=30] (2,0);
\draw[red,thick,dashed](-1,0) to[bend left=50] (2,0);
\draw (-1,0) node[sommet]{};
\draw (0,0) node[sommet]{};
\draw (1,0) node[sommet]{};
\draw (2,0) node[sommet]{};
\node[below] (x) at (-1,0) {$1$};
\node[below] (a) at (-0.5,0) {$\prec$};
\node[below] (y) at (0,0) {$2$};
\node[below] (b) at (0.5,0) {$\prec$};
\node[below] (v) at (1,0) {$3$};
\node[below] (c) at (1.5,0) {$\prec$};
\node[below] (z) at (2,0) {$4$};
\node[below] (p) at (0.5,-0.5) {$\langle\overline{12},{13},\overline{14},{23},{24},\overline{34}\rangle$};
\end{scope}


\begin{scope}[shift=(0:1)]
\draw[red,thick,dashed] (-1,0) to[bend left=40] (1,0);
\draw[red,thick,dashed] (0,0) to[bend left=30] (1,0);
\draw[red,thick,dashed] (0,0) to[bend left=40] (2,0);
\draw (-1,0) to[bend left=30] (0,0);
\draw (1,0) to[bend left=30] (2,0);
\draw (-1,0) to[bend left=50] (2,0);
\draw (-1,0) node[sommet]{};
\draw (0,0) node[sommet]{};
\draw (1,0) node[sommet]{};
\draw (2,0) node[sommet]{};
\node[below] (x) at (-1,0) {$1$};
\node[below] (a) at (-0.5,0) {$\prec$};
\node[below] (y) at (0,0) {$2$};
\node[below] (b) at (0.5,0) {$\prec$};
\node[below] (v) at (1,0) {$3$};
\node[below] (c) at (1.5,0) {$\prec$};
\node[below] (z) at (2,0) {$4$};
\node[below] (p) at (0.5,-0.5) {$\langle{12},\overline{13},{14},\overline{23},\overline{24},{34}\rangle$};
\end{scope}

    \end{tikzpicture}
\end{center}
\vspace{-0.4cm}
\caption{The two size $4$ forbidden patterns of cographs. 
\label{fig_cograph_patterns}}
\end{figure}

In~\cite{DuffusGR95}, Duffus et al. investigate the computational complexity of the recognition problem of $\mathcal{L}((H,\prec_H))$ for a fixed ordered graph $(H,\prec_H)$. They conjectured that if $H$ is neither the complete graph nor the edge-less graph, then recognizing $\mathcal{L}((H,\prec_H))$ is \textsf{NP}-complete if $H$ or its complement is $2$-connected. Hell et al.~\cite{HellMR14} have recently shown that if $\mathcal{P}$ only  contains patterns of size at most $3$, then  $\frak{L}(\mathcal{P})$ can be recognized in polynomial time using a $2$-\textsf{SAT} approach. Besides chordal graphs (see discussion above), these graph classes comprise very important graph classes, among others:
\begin{itemize}
\item Bipartite graphs exclude $\mathcal{P}_{\sf bip}=\big\{\langle {12},{13},{23}\rangle,\langle {12},\overline{13},{23}\rangle \big\}$: If $B=(X,Y,E)$, then every layout such that for every $x\in X$, $y\in Y$, $x\prec_B y$ is $\mathcal{P}_{\sf bip}$-free.
\item Forests exclude $\mathcal{P}_{\sf forest}=\big\{\langle {12},{13},{23}\rangle,\langle \overline{12},{13},{23}\rangle \big\}$: A $\mathcal{P}_{\sf forest}$-free layout is obtained by ordering the vertices of $T$ according to a depth first search ordering of $T$.
\item \emph{Co-comparability graphs}~\cite{Ghouila-Houri62,Gallai67}  exclude $\mathcal{P}_{\sf cocomp}=\big\{\langle \overline{12},{13},\overline{23}\rangle \big\}$: A co-comparability graph is a graph whose complement can be transitively oriented. A $\mathcal{P}_{\sf cocomp}$-free layout is obtained as a linear extension of a transitive orientation of $\overline{G}$;
\item \emph{Interval graphs}~\cite{Hajos57,Benzer59,Golumbic80} exclude $\mathcal{P}_{\sf int}=\mathcal{P}_{\sf cocomp}\cup\mathcal{P}_{\sf chordal}$~\cite{Olariu91}: A graph is an interval graph if it is the intersection graph of a set of intervals on the real line. The existence of a $\mathcal{P}_{\sf int}$-layout for interval graphs follows from the fact that a graph is an interval graph if and only if it is chordal and co-comparability.
\item \emph{Proper interval graphs}~\cite{Roberts68,Roberts69} 
exclude $\mathcal{P}_{\sf proper}=\big\{ \langle {12},{13},\overline{23}\rangle, \langle \overline{12},{13},{23}\rangle \big\}$~\cite{Damaschke90}. A graph is a proper interval graph if it is the intersection graph of a set of proper intervals on the real line (no interval is a subset of another one). We observe that proper interval graphs where originally characterized by the existence of a so-called \emph{indifference orderings} ~\cite{Roberts68,Roberts69} that are exactly the layouts excluding $\mathcal{P}_{\sf int}\cup \big\{ \langle {12},{13},\overline{23}\rangle \big\}$~\cite{Roberts68,Roberts69}. As we will see later, a layout excluding $\mathcal{P}_{\sf int}\cup \big\{ \langle {12},{13},\overline{23}\rangle \big\}$ is a $\mathcal{P}_{\sf proper}$-free layout.

\item \emph{Trivially perfect graphs}~\cite{Golumbic78} exclude $\mathcal{P}_{\sf trivPer}=\mathcal{P}_{\sf chordal}\cup \mathcal{P}_{\sf comp}$. A graph $G$ is a trivially perfect graph if and only if it is $\{P_4,C_4\}$-free, or equivalently $G$ is the comparability graph of a rooted tree $T$ (two vertices are adjacent if one is the ancestor of the other). A $\mathcal{P}_{\sf trivPer}$-free layout is obtained from a depth first search ordering of $T$. Moreover every layout of the $P_4$ and the $C_4$ contains one of the patterns of $\mathcal{P}_{\sf trivPer}$ (see~\cite{FeuilloleyH21}).
\end{itemize}
For more examples, the reader should refer to~\cite{Damaschke90,FeuilloleyH21}. Feuilloley and Habib~\cite{FeuilloleyH21} list all the graph classes that can be obtained by excluding a set of patterns each of size at most $3$. Moreover, for most of them (but two), they argue about the existence of a linear time recognition algorithm.

\paragraph{From layouts to tree-layouts.}
A layout $\prec_G$ of a graph $G=(V,E)$ on $n$ vertices can be viewed as an embedding of $G$ into a path $P$ on $n$ vertices rooted at one of its extremities. Under this view point, it becomes natural to consider graph embeddings in graphs that are more general than rooted paths. Recently, Guzman-Pro et al.~\cite{Guzman-ProHH23} have studied embeddings in a cyclic ordering. In this paper, we consider embedding the vertices of a graph in a rooted tree, yielding the notion of \emph{tree-layout}.

\begin{definition} \label{def_treelayout}
Let $G=(V,E)$ be a graph on $n$ vertices. A \emph{tree-layout} of $G$ is a triple $\T_G=(T,r,\rho_G)$ where $T$ is a tree on a set $V_T$ of $n$ nodes rooted at $r$ and $\rho_G:V\rightarrow V_T$ is a bijection such that for every edge $xy\in E$, either $x$ is an ancestor of $y$, denoted by $x\prec_{\T} y$, or $y$ is an ancestor of $x$.
\end{definition}

\begin{figure}[ht]
\begin{center}
\begin{tikzpicture}[thick,scale=0.7]
\tikzstyle{sommet}=[circle, draw, fill=black, inner sep=0pt, minimum width=4pt]

\draw  (0,0)  node[sommet]{}
    -- (0,2) node[sommet]{}
    -- (1,3.4) node[sommet]{}
    -- (2,2) node[sommet]{}
    -- (2,0) node[sommet]{}
    -- cycle;
\draw (0,2) -- (2,2);
\draw (1,-1.34) node[sommet]{};
\draw (-1.34,1) node[sommet]{};
\draw (-1.34,-1.34) node[sommet]{};
\draw (-1.34,1) -- (-1.34,-1.34);
\draw (-1.34,1) -- (0,2);
\draw (-1.34,-1.34) -- (0,0);
\draw (-1.34,-1.34) -- (1,-1.34);
\draw (0,0) -- (1,-1.34);
\draw (2,0) -- (1,-1.34);

\node[right] (a) at (1.1,3.4) {$a$};
\node[left] (b) at (-0.1,2) {$b$};
\node[right] (c) at (2.1,2) {$c$};
\node[left] (d) at (-1.44,1) {$d$};
\node[left] (e) at (-0.1,0) {$e$};
\node[right] (f) at (2.1,0) {$f$};
\node[left] (g) at (-1.44,-1.34) {$g$};
\node[right] (h) at (1.1,-1.34) {$h$};

\node[above] (G) at (-1.5,2.5) {$G=(V,E)$};

\draw[red,thick] (8,4) .. controls (8.6,2.5) .. (9,1) ;
\draw[red,thick] (8,4) .. controls (7.4,2.5) .. (7,1) ;
\draw[red,thick] (8,4) .. controls (10,1) .. (10,-2) ;
\draw[red,thick] (8,4) .. controls (6,1.75) .. (6,-0.5) ;
\draw[red,thick] (8,2.5) .. controls (7.85,1.75) .. (7,1) ;
\draw[red,thick] (8,2.5) .. controls (8.15,1.75) .. (9,1) ;
\draw[red,thick] (8,2.5) .. controls (8.3,0.25) .. (8,-2) ;
\draw[red,thick] (7,1) .. controls (6.85,0.25) .. (6,-0.5) ;
\draw[red,thick] (9,1) .. controls (9.25,0.25) .. (9,-0.5) ;
\draw[red,thick] (9,1) .. controls (8.4,-0.5) .. (8,-2) ;
\draw[red,thick] (9,-0.5) .. controls (8.8,-1.25) .. (8,-2) ;
\draw[red,thick] (9,-0.5) .. controls (9.2,-1.25) .. (10,-2) ;

\draw[very thick] (8,4) node[sommet]{}
-- (8,2.5) node[sommet]{}
-- (7,1) node[sommet]{}
--  (6,-0.5) node[sommet]{};

\draw[very thick] (8,2.5)
-- (9,1) node[sommet]{}
-- (9,-0.5) node[sommet]{}
-- (8,-2) node[sommet]{};

\draw[very thick] (9,-0.5) -- (10,-2) node[sommet]{};

\node[right] (bb) at (8.1,4) {$r=\rho(b)$};
\node[left] (ff) at (8,2.5) {$\rho(f)$};
\node[left] (cc) at (6.9,1) {$\rho(c)$};
\node[left] (aa) at (5.9,-0.5) {$\rho(a)$};
\node[right] (ee) at (9.1,1) {$\rho(e)$};
\node[right] (gg) at (9.1,-0.5) {$u=\rho(g)$};
\node[left] (hh) at (7.9,-2) {$\rho(h)$};
\node[right] (dd) at (10.1,-2) {$\rho(d)$};

\node[above] (T) at (5.5,2.5) {$(T,r,\rho)$};

    \end{tikzpicture}
\end{center}
\caption{\label{fig_tree_layout} A tree-layout $(T,r,\rho)$ of a graph $G=(V,E)$.}
\end{figure}
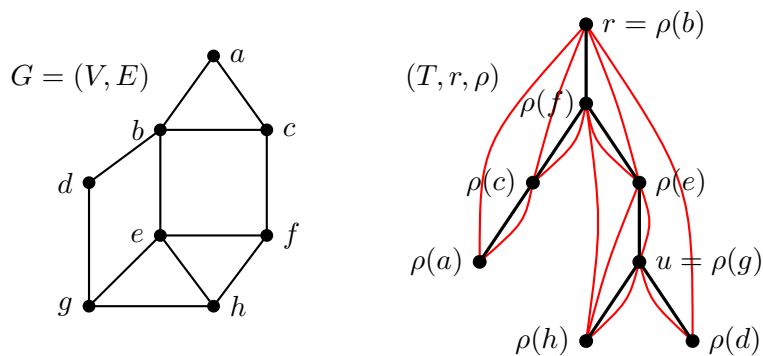

We observe that, from \autoref{def_treelayout}, in a \emph{tree-layout} $\T_G=(T,r,\rho_G)$ of a graph  $G$, $T$ is not a \emph{Trémaux tree} since it is not necessarily a spanning tree of $G$ (see \cite{NesetrilO06Treedepth} and \cite{Bienstock90OnEmbedding} for similar concepts). However, it is easy to see that a tree-layout $\T_G=(T,r,\rho_G)$ such that $T$ is a path is a layout of $G$. So from now on, we shall define a \emph{layout} as a triple  $\L_H=(P,r,\rho_H)$,where $P$ is a path that fulfils the conditions of \autoref{def_treelayout}. An ordered graph then becomes a pair $(H,\L_H)$ where $\L_H=(P,r,\rho_H)$ is a layout of $G$. Excluding a pattern $(H,\L_H)$  in a tree-layout of a graph $G$ is defined similarly as excluding a pattern in a layout, but now using the ancestor relation. For a set $\mathcal{P}$ of patterns, we can also define the class $\frak{T}(\mathcal{P})$ of graphs admitting a tree-layout that excludes every pattern $P\in\mathcal{P}$. If $\mathcal{P}=\{(H,\L_H)\}$, we simply write  $\mathcal{T}((H,\L_H))$. Observe that, as a layout is a tree-layout, for a fixed set $\mathcal{P}$ of patterns, we always have $\frak{L}(\mathcal{P})\subseteq \frak{T}(\mathcal{P})$. As an introductory example, let us consider the pattern $\langle \overline{12}\rangle$. The following observation directly follows from the definitions of a tree-layout and trivially perfect graphs. Indeed, recall that a trivially perfect graph $G$ is the comparability graph of a rooted tree $T$, that is two vertices are adjacent if one is the ancestor of the other. It follows that if $r$ is the root of  $T$, then $(T,r,\mathbb{I})$, where $\mathbb{I}$ is the identity, is a tree-layout of $G$.

\begin{observation} \label{obs_trivially_perfect}
The class $\mathcal{L}(\langle \overline{12}\rangle)$ is the set of complete graphs while the class $\mathcal{T}(\langle \overline{12}\rangle)$ is the set of trivially perfect graphs.
\end{observation}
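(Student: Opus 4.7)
The plan is to unfold both statements directly from the definitions, using only the characterization of trivially perfect graphs as comparability graphs of rooted trees that was already recalled in the excerpt.

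For the first part, I would argue that excluding $\langle \overline{12}\rangle$ in a layout $\L_G=(P,r,\rho_G)$ forces every pair $x,y$ with $x\prec_{\L_G} y$ to be adjacent, because otherwise the pair $\{x,y\}$ itself would realize the pattern. Since $\prec_{\L_G}$ is a total order on $V$, this makes every pair of vertices adjacent, so $G$ is complete. Conversely, any layout of a complete graph trivially excludes $\langle \overline{12}\rangle$, since there are no non-edges at all. Hence $\mathcal{L}(\langle \overline{12}\rangle)$ is the class of complete graphs.

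For the second part, let $G\in\mathcal{T}(\langle \overline{12}\rangle)$ witnessed by a tree-layout $\T_G=(T,r,\rho_G)$. By \autoref{def_treelayout}, every edge of $G$ corresponds to an ancestor-descendant pair in $T$; and because $\T_G$ excludes $\langle \overline{12}\rangle$, every ancestor-descendant pair $x\prec_{\T} y$ is an edge of $G$. Therefore the edge set of $G$ coincides exactly with the comparability relation of the rooted tree $T$, so $G$ is the comparability graph of $T$ and hence trivially perfect. Conversely, if $G$ is trivially perfect then, by the characterization recalled in the excerpt, $G$ is the comparability graph of some rooted tree $T$ with root $r$; taking $\rho_G$ to be the identity on $V$ (viewing the nodes of $T$ as the vertices of $G$) yields a tree-layout $(T,r,\mathbb{I})$ of $G$ in which every pair of vertices comparable in $T$ is adjacent, so no pair can realize $\langle \overline{12}\rangle$. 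Thus $G\in\mathcal{T}(\langle \overline{12}\rangle)$.

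Neither direction presents a real obstacle; the only thing to be careful about is the double inclusion built into a tree-layout (edges are ancestor-comparable and, under the excluded pattern, ancestor-comparable pairs are edges), which is exactly the statement that adjacency equals the tree comparability relation. Given that the characterization of trivially perfect graphs as comparability graphs of rooted trees has already been cited, the whole argument is essentially a one-line chain of equivalences in each direction.
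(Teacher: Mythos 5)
Your proof is correct and follows essentially the same route as the paper, which dispatches the observation as a direct consequence of the definitions together with the characterization of trivially perfect graphs as comparability graphs of rooted trees, using $(T,r,\mathbb{I})$ for the converse direction exactly as you do. You simply spell out the double inclusion (edges are ancestor-comparable by the definition of a tree-layout, and ancestor-comparable pairs are edges by the excluded pattern) that the paper leaves implicit.
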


So the class of trivially perfect graphs can be viewed as the tree-like version of the class of complete graphs. This view point motivates the systematic study of graph classes defined by excluding a fixed pattern (or a fixed set of patterns) in a tree-layout. The first questions are probably to understand the graph classes we obtain by excluding patterns of size at most $3$ in tree-layouts. Do we retrieve some known graph classes or do we define novel graph classes? What is the computational complexity of the recognition problem of these graph classes? What are the relationship of these novel graph classes with the known ones? Do these novel graph classes allows to solve in polynomial time problems that are \textsc{NP}-hard on arbitrary graphs?

\paragraph{Our contributions.}
As a first study of tree-layout based graph classes, we consider the patterns characterizing interval graphs and proper interval graphs. We first show that if we consider the interval graphs patterns $\mathcal{P}_{\sf int}$, the same phenomena as for $\big\{\langle \overline{12}\rangle\big\}$ holds, leading to a novel (up to our knowledge) characterization of chordal graphs as being exactly $\frak{T}(\mathcal{P}_{\sf int})$ (see \autoref{th_chordal_tree}). 


As already discussed, proper interval graphs are obtained by restricting interval graphs to the intersection of a set of \emph{proper} intervals (no interval is a subinterval of another). This restriction leads to a distinct graph class as the $K_{1,3}$ is an interval graph but not a proper one. Following this line, in his seminal paper~\cite{Gavril74} characterizing chordal graphs as the intersection graphs of a subset of subtrees of a tree, Gavril considered the class of intersection graphs of a set of \emph{proper} subtrees of a tree  (no subtree is contained in an another). Using an easy reduction, Gavril proved that this again yields a characterization of chordal graphs. So this left open the question of proposing a natural definition for proper chordal graphs, a class of graphs that should be sandwiched between proper interval graphs and chordal graphs but incomparable to interval graphs. Our main contribution is to propose such a natural definition of \emph{proper chordal graphs} by means of forbidden patterns on tree-layouts: a graph is proper chordal if it belongs to  $\frak{T}(\mathcal{P}_{\sf proper})$, the class of graphs admitting a $\mathcal{P}_{\sf proper}$-free tree-layout. Recall that $\mathcal{P}_{\sf proper}$ are the patterns characterizing proper interval graphs on layouts. \autoref{tab_graph_classes} resumes the discussion above about graph classes respectively obtained from layouts and tree-layout excluding a fixed set of patterns and positions proper chordal graphs with respect to trivially perfect graphs, chordal graphs and proper interval graphs.  In a recent paper,  Chaplick~\cite{Chaplick19Intersection} investigated this question and considered the class of intersection graphs of non-crossing paths in a tree.


\begin{table}[h]
\[
\begin{array}{|c|c|c|}
\hline
\mbox{\bf Forbidden patterns} & \mbox{\bf Layouts} & \mbox{\bf Tree-layouts}\\
\hline
\langle \overline{12}\rangle & \mbox{Cliques} & \mbox{Trivially perfect graphs}\\
\hline
\langle {12},{13},\overline{23}\rangle, \langle\overline{12},{13},{23}\rangle & \mbox{Proper interval graphs} & \mbox{\textbf{Proper chordal graphs}}\\
\hline
\langle \overline{12},{13},\overline{23}\rangle, \langle\overline{12},{13},{23}\rangle & \mbox{Interval graphs} & \mbox{Chordal graphs}\\
\hline
\end{array}
\]
\caption{Graph classes obtained by excluding $\langle\overline{12}\rangle$, $\mathcal{P}_{\sf proper}$ and $\mathcal{P}_{\sf int}$.
\label{tab_graph_classes}}
\end{table}

We then provide a thorough study of the class of proper chordal graphs, both on their combinatorial as well as their algorithmic aspects. In reference to the \emph{indifference layout} characterizing proper interval graphs, we call a 
$\mathcal{P}_{\sf proper}$-free tree-layout an \emph{indifference tree-layout}. Our first result (see \autoref{lem_indifference_treelayout}) is a characterization of indifference tree-layouts (and henceforth of proper chordal graphs). As discussed in \autoref{sec_FPQ}, (proper) interval graphs have, in general, multiple (proper) interval representations, which are all captured by a canonical tree structure called \PQ-tree~\cite{BoothL76}. We show that the set of indifference tree-layouts can also be represented in a canonical and compact way, see~\autoref{th_canonical}.
These structural results have very interesting algorithmic implications. First, we can design a polynomial time recognition algorithm for proper chordal graphs, see \autoref{th_recognition}. Second, we show that the isomorphism problem restricted to proper chordal graphs is polynomial time solvable, see~\autoref{th_isomorphism}. Interestingly, this problem is \textsf{GI}-complete on (strongly) chordal graphs~\cite{Muller96}. So considering proper chordal graphs allows us to push the tractability further towards its limit.

\blue{
}


\section{Preliminaries}

\subsection{Notations and definitions}

\paragraph{Graphs.}
In this paper, every graph is finite, loopless, and without multiple edges.
A graph is a pair $G=(V,E)$ where $V$ is its vertex set and $E\subseteq V^2$ is the set of edges. For two vertices $x,y\in V$, we let $xy$ denoted the edge $e=\{x,y\}$. We say that the vertices $x$ and $y$ are incident with the edge $xy$. The neighbourhood of a vertex $x$ is the set of vertices $N(x)=\{y\in V\mid xy\in E\}$. Let $S$ be a subset of vertices of $V$. The closed neighbourhood of a vertex $x$ is $N[x]=N(x)\cup\{x\}$. The graph resulting from the removal of a vertex subset $S$ of $V(G)$ is denoted by $G-S$. If $S=\{x\}$ is a singleton we write $G-x$ instead of $G-\{x\}$. The subgraph of $G$ induced by $S$ is $G[S]=G-(V\setminus S)$. We say that $S$ is a \emph{separator} of $G$ if $G-S$ contains more connected components than $G$. We say that $S$ separates $X\subseteq V(G)$ from $Y\subseteq V(G)$ if $X$ and $Y$ are subsets of distinct connected components of $G-S$. If $x$ is a vertex such that $x\notin S$ and $S\subseteq N(x)$, then we say that $x$ is \emph{$S$-universal}.

\paragraph{Rooted trees.}
A rooted tree is a pair $(T,r)$ where $r$ is a distinguished node\footnote{To avoid confusion, we reserve the term vertex for graphs and nodes for trees.} of the tree $T$. We say that a node $u$ is an \emph{ancestor} of the node $v$ (and that $u$ is a \emph{descendant} of $v$) if $u$ belongs to the unique path of $T$ from $r$ to $v$. If $u$ is an ancestor of $v$, then we write $u\prec_{(T,r)} v$. For a node $u$ of $T$,  the set $A_{(T,r)}(u)$ contains every \emph{ancestor} of $u$, that is every node $v$ of $T$ such that $v\prec_{(T,r)} u$. Likewise, the set $D_{(T,r)}(u)$ contains every \emph{descendant} of $u$, that is every node $v$ of $T$ such that $u\prec_{(T,r)} v$. We may also write $A_{(T,r)}[u]=A_{(T,r)}(u)\cup\{u\}$ and $D_{(T,r)}[u]=D_{(T,r)}(u)\cup\{u\}$. The least common ancestor of two nodes $u$ and $v$ is denoted $\lca_{(T,r)}(u,v)$. For a node $u$, we define $T_u$ as the subtree of $(T,r)$ rooted at $u$ and containing the descendants of $u$. We let $\mathcal{L}(T,r)$ 
denote the set of leaves of $(T,r)$ and for a node $u$, $\mathcal{L}_{(T,r)}(u)$ is the set of leaves of $(T,r)$ that are descendants of the node $u$. 

\paragraph{Ordered trees.}
An ordered tree is a rooted tree $(T,r)$ such that the children of every internal node are totally ordered. For an internal node $v$, we let denote $\sigma_{(T,r)}^v$ the permutation of its children. An ordered tree is non-trivial if it contains at least one internal node. An ordered tree $(T,r)$ defines a permutation $\sigma_{(T,r)}$ of its leaf set $\mathcal{L}(T,r)$ as follows.
For every pair of leaves $x,y\in \mathcal{L}(T,r)$, let $u_x$ and $u_y$ be the children of $v=\mathbf{lca}_{(T,r)}(x,y)$ respectively being an ancestor of $x$ and of $y$. Then $x\prec_{\sigma_{(T,r)}} y$ if and only if $u_x\prec_{\sigma_{T,r}^v}u_y$.

\paragraph{Tree-layouts of graphs.}
Let $\T=(T,r,\rho)$ be a tree-layout of a graph $G=(V,E)$ (see \autoref{def_treelayout}). Let us recall that from~\autoref{def_treelayout}, $G$ is a subgraph of the transitive closure of $(T,r)$. Let $x$ and $y$ be two vertices of $G$. We note $x\prec_{\T} y$ if $\rho(x)$ is an ancestor of $\rho(y)$ in $(T,r)$ and use $A_{\T}(x)$, $D_{\T}(x)$ to respectively denote the ancestors and descendants of $x$. The notations $\mathcal{L}(\T)$, $\T_x$, $\mathcal{L}_{\T}(x)$ are derived from the notations defined above. 

\subsection{Chordal and interval graphs}
\label{sec_chordal}

Let us discuss in further details the classes of chordal, interval and proper interval graphs. A graph is \emph{chordal} if it does not contain any cycle $C_k$ with $k\geq 4$ as induced subgraph (or equivalently every cycle of length at least $4$ contains a chord)~\cite{HajnalS58,Berge61}. Gavril~\cite{Gavril74} characterized chordal graphs as the intersection graphs of a family of subtrees of a tree. Given a chordal graph $G=(V,E)$, a  \emph{tree-intersection model} of $G=(V,E)$ is defined as a triple $\mathbf{M}^{\sf T}_G=(T,\mathcal{T},\tau_G)$ where $T$ is a tree, $\mathcal{T}$ is a family of subtrees of $T$ and $\tau_G:V\rightarrow \mathcal{T}$ is a bijection such that $xy\in E$ if and only if $\tau_G(x)$ intersects $\tau_G(y)$. Hereafter, we denote by $T^x\in \mathcal{T}$ the subtree of $T$ such that $T^x=\tau_G(x)$. Since chordal graphs are characterized by the existence of a simplicial elimination ordering~\cite{Rose70}, we obtain the following.

\begin{theorem} \cite{Damaschke90}
The class of chordal graphs is $\frak{L}(\mathcal{P}_{\sf chordal})$ where \\
\centerline{ $\mathcal{P}_{\sf chordal}=\big\{\langle\overline{12},{13},{23}\rangle\big\}$.}
\end{theorem}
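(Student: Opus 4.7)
The plan is to translate the forbidden pattern condition into the classical notion of a perfect (simplicial) elimination ordering, and then invoke the two standard halves of Fulkerson-Gross/Dirac.

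First I would observe that a layout $\L_G=(P,r,\rho_G)$ of $G=(V,E)$ is $\mathcal{P}_{\sf chordal}$-free if and only if for every vertex $z\in V$, the set $N(z)\cap\{y\in V\mid y\prec_{\L_G} z\}$ induces a clique. Indeed, the pattern $\langle\overline{12},{13},{23}\rangle$ asserts the existence of $x\prec_{\L_G} y\prec_{\L_G} z$ with $xz, yz\in E$ but $xy\notin E$, and its absence is exactly the clique condition above. In other words, the $\mathcal{P}_{\sf chordal}$-free layouts are precisely the simplicial elimination orderings (read from last to first) as recalled in the introduction.

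Next I would prove the forward inclusion: every chordal graph $G$ admits such a layout. I would proceed by induction on $|V(G)|$ using Dirac's theorem~\cite{Dirac61}, which guarantees that every chordal graph contains a simplicial vertex $v$. Since chordal graphs are hereditary, $G-v$ is chordal, so by the induction hypothesis it has a $\mathcal{P}_{\sf chordal}$-free layout $\L_{G-v}$. Appending $\rho_G(v)$ as the last node yields a layout $\L_G$ of $G$: the simpliciality of $v$ in $G$ means $N(v)$ is a clique, hence in particular the neighbours of $v$ that precede it form a clique, and for every other vertex $z$ the clique condition transfers directly from $\L_{G-v}$.

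For the converse I would argue by contradiction. Suppose $G$ admits a $\mathcal{P}_{\sf chordal}$-free layout $\L_G$ but contains an induced cycle $C=v_1v_2\ldots v_k v_1$ with $k\geq 4$. Let $v_i$ be the vertex of $C$ that is maximal in $\prec_{\L_G}$. Then both cycle-neighbours $v_{i-1}$ and $v_{i+1}$ (indices mod $k$) precede $v_i$ in $\L_G$ and are adjacent to $v_i$. The $\mathcal{P}_{\sf chordal}$-freeness forces $v_{i-1}v_{i+1}\in E$, contradicting that $C$ is an induced cycle of length at least $4$.

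No step is genuinely hard: the main (modest) obstacle is just keeping the quantifier orientation straight, namely that excluding the pattern constrains the \emph{preceding} neighbours of each vertex, and that pushing a simplicial vertex to the end of the layout is what makes the induction work. Everything else is classical.
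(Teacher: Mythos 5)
Your proof is correct and follows exactly the route the paper intends: it identifies $\mathcal{P}_{\sf chordal}$-free layouts with simplicial elimination orderings and then invokes the classical Dirac/Fulkerson--Gross equivalence, which is precisely what the paper sketches (with a citation to Damaschke and the remark preceding the theorem) rather than proving in detail. Your write-up just fills in the standard details of that same argument.
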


A graph $G=(V,E)$ is an \emph{interval graph} if it is the intersection graph of a family of intervals on a line~\cite{Hajos57,Benzer59,Golumbic80}. So the intersection model of an interval graph $G$ is $\mathbf{M}^{\sf I}_G=(P,\mathcal{P},\tau_G)$ where $P$ is a path, $\mathcal{P}$ is a family of subpaths of $P$. 
The interval, or subpath, $\tau_G(x)\in\mathcal{P}$ will be denoted $P^x$. Observe that if $G$ contains $n$ vertices then  there exists $\mathbf{M}^{\sf I}_G=(P,\mathcal{P},\tau_G)$ s.t. $P$ has at most $2n$ nodes. It follows that for every $x\in V$, $P^x$ can be represented by an interval $[l(x),r(x)]$.
As compared to chordal graphs, we restrict the intersection model to paths, interval graphs form a subset of chordal graphs. Moreover, it is easy to observe that the intersection model of an interval naturally defines a transitive orientation of the non-edges of $G$. These observations lead to the following.

\begin{theorem} \cite{Damaschke90} \label{th_interval}
The class of interval graphs is $\frak{L}(\mathcal{P}_{\sf int})$ where  \\
\centerline{$\mathcal{P}_{\sf int}=\big\{\langle\overline{12},{13},{23}\rangle,\langle\overline{12},{13},\overline{23}\rangle\big\}$.}
\end{theorem}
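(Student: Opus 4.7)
The plan is to prove the two inclusions $\mathcal{L}(\mathcal{P}_{\sf int})\supseteq \{\text{interval graphs}\}$ and $\mathcal{L}(\mathcal{P}_{\sf int})\subseteq \{\text{interval graphs}\}$ separately, leveraging the preceding characterization of chordal graphs together with the classical Gilmore–Hoffman theorem that a graph is an interval graph if and only if it is both chordal and a co-comparability graph. Before starting, I would note that both patterns in $\mathcal{P}_{\sf int}$ share the configuration $\overline{12},\,13$, so a layout $\prec_G$ of $G$ is $\mathcal{P}_{\sf int}$-free if and only if for every triple $x\prec_G y\prec_G z$ with $xz\in E$ we have $xy\in E$ (the status of the pair $yz$ is irrelevant).

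For the first direction, suppose $G$ is an interval graph with representation $\{[l(x),r(x)]\mid x\in V\}$. I would take $\prec_G$ to be any layout that orders the vertices by increasing left endpoint, breaking ties arbitrarily. Given $x\prec_G y\prec_G z$ with $xz\in E$, I have $l(x)\le l(y)\le l(z)$ and $l(z)\le r(x)$ (because the intervals of $x$ and $z$ intersect). Consequently $l(y)\in [l(x),r(x)]$, so the intervals of $x$ and $y$ also intersect and $xy\in E$. By the reformulation above, $\prec_G$ is $\mathcal{P}_{\sf int}$-free, hence $G\in \mathcal{L}(\mathcal{P}_{\sf int})$.

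For the converse, let $\prec_G$ be a $\mathcal{P}_{\sf int}$-free layout of $G$. Since $\mathcal{P}_{\sf chordal}\subseteq \mathcal{P}_{\sf int}$, the layout in particular excludes $\langle \overline{12},13,23\rangle$, so by the preceding theorem $G$ is chordal. To show that $G$ is a co-comparability graph, I would orient every non-edge $xy\notin E$ as $x\to y$ whenever $x\prec_G y$, and check transitivity. If $x\to y$ and $y\to z$ with $x\prec_G y\prec_G z$ and $xy,yz\notin E$, then $xz\in E$ would realize the forbidden pattern $\langle \overline{12},13,\overline{23}\rangle$ on the triple $(x,y,z)$; hence $xz\notin E$, and $x\to z$ is the required orientation. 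Thus $\overline{G}$ is transitively orientable, i.e., $G$ is a co-comparability graph. Applying Gilmore–Hoffman to conclude that $G$ is an interval graph finishes the argument.

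The only subtle step is the transitivity check in the second direction, where one must be careful that only non-edges are oriented and that the forbidden pattern rules out the undesired configuration; everything else reduces to matching definitions and invoking known characterizations, so I do not expect a real obstacle.
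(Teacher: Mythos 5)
Your proof is correct, but the second direction takes a genuinely different route from the paper. Your forward direction is in the same spirit as the paper's: both build a $\mathcal{P}_{\sf int}$-free layout directly from an interval model, you by sorting on left endpoints, the paper by sorting on right endpoints with a nesting rule that places a containing interval before the intervals it contains (on the paper's running example the two orders happen to coincide); your left-endpoint order is arguably cleaner, since the verification that $l(y)$ witnesses the intersection of the intervals of $x$ and $y$ is immediate, and your preliminary observation that excluding both patterns of $\mathcal{P}_{\sf int}$ amounts to the single implication ``$x\prec y\prec z$ and $xz\in E$ imply $xy\in E$'' is exactly the right reformulation. For the converse, the paper stays constructive: from a $\mathcal{P}_{\sf int}$-free layout it explicitly builds an interval model by assigning to $x$ the interval $[\sigma(x),\sigma(y)]$ where $y$ is the last neighbour of $x$ in the layout, whereas you instead verify that the non-edges oriented along the layout form a transitive orientation of $\overline{G}$ (correctly using the pattern $\langle\overline{12},{13},\overline{23}\rangle$ to kill the only problematic configuration) and then invoke the Gilmore--Hoffman characterization of interval graphs as chordal co-comparability graphs, a fact the paper itself quotes in the introduction. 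What each approach buys: yours is shorter and delegates the work to two classical theorems, while the paper's explicit model construction is the template that is then lifted verbatim to tree-layouts in the proof of \autoref{th_chordal_tree}, so the constructive version is the one that generalizes.
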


To sketch the proof of \autoref{th_interval}, we describe how a $\mathcal{P}_{\sf int}$-free layout of $G$ can be obtained from an interval-intersection model of $G$ and vice-versa. Suppose that $\mathbf{M}^{\sf I}_G$ is an interval-intersection model of the interval graph $G=(V,E)$. Then a $\mathcal{P}_{\sf int}$-free layout $\L_G=(P,r,\rho_G)$ of $G$ is obtained as follows: for every $x,y\in V$, $x\prec_{\L_G} y$ if and only if $r(x)<r(y)$ or $l(x)< l(y)<r(y)< r(x)$ (see Figure~\ref{fig_interval_layout}).

\begin{figure}[ht]
\begin{center}
\begin{tikzpicture}[thick,scale=0.7]
\tikzstyle{sommet}=[circle, draw, fill=black, inner sep=0pt, minimum width=4pt]
                

\draw (1,0.75) -- (3,0.75) ;
\draw (2,1.5) -- (6,1.5) ;
\draw (4,0) -- (12,0) ;
\draw (5,0.75) -- (8,0.75) ;
\draw (7,1.5) -- (10,1.5) ;
\draw (9,0.75) -- (11,0.75) ;

\node[below] (a) at (2,0.65) {$a$};
\node[below] (b) at (4,1.45) {$b$};
\node[below] (c) at (6.5,0.7) {$c$};
\node[below] (d) at (8.5,1.45) {$d$};
\node[below] (e) at (10,0.7) {$e$};
\node[below] (x) at (8,-0.05) {$x$};

\draw  (-7,1.2)  node[sommet]{}
    -- (-5.5,1.2) node[sommet]{}
    -- (-4,1.2) node[sommet]{}
    -- (-2.5,1.2) node[sommet]{}
    -- (-1,1.2) node[sommet]{};
\draw (-1,1.2) -- (-3.25,0) node[sommet]{} ;
\draw (-5.5,1.2) -- (-3.25,0);
\draw (-4,1.2) -- (-3.25,0);
\draw (-2.5,1.2) -- (-3.25,0);

\node[above] (a) at (-7,1.3) {$a$};
\node[above] (b) at (-5.5,1.3) {$b$};
\node[above] (c) at (-4,1.3) {$c$};
\node[above] (d) at (-2.5,1.3) {$d$};
\node[above] (e) at (-1,1.3) {$e$};
\node[below] (x) at (-3.25,-0.05) {$x$};
    \end{tikzpicture}
\end{center}
\caption{\label{fig_interval_layout} An interval graph $G=(V,E)$ and an interval representation of $G$. The layout $\L_{G}=(P,r,\rho)$ associated to that interval representation is $a \prec_{\L_G} b \prec_{\L_G} x \prec_{\L_G} c \prec_{\L_G} d \prec_{\L_G} e$.}
\end{figure}
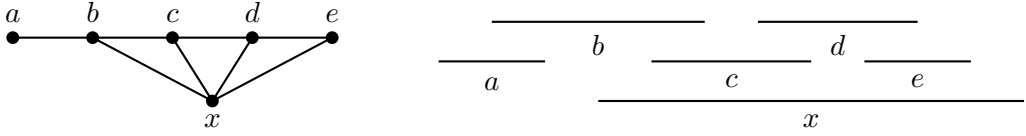

Suppose that $\L_G$ is a $\mathcal{P}_{\sf int}$-free layout of $G$. For every vertex $x\in V$, we define $\sigma(x)=|A_{\T}(x)|+1$. Then an interval-intersection model $\mathbf{M}^{\sf I}_G$ of $G$ is obtained as follows: with every vertex $x\in V$, we associate the interval $[\sigma(x),\sigma(y)]$ where $y$ is the vertex in $N[x]$ such that $x\prec_{\L_G} y$ and for every vertex $z$ such that $y \prec_{\L_G} z$, $z\notin N(x)$.

From the discussion above, it is legitimate to characterize chordal graphs as the \emph{"tree-like"} interval graphs. Indeed, this is how the tree intersection model was introduced in the seminal paper of Gavril~\cite{Gavril74}. It is then natural to wonder whether the phenomena observed with the classes $\mathcal{L}(\langle\overline{12}\rangle)$ and  $\mathcal{T}(\langle\overline{12}\rangle)$ (see the discussion in \autoref{sec_intro}) is confirmed with the set of patterns $\mathcal{P}_{\sf int}$. Surprisingly this question yields (up to our knowledge), the following novel characterization of chordal graphs.

\begin{theorem} \label{th_chordal_tree}
The class of chordal graphs is $\frak{T}(\mathcal{P}_{\sf int})$.
\end{theorem}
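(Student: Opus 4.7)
The plan is to prove both inclusions via the tree-intersection-model characterization of chordal graphs (Gavril's theorem).

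For the easy direction $\frak{T}(\mathcal{P}_{\sf int}) \subseteq$ chordal: given a $\mathcal{P}_{\sf int}$-free tree-layout $\T$ of $G$, for each vertex $v$ I would set $T^v = \{v\} \cup (N(v) \cap D_{\T}(v))$. Both forbidden patterns share the fragment ``$ac \in E$ and $ab \notin E$ for $a \prec_{\T} b \prec_{\T} c$'', so excluding $\mathcal{P}_{\sf int}$ amounts exactly to the statement that whenever $a$ is a $\T$-ancestor of $c$ with $ac \in E$, every intermediate node $b$ on the $\T$-path from $a$ to $c$ also satisfies $ab \in E$. This makes each $T^v$ a connected subtree of the tree underlying $\T$. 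To conclude that $(T, \{T^v\}_v)$ is a tree-intersection model of $G$, the only nontrivial case is an intersection witness $w \in T^u \cap T^v$ distinct from $u$ and $v$: then $w$ is a common $\T$-descendant of both $u$ and $v$, forcing $u$ and $v$ to be $\T$-comparable, and the pattern $\langle\overline{12},{13},{23}\rangle$ applied to $u \prec_\T v \prec_\T w$ with $uw, vw \in E$ gives $uv \in E$. Gavril's theorem then yields that $G$ is chordal.

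For the reverse inclusion chordal $\subseteq \frak{T}(\mathcal{P}_{\sf int})$, I would start from a clique tree $T_c$ of $G$ rooted at an arbitrary leaf node $t_0$, and partition $V(G)$ into classes $V(t_0) = t_0$ and $V(t) = t \setminus \mathrm{parent}(t)$ for non-root $t$. Maximality of $t$ forces $V(t) \neq \emptyset$, and each $V(t) \subseteq t$ is a clique. The tree-layout $\T$ is built by linearizing each $V(t)$ into a chain and gluing these chains along $T_c$: the $\T$-parent of the topmost vertex of $V(t)$ (for $t \neq t_0$) is the deepest vertex of $V(\mathrm{parent}(t))$. Writing $r^v$ for the $T_c$-node of the class containing $v$, the standard subtree-intersection fact---if $uv \in E$ and $r^u$ is a $T_c$-ancestor of $r^v$, then $u \in r^v$ as a clique-member---shows that $u$ lies above $v$ on the same root-to-leaf $\T$-branch, so $\T$ is a tree-layout. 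For the $\mathcal{P}_{\sf int}$-free property: if $a \prec_\T b \prec_\T c$ with $ac \in E$, then $r^a \preceq_{T_c} r^b \preceq_{T_c} r^c$, and since the subtree of $T_c$ consisting of cliques containing $a$ is connected, $a$ lies in every clique on the path from $r^a$ to $r^c$, in particular in $r^b$. Together with $b \in V(r^b) \subseteq r^b$ this yields $ab \in E$, ruling out both forbidden patterns.

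The main obstacle will be the forward construction: the gluing step must simultaneously yield a well-defined tree and a tree-layout in which ancestry encodes every edge of $G$. The crux is the subtree-intersection fact quoted above; once it is in hand, both the tree-layout condition and the pattern exclusion reduce to elementary reasoning on the rooted clique tree.
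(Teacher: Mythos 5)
Your proposal is correct and follows essentially the same route as the paper: both directions go through Gavril's tree-intersection characterization, with the tree-layout obtained by expanding the nodes of a tree model into chains, and the converse obtained by taking as $T^v$ the (connected, by pattern-freeness) set of descendant-neighbours of $v$. The only cosmetic difference is that you instantiate the forward construction on a clique tree, where every node contributes a nonempty class by maximality, which lets you skip the edge-contraction step the paper needs for an arbitrary subtree model.
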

\begin{proof} The proof is based on a construction similar to the one we sketched for \autoref{th_interval}.

\noindent
$\Rightarrow$ Let $\mathbf{M}^{\sf T}_G=(T',\mathcal{T}',\tau_G)$ be a tree-intersection model of $G$. We construct a tree-layout $\T_G=(T,r,\rho_G)$ of $G$ as follows. First we root $T'$ at an arbitrary node $r$. This defines for every vertex $x$, a node $\rho'(x)$ which is the root of the subtree $\tau(x)$ of $T'$.  Observe that we may assume that $r$ is the root $\rho'(x)$ for some vertex $x\in V$. We set $R(T')=\{u\in V_{T'}\mid \exists x\in V, \rho'(x)=u\}$. The tree $T$ is obtained from $T'$ by first contracting every tree-edge $uv$ such that $v$ the child of $u$ does not belong to $R(T')$. 
These contractions preserves the mapping $\rho'$. Then every node $u$ of the resulting tree is replaced by a path $P_u$ of length $\ell=|\{x\in V\mid u=\rho(x)\}|$. Finally, $\rho'$ is modified to turn it into a bijection $\rho_G$ from $V$ to $V_{T}$ as follows: the vertices mapped to a given node $u$ of $T'$ are mapped to distinct nodes of $P_u$ (see \autoref{fig_intersection_treelayout}). Let us prove that the construction is correct.

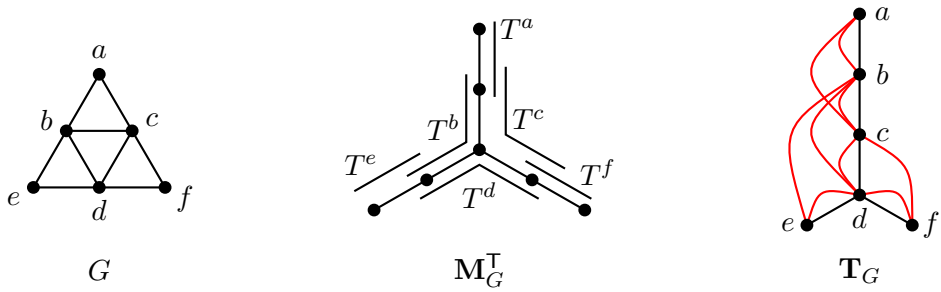
\begin{figure}[ht]
\begin{center}
\begin{tikzpicture}[thick,scale=1]
\tikzstyle{sommet}=[circle, draw, fill=black, inner sep=0pt, minimum width=4pt]
          
\begin{scope}
\draw (90:1) node[sommet]{};
\draw (210:1) node[sommet]{};
\draw (330:1) node[sommet]{};
\draw (270:0.5) node[sommet]{};
\draw (150:0.5) node[sommet]{};
\draw (30:0.5) node[sommet]{};

\node (a) at (90:1.3){$a$};
\node (b) at (150:0.8){$b$};
\node (c) at (30:0.8){$c$};
\node (d) at (270:0.8){$d$};
\node (e) at (210:1.3){$e$};
\node (f) at (330:1.3){$f$};

\draw (90:1) -- (210:1);
\draw (90:1) -- (330:1);
\draw (210:1) -- (330:1);
\draw (150:0.5) -- (30:0.5);
\draw (150:0.5) -- (270:0.5);
\draw (270:0.5) -- (30:0.5);

\node[] (G) at (90:-1.6){$G$};
\end{scope}

\begin{scope}[shift=(0:5)]

\node (1) at (90:1.6){};
\draw (1) node[sommet]{};
\node (2) at (90:0.8){};
\draw (2) node[sommet]{};
\node (3) at (0:0){};
\draw (3) node[sommet]{};
\node (4) at (210:0.8){};
\draw (4) node[sommet]{};
\node (5) at (210:1.6){};
\draw (5) node[sommet]{};
\node (6) at (330:0.8){};
\draw (6) node[sommet]{};
\node (7) at (330:1.6){};
\draw (7) node[sommet]{};
\draw (1.center) -- (3.center);
\draw (5.center) -- (3.center);
\draw (7.center) -- (3.center);

\begin{scope}[xshift=0.2cm]
\draw (90:1.7) -- (90:0.7);
\node[xshift=0.3cm] (Ta) at (90:1.6) {$T^a$};
\end{scope}
\begin{scope}[shift=(150:0.2)]
\draw (90:0.9) -- (90:0) -- (210:0.9) ;
\node[xshift=-0.3cm] (Tb) at (90:0.2) {$T^b$};
\end{scope}
\begin{scope}[shift=(30:0.4)]
\draw (90:0.9) -- (90:0) -- (330:0.9) ;
\node[xshift=0.3cm] (Tc) at (90:0.2) {$T^c$};
\end{scope}
\begin{scope}[yshift=-0.2cm]
\draw (210:0.9) -- (0:0) -- (330:0.9);
\node (Td) at (90:-0.4) {$T^d$};
\end{scope}
\begin{scope}[shift=(150:0.4)]
\draw (210:1.5) -- (210:0.5);
\node[shift=(90:0.3)] (Te) at (210:1.4) {$T^e$};
\end{scope}
\begin{scope}[shift=(30:0.2)]
\draw (330:1.5) -- (330:0.5);
\node[shift=(90:0.4)] (Tf) at (330:1.6) {$T^f$};
\end{scope}

\node[] (T) at (90:-1.6){$\mathbf{M}^{\sf T}_G$};

\end{scope}

\begin{scope}[xshift=10cm,yshift=-0.6cm]
\draw[red,thick] (90:2.4) .. controls (100:2.05) .. (90:1.6) ;
\draw[red,thick,shift=(90:-0.8)] (90:2.4) .. controls (100:2.05) .. (90:1.6) ;
\draw[red,thick,shift=(90:-1.6)] (90:2.4) .. controls (100:2.05) .. (90:1.6) ;
\draw[red,thick,rotate=240,shift=(90:-1.6)] (90:1.6) .. controls (100:2.05) .. (90:2.4) ;
\draw[red,thick,rotate=120,shift=(90:-1.6)] (90:1.6) .. controls (80:2.05) .. (90:2.4) ;

\draw[red,thick] (210:0.8) .. controls (140:1.4) .. (90:1.6) ;
\draw[red,thick,rotate=240] (210:0.8) .. controls (150:0.8) .. (90:0.8) ;

\draw[red,thick] (90:0) .. controls (135:1.2) .. (90:1.6) ;
\draw[red,thick,shift=(90:0.8)] (90:0) .. controls (135:1.2) .. (90:1.6) ;

\node (1) at (90:2.4){};
\draw (1) node[sommet]{};
\node (2) at (90:1.6){};
\draw (2) node[sommet]{};
\node (3) at (90:0.8){};
\draw (3) node[sommet]{};
\node (4) at (0:0){};
\draw (4) node[sommet]{};
\node (5) at (210:0.8){};
\draw (5) node[sommet]{};
\node (6) at (330:0.8){};
\draw (6) node[sommet]{};
\draw (1.center) -- (4.center);
\draw (5.center) -- (4.center);
\draw (6.center) -- (4.center);

\node[shift=(0:0.3)] (a) at (1){$a$};
\node[shift=(0:0.3)] (b) at (2){$b$};
\node[shift=(0:0.3)] (c) at (3){$c$};
\node[shift=(-90:0.3)] (d) at (4){$d$};
\node[shift=(180:0.25)] (e) at (5){$e$};
\node[shift=(0:0.25)] (f) at (6){$f$};

\node[] (T) at (90:-1.){$\T_G$};

\end{scope}

\end{tikzpicture}
\end{center}

\caption{The $3$-sun graph $G$ on the left, a tree intersection model $\mathbf{M}^{\sf T}_G$ of $G$ in the center and a $\mathcal{P}_{\sf int}$-free tree-layout of $G$ on the right. \label{fig_intersection_treelayout}}
\end{figure}

We first argue that $\T_G=(T,r,\rho_G)$ is a tree-layout of $G$, that is for every edge $xy\in E$, either $x\prec_{\T_G} y$ or $y\prec_{\T_G} x$. To see this, it suffices to observe that as the subtree $T^x$ intersects the subtree $T^y$, then either $\rho_G(x)$ is an ancestor of $\rho_G(y)$ in $T'$ or vice-versa.

Consider three vertices $x$, $y$ and $z$ such that $x\prec_{\T_G} y \prec_{\T_G} z$ and suppose that $xz\in E$. The fact that $x\prec_{\T_G} y \prec_{\T_G} z$ implies that either $\rho'(y)$ is a descendant in $T'$ (when rooted at $r$) of $\rho'(x)$ or $\rho'(x)=\rho'(y)$ and that either $\rho'(y)$ is an ancestor of $\rho'(z)$ or $\rho'(y)=\rho'(z)$. Since $xz\in E$, then the subtrees $T^x$ and $T^z$ intersect. This implies $T^x$ contains the path of $T'$ from $\rho'(x)$ to $\rho'(z)$, and thereby contains the node $\rho'(y)$. It follows that $xy\in E$, implying that the pattern on $x$, $y$ and $z$ is neither isomorphic to $\langle\overline{12},{13},{23}\rangle$ nor to $\langle\overline{12},{13},\overline{23}\rangle$.

\medskip
\noindent
$\Leftarrow$ Suppose that $\T_G=(T,r,\rho_G)$ is a $\mathcal{P}_{\sf int}$-free tree-layout of $G$. For every vertex $x$, we define the subset of vertices $\delta(x)=\{y\in D_{\T}(x)\mid xy\in E \wedge \forall z, y\prec_{\T_G} z, xz\notin E\}$. Then we build a tree-intersection model $\mathbf{M}^{\sf T}_G=(T,\mathcal{T},\tau_G)$ of $G$ as follows: for every $x\in V$, $\tau_G(x)$ is the smallest subtree of $T$ containing the vertices of $\delta(x)\cup\{x\}$. It is straightforward to see that if $xy\in E$ then $\tau_G(x)$ and $\tau_G(y)$ intersect. Suppose the $\tau_G(x)$ and $\tau_G(y)$ intersect. Without loss of generality we can assume that $\rho_G(x)\prec_{\T_G} \rho_G(y)$ and thereby  $\rho_G(y)$ belongs to $\tau_G(x)$. For the sake of contradiction, suppose that $xy\notin E$. By definion of $\tau_G(x)$, there exists a vertex $z$ such that $\rho_G(y)\prec_{\T_G} \rho_G(z)$ and $xz\in E$. In turn, by the condition of Theorem~\ref{th_chordal_tree} we have that $xy\in E$: contradiction.
\end{proof}

We conclude the discussion on chordal graphs by noticing the observation that chordal graphs are also characterized as $\frak{T}(\mathcal{P}_{\sf chordal})$. This follows from \autoref{lem_chordal} below and the observation that $\mathcal{P}_{\sf chordal}\subset \mathcal{P}_{\sf int}$. So we obtain that $\frak{T}(\mathcal{P}_{\sf chordal})=\frak{T}(\mathcal{P}_{\sf int})$. Hence for a fixed set of forbidden patterns, moving from layouts to tree-layouts does not always produce a larger class of graphs.

\begin{lemma} \label{lem_chordal}
Let $G$ be a chordal graph. Every $\mathcal{P}_{\sf chordal}$-free tree-layout $\T_G=(T,r,\rho_G)$ of  $G$ can be transformed into a $\mathcal{P}_{\sf chordal}$-free layout $\L_G=(P,r,\rho'_G)$ of $G$.
\end{lemma}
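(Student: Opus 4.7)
The plan is to define $\L_G$ by listing the vertices of $G$ in the order of a pre-order depth-first traversal of $(T,r)$. Since any total ordering of $V(G)$ on a rooted path $P$ automatically satisfies the tree-layout condition (for any edge $xy\in E$, whichever endpoint comes first is an ancestor of the other in $P$), the triple $\L_G=(P,r,\rho'_G)$ so obtained is indeed a layout of $G$. It remains to prove that $\L_G$ excludes $\langle\overline{12},13,23\rangle$.

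The key structural observation I would establish is the following: for every vertex $z\in V$, the set of vertices that both precede $z$ in $\L_G$ and are adjacent to $z$ in $G$ is exactly $A_{\T_G}(z)\cap N_G(z)$, the set of neighbors of $z$ among its ancestors in $\T_G$. Indeed, any ancestor of $z$ in $\T_G$ precedes $z$ in a pre-order traversal. Conversely, if a vertex $v$ precedes $z$ in $\L_G$ and $vz\in E$, then, since $\T_G$ is a tree-layout, $v$ and $z$ must be comparable in the ancestor order of $(T,r)$; as $v$ comes before $z$ in pre-order, $v$ cannot be a descendant of $z$, hence $v\in A_{\T_G}(z)$.

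Now consider any triple $x,y,z$ with $x\prec_{\L_G}y\prec_{\L_G}z$, $xz\in E$ and $yz\in E$. By the observation above, $x,y\in A_{\T_G}(z)$, so both $x$ and $y$ lie on the unique path of $T$ from $r$ to $\rho_G(z)$ and are therefore comparable in $\prec_{\T_G}$. Moreover, in a pre-order traversal, the ancestors of $z$ are visited in order from the closest to $r$ down to $z$, so $x\prec_{\L_G}y$ forces $x\prec_{\T_G}y$. Hence $x\prec_{\T_G}y\prec_{\T_G}z$ with $xz\in E$ and $yz\in E$, and the hypothesis that $\T_G$ is $\mathcal{P}_{\sf chordal}$-free yields $xy\in E$. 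This shows that $\L_G$ contains no occurrence of $\langle\overline{12},13,23\rangle$, completing the proof.

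I do not anticipate a real obstacle here: the whole argument relies on the two elementary facts that edges of $G$ are ancestor-descendant pairs in $\T_G$, and that a pre-order traversal orders the ancestors of any node consistently with $\prec_{\T_G}$. The only subtlety to highlight is the second fact, which is what rules out the configuration where $x$ and $y$ end up incomparable in $\T_G$ and allows us to invoke the tree-layout's pattern-freeness on the triple $(x,y,z)$.
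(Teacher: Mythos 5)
Your proof is correct and follows essentially the same route as the paper's: both take a depth-first (pre-order) traversal of $(T,r)$, observe that any neighbour of $z$ preceding $z$ in the resulting layout must be an ancestor of $z$ in $\T_G$, and then use the fact that DFS visits ancestors in root-to-leaf order to transfer the triple $x\prec_{\L_G}y\prec_{\L_G}z$ into $x\prec_{\T_G}y\prec_{\T_G}z$ and invoke the pattern-freeness of $\T_G$. The only cosmetic difference is that the paper derives the final contradiction from $z$ being a descendant of $x$ rather than from the comparability of $x$ and $y$ as ancestors of $z$, but the two arguments are interchangeable.
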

\begin{proof}
The mapping $\rho'_G$ is obtained by a depth-first search ordering of the nodes of $T$. That is $x\prec_{\L_G} y$ if and only if $\rho_G(x)$ is visited before $\rho_G(y)$ in such an ordering. Now consider three vertices $x$, $y$ and $z$ such that $x\prec_{\L_G} y\prec_{\L_G} z$ and such that $xz\in E$. To prove that $\L_G$ is $\mathcal{P}_{\sf chordal}$-free, we show that if $yz\in E$, then $xy\in E$. As $xz\in E$, we have that $x\prec_{\T_G} z$. Similarly if $yz\in E$, then we have $y\prec_{\T_G} z$. If we also have $x\prec_{\T_G} y$, then since $\T_G$ is $\mathcal{P}_{\sf chordal}$-free, $yx\in E$. So suppose that $x\not\prec_{\T_G} y$. But since $x\prec_{\L_G} y$, $\rho_G(y)$ is visited after $\rho_G(x)$ and all its descendents, including $\rho_G(y)$: a contradiction.
\end{proof}

\subsection{Proper interval graphs and proper chordal graphs}

\paragraph{Proper interval graphs.}

A graph $G=(V,E)$ is a \emph{proper interval graph} if it is an interval graph admitting an interval model $\mathbf{M}^{\sf I}_G$ such that for every pair of intervals none is a subinterval of another~\cite{Roberts68,Roberts69}. Proper interval graphs are also characterized as $K_{1,3}$-free interval graphs or as \emph{unit interval graphs}, that are interval graphs admitting an interval model $\mathbf{M}^{\sf I}_G$ in which every interval has the same length. 
Let us consider the following set of patterns:
\[
\mathcal{P}_{\sf indifference}=\big\{\langle\overline{12},{13},{23}\rangle,\langle\overline{12},{13},\overline{23}\rangle,\langle{12},{13},\overline{23}\rangle\big\}.
\]
A  $\mathcal{P}_{\sf indifference}$-free layout is called an \emph{indifference layout}. Indifference layouts have several characterizations:

\begin{theorem} \label{lem_indifference_layout}   \cite{LoogesO93optim,Roberts68}
Let $\L_G$ be a layout of a graph $G$. The following properties are equivalent. 
\begin{enumerate}
\item $\L_G$ is an indifference layout;
\item for every vertex $v$, $N[v]$ is consecutive in $\L_G$;
\item every maximal clique is consecutive in $\L_G$;
\item for every pair of vertices $x$ and $y$ with $x\prec_{\L_G} y$, $N(y)\cap A_{\L_G}(x)\subseteq N(x)\cap A_{\L_G}(x)$ and $N(x)\cap D_{\L_G}(y)\subseteq N(y)\cap D_{\L_G}(y)$.
\end{enumerate}
Moreover $G$ has an indifference layout if and only if it is a proper interval graph.
\end{theorem}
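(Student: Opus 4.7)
The plan is to establish the cyclic implications $(1) \Rightarrow (2) \Rightarrow (3) \Rightarrow (1)$, then the equivalence $(1) \Leftrightarrow (4)$, and finally the proper interval characterization. The key preliminary observation is that the three patterns of $\mathcal{P}_{\sf indifference}$ all share the edge $xz$ (the ``13'' edge), while their negations jointly cover every configuration in which at least one of $xy$ or $yz$ is a non-edge. Hence $\L_G$ is an indifference layout if and only if the following condition $(*)$ holds: for every triple $x \prec_{\L_G} y \prec_{\L_G} z$ with $xz \in E$, both $xy \in E$ and $yz \in E$. All four properties can then be derived from $(*)$.

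For $(1) \Rightarrow (2)$, given a vertex $v$ and $x, z \in N[v]$ with $x \prec_{\L_G} y \prec_{\L_G} z$, I would do a short case analysis on the position of $v$ relative to $y$ and apply $(*)$ either to the triple $(v,y,z)$ (when $v \prec_{\L_G} y$, using $vz \in E$) or to $(x,y,v)$ (when $y \prec_{\L_G} v$, using $xv \in E$) to conclude $vy \in E$. For $(2) \Rightarrow (3)$, let $K$ be a maximal clique with leftmost and rightmost elements $x$ and $z$ in $\L_G$, and let $x \prec_{\L_G} y \prec_{\L_G} z$: for every $v \in K$, either $v \preceq_{\L_G} y$, and then $v, z \in N[v]$ bracket $y$ so that $y \in N[v]$ by $(2)$, or $y \prec_{\L_G} v$, and then $x, v \in N[v]$ do the same; thus $y$ is adjacent to all of $K$ and maximality forces $y \in K$. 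For $(3) \Rightarrow (1)$, any candidate forbidden triple $x \prec_{\L_G} y \prec_{\L_G} z$ with $xz \in E$ has $xz$ in some maximal clique $K$, whose consecutivity places $y \in K$, so both $xy, yz \in E$.

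For $(1) \Leftrightarrow (4)$, assuming $(*)$ and $x \prec_{\L_G} y$, any $w \in N(y) \cap A_{\L_G}(x)$ gives a triple $w \prec_{\L_G} x \prec_{\L_G} y$ with $wy \in E$, to which $(*)$ applies to yield $wx \in E$; the second inclusion is symmetric. Conversely, given $(4)$ and $x \prec_{\L_G} y \prec_{\L_G} z$ with $xz \in E$, the first inclusion of $(4)$ applied to the pair $(y, z)$ with witness $x$ yields $xy \in E$, and the second applied to $(x, y)$ with witness $z$ yields $yz \in E$, which is exactly $(*)$.

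For the proper interval characterization, given a proper interval representation of $G$, the absence of interval containment forces the order by left endpoints to coincide with the order by right endpoints, and a direct check shows that the resulting layout satisfies $(4)$. Conversely, starting from an indifference layout $\L_G$ with positions $1, \dots, n$, I would assign to each vertex $v$ the interval $\bigl[\sigma(v),\, r(v) + \sigma(v)/(n+1)\bigr]$, where $r(v)$ is the position of the rightmost neighbour of $v$ (well-defined by consecutivity from $(2)$). Condition $(*)$ ensures $r(u) \leq r(v)$ whenever $\sigma(u) < \sigma(v)$, and the small additive perturbation forces strict monotonicity of right endpoints, ruling out containment. The only real obstacle is checking, in this last construction, that adjacency is preserved (which follows from $(2)$) and that the intervals are genuinely proper; everything else reduces to mechanical applications of $(*)$.
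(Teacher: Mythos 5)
The paper does not actually prove this statement: it is quoted with citations to Looges--Olariu and Roberts, so there is no in-paper proof to compare against. Your argument is a correct, self-contained proof. The reformulation of ``indifference layout'' as condition $(*)$ (every triple $x\prec y\prec z$ with $xz\in E$ has both $xy\in E$ and $yz\in E$) is exactly right, since the three patterns of $\mathcal{P}_{\sf indifference}$ enumerate all ways the pair $\{xy,yz\}$ can fail to be two edges while $xz$ is an edge; the cycle $(1)\Rightarrow(2)\Rightarrow(3)\Rightarrow(1)$ and the equivalence $(1)\Leftrightarrow(4)$ all check out. It is worth noting that your structure mirrors the paper's own proof of the tree-layout generalization (its Theorem on indifference tree-layouts), which also proceeds by pairwise equivalences with $(1)$ using the same reformulation. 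Two small points to tidy in the interval construction: you should define $r(v)$ as the position of the rightmost element of $N[v]$ (not of $N(v)$), so that $r(v)\geq\sigma(v)$ and the interval is nonempty even when $v$ has no neighbour to its right; and in the forward direction you should remark that properness forces the orders by left and by right endpoints to agree up to identical intervals, so the layout is well defined. Neither issue is a real gap.
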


The following lemma shows that $\mathcal{P}_{\sf indifference}$ is not a minimal set of patterns to characterize proper interval graphs, yielding a proof of \autoref{th_proper_int} below.

\begin{lemma} \label{lem_minimal_proper}
Let $\L_G$ be a $\big\{\langle\overline{12},{13},{23}\rangle,\langle{12},{13},\overline{23}\rangle\big\}$-free layout of a connected graph $G$. Then $\L_G$ an indifference layout.
\end{lemma}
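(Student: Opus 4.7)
The plan is to argue by contradiction, so suppose $\L_G$ excludes $\bigl\{\langle\overline{12},{13},{23}\rangle,\langle{12},{13},\overline{23}\rangle\bigr\}$ on the connected graph $G$ but is not indifference. Then some triple $x\prec_{\L_G} y\prec_{\L_G} z$ must realise the remaining indifference pattern $\langle\overline{12},{13},\overline{23}\rangle$, i.e.\ $xz\in E$ while $xy\notin E$ and $yz\notin E$; call such a triple \emph{bad}. The key consequence I will repeatedly use is: for every edge $uv\in E$ with $u\prec_{\L_G} v$ and every $w$ with $u\prec_{\L_G} w\prec_{\L_G} v$, one has $uw\in E$ iff $wv\in E$.

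Choose a bad triple $(x,y,z)$ with the position of the middle $y$ \emph{minimum} in $\L_G$. The first step shows that $y$ then has no neighbour preceding it in $\L_G$. Suppose $w\in N(y)$ with $w\prec_{\L_G} y$. If $w\prec_{\L_G} x$, applying the key consequence to the edge $wy$ with middle $x$ yields $wx\notin E$ (since $xy\notin E$), so $(w,x,y)$ is a bad triple whose middle $x$ strictly precedes $y$, contradicting minimality. If $x\prec_{\L_G} w\prec_{\L_G} y$, then $wx\in E$ would, via the key consequence on the edge $xz$, force $wz\in E$, and then $(w,y,z)$ would realise the forbidden pattern $\langle{12},{13},\overline{23}\rangle$; hence $wx\notin E$ and therefore $wz\notin E$, so $(x,w,z)$ is a bad triple with middle $w\prec_{\L_G} y$, again contradicting minimality.

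The second step exploits the connectedness of $G$. Since $xy\notin E$, there is a shortest path $y=u_0,u_1,\ldots,u_k=x$ in $G$ with $k\geq 2$. I prove by induction on $i\in\{0,\ldots,k-1\}$ that $u_i\prec_{\L_G} u_{i+1}$; the base $u_0\prec_{\L_G} u_1$ is Step~1. For the inductive step, assuming $u_{i-1}\prec_{\L_G} u_i$ for some $i\geq 1$, suppose for contradiction that $u_{i+1}\prec_{\L_G} u_i$. The shortest-path property forces $u_{i-1}u_{i+1}\notin E$ (otherwise $y,u_1,\ldots,u_{i-1},u_{i+1}$ would be a path to $u_{i+1}$ of length $i<i+1$). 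Then in either possible ordering, $u_{i+1}\prec_{\L_G} u_{i-1}$ or $u_{i-1}\prec_{\L_G} u_{i+1}\prec_{\L_G} u_i$, the triple on $\{u_{i-1},u_i,u_{i+1}\}$ carries only the two path-edges $u_{i-1}u_i$ and $u_iu_{i+1}$, so it realises the forbidden pattern $\langle\overline{12},{13},{23}\rangle$. Strict increase along the path gives $u_0\prec_{\L_G} u_k$, i.e.\ $y\prec_{\L_G} x$, contradicting $x\prec_{\L_G} y$.

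The main obstacle I anticipate is identifying the right extremal bad triple: minimising the width $z-x$ or either endpoint alone does not suffice, because a neighbour of $y$ lying outside $[x,z]$ in $\L_G$ produces a bad triple of width $\geq z-x$ without contradicting such choices. Minimising the middle position $y$ is the decisive choice, as it delivers the one-sided property that every neighbour of $y$ lies $\succ_{\L_G} y$; this then combines cleanly with a shortest-path ascent argument in the connected graph $G$ to yield the contradiction.
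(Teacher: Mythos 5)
Your proof is correct: the ``key consequence'' ($uw\in E$ iff $wv\in E$ for any $w$ sandwiched inside an edge $uv$) is exactly what the two excluded patterns give, Step~1 correctly rules out any neighbour of $y$ to its left in both sub-cases, and the monotonicity induction along the shortest $y$--$x$ path cleanly yields $y\prec_{\L_G}x$, contradicting $x\prec_{\L_G}y$. The paper's proof shares the same skeleton --- assume a bad triple realising $\langle\overline{12},13,\overline{23}\rangle$, pick an extremal one, and exploit connectedness through a shortest path between the two non-adjacent vertices $x$ and $y$ --- but the two arguments diverge in the extremal parameter and in the endgame. The paper minimises the set $\{v : x\prec_{\L_G} v\prec_{\L_G} z\}$, shows only that the leftmost and rightmost vertices of the shortest $x,y$-path are $x$ and $y$, and then uses the neighbour $v$ of $x$ on that path to manufacture a strictly narrower bad triple $(v,y,z)$, i.e.\ a descent. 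You instead minimise the position of the middle vertex, prove the stronger statement that the shortest path from $y$ to $x$ is strictly increasing in the layout, and reach an outright order contradiction; your route trades the descent for a one-sided neighbourhood property of $y$, which is arguably more transparent. One small quibble: your closing remark that minimising the width ``does not suffice'' is not right --- the paper does exactly that, and the descent goes through because the replacement triple $(v,y,z)$ has $v$ strictly inside $(x,y)$, so its width strictly decreases; the failure you describe only afflicts the naive attempt to argue directly about neighbours of $y$ without the shortest-path step.
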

\begin{proof}
For the sake of contradiction, suppose that $\L_G$ contains the pattern $\langle\overline{12},{13},\overline{23}\rangle$. 
So there exists a triple  $x,y,z$ of vertices such that $x \prec_{\L_G} y \prec_{\L_G} z$, $xz\in E$, $xy\notin E$ and $yz\notin E$. 
Let us consider such a triple that minimizes the set $\{v\in V\mid x \prec_{\L_G} v \prec_{\L_G} z\}$ and call it a \emph{bad} triple.

Since $G$ is connected, it contains a shortest $x,y$-path $P$. We first observe that every internal vertex $v$ of $P$ satisfies $x \prec_{\L_G} v \prec_{\L_G} y$. Let $v$ be the leftmost vertex of $P$ in $\L_G$.  Assume that $v\prec_{\L_G} x$. Then $v$ has two neighbors $a$ and $b$ in $P$ and at least one of them is distinct from $x$ and $y$. Since $\L_G$ excludes $\langle{12},{13},\overline{23}\rangle$, $v\prec_{\L_G} a$ and $v\prec_{\L_G} b$, $a$ and $b$ are adjacent vertices, contradicting $P$ to be a shortest $x,y$-path. Thereby the leftmost vertex $v$ of $P$ in $\L_G$ is $x$. For a symmetric argument, since $\L_G$ excludes $\langle \overline{12},{13},{23}\rangle$, the rightmost vertex of $P$ is $y$.

Since $xy\notin E$, $P$ contains at least one vertex. Consider $v$ the unique neighbour of $x$ on $P$, then $x\prec_{\L_G} v\prec_{\L_G} y$. Since $\L_G$ excludes $\langle{12},{13},\overline{23}\rangle$, we have $vz\in E$, which in turns implies that $vy\notin E$. But then since $yz\notin E$, the triple $v,y,z$ forms a $\langle\overline{12},{13},\overline{23}\rangle$ pattern. Since  $x\prec_{\L_G} v\prec_{\L_G} y$, this contradicts the choice of $x,y,z$ as a bad triple.
\end{proof}

\begin{theorem} \label{th_proper_int} \cite{Damaschke90}
The class of proper interval graphs is $\frak{L}(\mathcal{P}_{\sf proper})$ where  
\[
\mathcal{P}_{\sf proper}=\big\{\langle\overline{12},{13},{23}\rangle,\langle{12},{13},\overline{23}\rangle\big\}.
\]
\end{theorem}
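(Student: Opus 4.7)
The plan is to reduce the statement to what we have already established: the characterization of indifference layouts from \autoref{lem_indifference_layout} together with \autoref{lem_minimal_proper}. Note that $\mathcal{P}_{\sf proper}\subseteq \mathcal{P}_{\sf indifference}$, so an indifference layout is automatically $\mathcal{P}_{\sf proper}$-free.

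For the forward direction ($\Rightarrow$), I would simply invoke \autoref{lem_indifference_layout}: any proper interval graph $G$ admits an indifference layout $\L_G$, that is, a $\mathcal{P}_{\sf indifference}$-free layout. Since $\mathcal{P}_{\sf proper}\subseteq \mathcal{P}_{\sf indifference}$, the layout $\L_G$ is in particular $\mathcal{P}_{\sf proper}$-free, placing $G$ in $\frak{L}(\mathcal{P}_{\sf proper})$.

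For the backward direction ($\Leftarrow$), suppose $G$ admits a $\mathcal{P}_{\sf proper}$-free layout $\L_G$. First I would treat the connected case: if $G$ is connected, \autoref{lem_minimal_proper} says that $\L_G$ is already an indifference layout, hence by \autoref{lem_indifference_layout}, $G$ is a proper interval graph. Otherwise, let $C_1,\dots,C_k$ be the connected components of $G$. Restricting $\L_G$ to each $V(C_i)$ yields a layout $\L_{C_i}$ which remains $\mathcal{P}_{\sf proper}$-free (patterns are preserved under induced subsequences), and since each $C_i$ is connected, \autoref{lem_minimal_proper} applies and gives that $C_i$ is proper interval. Finally, I would conclude by observing that the disjoint union of proper interval graphs is a proper interval graph: one obtains a proper interval representation of $G$ by concatenating, along the real line, proper (actually, unit) interval representations of each $C_i$, translated far enough apart to avoid any new intersection.

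The only genuinely subtle step is \autoref{lem_minimal_proper} itself, which is already established, so the remaining work here is essentially bookkeeping. The ``main obstacle'' (really a minor technicality) is making the disconnected case explicit, because \autoref{lem_minimal_proper} is stated only for connected graphs; handling it by componentwise restriction and then combining unit interval models suffices.
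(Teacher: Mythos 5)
Your proposal is correct and follows essentially the same route as the paper, which derives \autoref{th_proper_int} directly by combining \autoref{lem_minimal_proper} with the equivalence between indifference layouts and proper interval graphs in \autoref{lem_indifference_layout}. Your explicit treatment of the disconnected case (componentwise restriction plus concatenation of unit interval models) is a detail the paper leaves implicit, but it is the right way to fill that small gap.
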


\paragraph{Proper chordal graphs.}
From the discussion of \autoref{sec_chordal}, we would like to understand what are the \emph{''tree-like"} proper interval graphs. In~\cite{Gavril74}, Gavril already considered this question, but didn't provide a conclusive answer. He observed that any family $\mathcal{T}$ of subtrees of a tree $T$ can be transformed into a proper family $\mathcal{T}'$ (no subtree in $\mathcal{T}'$ is a subtree of another one) by adding a private pendant leaf to every subtree of $\mathcal{T}$. Clearly the intersection graph of $\mathcal{T}$ is the same as the one of $\mathcal{T}'$. It follows that considering a proper family of subtrees also characterizes chordal graphs. To cope with \emph{"tree-like"} proper interval graphs, we rather propose the following definition.

\begin{definition}
A graph $G=(V,E)$ is a proper chordal graph if $G\in\frak{T}(\mathcal{P}_{\sf proper})$.
\end{definition}

As a consequence of the fact that $\mathcal{P}_{\sf proper}\subset \mathcal{P}_{\sf indifference}$ and of \autoref{lem_minimal_proper}, one can observe that a $\frak{T}(\mathcal{P}_{\sf proper})=\frak{T}(\mathcal{P}_{\sf indifference})$. Thereby hereafter,
a $\mathcal{P}_{\sf proper}$-free tree-layout $\T_G$ of a graph $G$ will be called an \emph{indifference tree-layout}. 
We first prove that \autoref{lem_indifference_layout} generalizes to indifference tree-layouts.

\begin{theorem} \label{lem_indifference_treelayout}
Let $\T_G=(T,r,\rho_G)$ be a tree-layout of a graph $G$. The following properties are equivalent. 
\begin{enumerate}
\item $\T_G$ is an indifference tree-layout;
\item for every vertex $x$, the vertices of $N[x]$ induces a connected subtree of $T$;
\item for every maximal clique $K$, the vertices of $K$ appear consecutively on a path from $r$ in $T$;
\item for every pair of vertices $x$ and $y$ such that $x\prec_{\T_G} y$, $N(y)\cap A_{\T_G}(x)\subseteq N(x)\cap A_{\T_G}(x)$ and $N(x)\cap D_{\T_G} (y)\subseteq N(y)\cap D_{\T_G}(y)$.
\end{enumerate}
\end{theorem}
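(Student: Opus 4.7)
My plan is to establish the cycle (1)$\Rightarrow$(2)$\Rightarrow$(3)$\Rightarrow$(1) and to prove (1)$\Leftrightarrow$(4) separately. Throughout I use the fact, noted immediately before the theorem, that a tree-layout is $\mathcal{P}_{\sf proper}$-free if and only if it is $\mathcal{P}_{\sf indifference}$-free, so all three triples $\langle\overline{12},{13},{23}\rangle$, $\langle{12},{13},\overline{23}\rangle$ and $\langle\overline{12},{13},\overline{23}\rangle$ can be simultaneously treated as forbidden. This symmetry is what makes the pattern-based case analysis uniform.

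For (1)$\Leftrightarrow$(4) the containments in (4) are a direct translation of pattern exclusion. In the forward direction, a witness $w\in N(y)\cap A_{\T_G}(x)\setminus N(x)$ produces a triple $w\prec_{\T_G} x\prec_{\T_G} y$ with $wy\in E$ and $wx\notin E$; irrespective of whether $xy\in E$, this triple realizes one of the three forbidden patterns. The $D_{\T_G}(y)$-side of (4) is dual. Conversely, each forbidden triple $x\prec_{\T_G} y\prec_{\T_G} z$ immediately exhibits a containment failure: $\langle\overline{12},{13},{23}\rangle$ places $x$ in $N(z)\cap A_{\T_G}(y)\setminus N(y)$, while the other two patterns place $z$ in $N(x)\cap D_{\T_G}(y)\setminus N(y)$.

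For (1)$\Rightarrow$(2) I take two nodes $u,v\in\rho_G(N[x])$ and show that the unique $T$-path between them stays in $\rho_G(N[x])$. If $u,v$ lie in different subtrees branching off $\rho_G(x)$ the path passes through $\rho_G(x)$ and the analysis reduces to two straight-line cases: an ancestor chain and a descendant chain. For an intermediate $w_i$ on the ancestor chain between $w$ and $\rho_G(x)$ with $w\in N[x]$, the triple $w\prec_{\T_G} w_i\prec_{\T_G} x$ together with $wx\in E$ is $\mathcal{P}_{\sf indifference}$-free only if $w_ix\in E$; the descendant side is symmetric. For (2)$\Rightarrow$(3), a maximal clique $K$ is pairwise ancestor-comparable in $\T_G$, so $\rho_G(K)$ lies on a single root-to-leaf path $P$; any node $u\in P$ strictly between consecutive images of $K$ belongs to $\rho_G(N[x_j])$ for every $x_j\in K$ by connectedness of $\rho_G(N[x_j])$, so $\rho_G^{-1}(u)$ would extend $K$, contradicting maximality. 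For (3)$\Rightarrow$(1), each forbidden triple contains the edge $xz$; choosing a maximal clique $K$ with $\{x,z\}\subseteq K$, property (3) forces the descending $T$-path from $\rho_G(x)$ to $\rho_G(z)$ to lie inside $\rho_G(K)$, and in particular $y\in K$, which contradicts the non-edge present in each of the three patterns.

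The main obstacle I anticipate is the case in (1)$\Rightarrow$(2) in which two elements of $N[x]$ both lie strictly below $\rho_G(x)$ on the same branch of $T$: here one has to propagate adjacency to $x$ along the descending chain between them, which amounts to an iterated application of pattern exclusion (or, equivalently, an appeal to (4) once it has been shown equivalent to (1)). The remaining implications are essentially one-step consequences of $\mathcal{P}_{\sf indifference}$-freeness together with the elementary tree topology of $T$.
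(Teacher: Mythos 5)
Your proof is correct and uses essentially the same pattern-exclusion arguments as the paper's; the only organizational differences are that you route $(3)$ through $(2)$ and close a cycle (the paper proves $(1)\Leftrightarrow(3)$ directly by taking the highest and lowest vertices of $K$ on their common root-to-leaf path), and that you invoke the identification of $\mathcal{P}_{\sf proper}$-freeness with $\mathcal{P}_{\sf indifference}$-freeness up front, whereas the paper extracts the exclusion of the third pattern $\langle\overline{12},{13},\overline{23}\rangle$ from the already-established equivalence $(1)\Leftrightarrow(2)$. In both write-ups this extra pattern is needed at exactly the same places (propagating adjacency to $x$ along ancestor chains in $(1)\Rightarrow(2)$, and the first containment in $(1)\Rightarrow(4)$), so the two proofs are interchangeable in substance.
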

\begin{proof}
$(1\Leftrightarrow 2)$ Suppose that $N[x]$ is not connected. Then, there exists $z\in N[x]$ such that the path from $x$ to $z$ in $T$ contains a vertex $y\notin N[x]$. Observe that as $xz\in E$, we have either $x\prec_{\T_G} y\prec_{\T_G} z$ or $z\prec_{\T_G} y\prec_{\T_G} x$, with $xz\in E$ but $xy\notin E$. But then, $\T$ is not $\mathcal{P}_{\sf proper}$-free and hence not an indifference tree-layout.

If $\T_G$ is not an indifference tree-layout, then it contains a pattern of $\mathcal{P}_{\sf proper}$, namely three vertices $x$, $y$ and $z$ such that $x\prec_{\T} y\prec_{\T} z$, $xz\in E$ but either $xy\notin E$ or $yz\not\in E$. In the former case, $N[x]$ is not connected in $T$, in the latter case, $N[z]$ is not connected in $T$.

\smallskip
\noindent
$(1\Leftrightarrow 3)$ 
Suppose that $\T_G$ is an indifference tree-layout. Let $K$ be a maximal clique of $G$. First observe that if $x$ and $y$ are vertices of $K$, then either $x\prec_{\T} y$ or $y\prec_{\T} x$. So let $x$ be the vertex of $K$ that is the closest to the root $r$ and $y$ the vertex of $K$ that is the furthest from $r$. Let $z$ be a vertex of $G$ such that $x\prec_{\T_G} z\prec_{\T_G} y$. Since $\T_G$ is a $\big\{\langle \overline{12},{13},\overline{23}\rangle, \langle\overline{12},{13},{23}\rangle,\langle{12},{13},\overline{23}\rangle\big\}$-free tree-layout (2), $xz\in E$ and $yz\in E$. Now suppose that there exists $z'\neq z$ such that $x\prec_{\T} z'\prec_{\T} y$, by the same argument we have $xz'\in E$ and $yz'\in E$. Since $\T_G$ is $\mathcal{P}_{\sf proper}$-free, we observe again that $zz'\in E$. It follows, by the choice of $x$ and $y$ that $K=\{z\in V(G)\mid x\prec_{\T_G} z\prec_{\T_G} y\}$, proving the statement.

For the converse, let us consider three vertices $x$, $y$ and $z$ such that $x\prec_{\T_G} y\prec_{\T_G} z$ and $xz\in E$.
Then there exists a maximal clique $K$ containing the edge $xz$. By assumption, the vertices of $K$ are consecutive on a path from $r$. As $K$ contains $x$ and $z$, it also contains $y$. Thereby, we have that $xy\in E$ and $yz\in E$. It follows that $\T_G$ is $\mathcal{P}_{\sf indifference}$-free and so an indifference tree-layout.

\smallskip
\noindent
$(1\Leftrightarrow 4)$ 
Suppose that $\T_G$ is an indifference tree-layout. Let $x$ and $y$ be two vertices such that $x\prec_{\T_G} y$. Suppose that there exists $z$ such that $z\prec_{\T} x$ and $zy\in E$. As $\T_G$ is $\big\{\langle \overline{12},{13},\overline{23}\rangle, \langle\overline{12},{13},{23}\rangle,\langle{12},{13},\overline{23}\rangle\big\}$-free (2),
  $xz\in E$, which implies that $N(y)\cap A_{\T}(x)\subseteq N(x)\cap A_{\T}(x)$. Similarly, suppose that there exists $z$ such that $y\prec_{\T} z$ and $xz\in E$. As $\T$ is 
$\big\{\langle \overline{12},{13},\overline{23}\rangle, \langle\overline{12},{13},{23}\rangle,\langle{12},{13},\overline{23}\rangle\big\}$-free,
 $yz\in E$, which implies that $N(x)\cap D_{\T} (y)\subseteq N(y)\cap D_{\T}(y)$.

For the converse, let us consider three vertices $x$, $y$ and $z$ such that $x\prec_{\T_G} y\prec_{\T_G} z$ and $xz\in E$.
As $N(z)\cap A_{\T_G}(y)\subseteq N(y)\cap A_{\T_G}(y)$, $xz\in E$ implies that $yx\in E$. Similarly, as $N(x)\cap D_{\T}(y)\subseteq N(y)\cap D_{\T}(y)$, $xz\in E$ implies that $yz\in E$. It follows that $\T_G$ is $\mathcal{P}_{\sf indifference}$-free and so an indifference tree-layout.
  %
%
\end{proof}

Clearly, as an indifference layout is an indifference tree-layout, every proper interval graph is a proper chordal graph. Also, as $\mathcal{P}_{\sf chordal}\subset \mathcal{P}_{\sf proper}$, proper chordal graphs are chordal graphs. The following observation proves that not every chordal graph is a proper chordal graph. For $k\geq 3$, the $k$-sun is a graph on $2k$ vertices $X\cup Y$ with $X=\{x_1,\dots x_k\}$ and $Y=\{y_1,\dots y_k\}$ such that $X$ is a clique and $Y$ is an independent set and for every $i\in [k]$, $y_i$ is
 adjacent to $x_i$ and $x_{(i+1)\mod k}$ (see  \autoref{fig_intersection_treelayout}).
 
\begin{observation} \label{obs_ksun}
For every, $k\geq 3$, the $k$-sun is not a proper chordal graph.
\end{observation}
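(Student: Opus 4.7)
My plan is to derive a contradiction from \autoref{lem_indifference_treelayout}(3), which asserts that in any indifference tree-layout every maximal clique of $G$ appears consecutively on a path from the root. First I would verify that, for $k\geq 3$, the maximal cliques of the $k$-sun are exactly the clique $X$ (which cannot be extended by any $y_i$ since $y_i$ misses $x_\ell$ for $\ell\notin\{i,i+1\}$) and the $k$ triangles $T_i=\{x_i,x_{(i+1)\mod k},y_i\}$ (any clique contains at most one $y$-vertex because $Y$ is independent).

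Assume for contradiction that the $k$-sun admits an indifference tree-layout $\T_G=(T,r,\rho_G)$. Applying (3) to $X$ yields a path from $r$ on which the $x$-vertices are consecutive. Since $\rho_G$ is a bijection between $V$ and the nodes of $T$, no tree-node sits strictly between two $G$-vertices that are consecutive on this path; hence the $x$-vertices form an honest tree-path, each being the tree-child of the previous one. Write this as $x_{\pi(1)}\prec_{\T_G}x_{\pi(2)}\prec_{\T_G}\cdots\prec_{\T_G}x_{\pi(k)}$, so the tree-path from $x_{\pi(i)}$ down to $x_{\pi(j)}$ (for $i<j$) contains only vertices of $X$.

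Let $u=x_{\pi(1)}$ be the topmost $x$-vertex. In the sun structure $u$ participates in exactly two maximal triangles $\{u,a,y^a\}$ and $\{u,b,y^b\}$, where $a$ and $b$ are the cyclic $X$-neighbours $x_{(\pi(1)-1)\mod k}$ and $x_{(\pi(1)+1)\mod k}$; since $k\geq 3$, $a\neq b$. I would then argue that (3) applied to $\{u,a,y^a\}$ forces $a=x_{\pi(2)}$; the symmetric argument applied to $\{u,b,y^b\}$ forces $b=x_{\pi(2)}$, contradicting $a\neq b$.

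To establish the claim $a=x_{\pi(2)}$, write $a=x_{\pi(j)}$ and analyse the position of $y^a$ in the tree order. If $u\prec_{\T_G}y^a\prec_{\T_G}a$, then $y^a$ must lie on the unique tree-path from $u$ to $a$, which consists only of $x$-vertices by the observation above, contradicting $y^a\notin X$. Otherwise, $y^a\prec_{\T_G}u$ or $a\prec_{\T_G}y^a$, so the triangle is consecutive on a root-path whose segment between $u$ and $a$ is exactly the tree-path $x_{\pi(1)},x_{\pi(2)},\dots,x_{\pi(j)}$; if $j\geq 3$, then $x_{\pi(2)}$ is a $G$-vertex strictly between two triangle vertices on that root-path, violating consecutiveness, so $j=2$. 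The step I expect to be the main obstacle is the middle case of this trichotomy, and it is dispatched precisely by the tree-consecutiveness of the $X$-path established above, which leaves no room for any non-$X$ node between two successive $x$-vertices.
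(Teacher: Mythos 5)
Your proof is correct, and it is worth contrasting with the paper's. Both arguments open identically: apply characterization (3) of \autoref{lem_indifference_treelayout} to the central clique $X$, which pins the $x$-vertices onto consecutive nodes of a single root-to-leaf path (so that the segment between any two of them contains only $X$-vertices). The finishing moves differ. The paper writes out only $k=3$: it takes the simplicial vertex $e$ adjacent to the first and last of $b\prec_{\T_G} c\prec_{\T_G} d$, notes that consecutiveness pushes $e$ above $b$ or below $d$, and exhibits a $\mathcal{P}_{\sf proper}$ pattern with the middle vertex $c$. For $k>3$ this exact move is not available verbatim, since no $y_i$ need be adjacent to both extremes of the permuted $X$-chain; the ``straightforward'' generalization still requires an extra observation (e.g.\ a pigeonhole step producing a cyclically adjacent pair $x_i,x_{i+1}$ that is not consecutive in the chain). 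Your argument instead invokes (3) a second time, on the two maximal triangles through the topmost $X$-vertex $u=x_{\pi(1)}$, and shows --- via the clean trichotomy on the position of $y^a$, with the middle case killed by the all-$X$ tree-path from $u$ to $a$ --- that each triangle forces its other $X$-vertex to be $x_{\pi(2)}$, contradicting that these two neighbours are distinct for $k\geq 3$. This is uniform in $k$ and genuinely self-contained where the paper defers to the reader; the trade-off is only that it is slightly longer than the paper's direct pattern-exhibition for $k=3$.
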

\begin{proof}
We prove the claim for $k=3$. Its generalization to arbitrary value of $k\geq 3$ is straightforward.

Let $G$ be the $3$-sun with the vertex set as in \autoref{fig_intersection_treelayout}. Suppose that $\T_G=(T,r,\rho_G)$ is an indifference tree-layout of $G$. Let $b$, $c$ and $d$ be the three non-simplicial vertices of the $3$-sun $G$. As they form a maximal clique, these three vertices are mapped to consecutive nodes of the same path from the root to a leaf of $\T$ (see \autoref{lem_indifference_treelayout}). Suppose without loss of generality that $b\prec_{\T_G} c\prec_{\T_G} d$. Let $e$ be the simplicial vertex of $G$ adjacent to $b$ and $d$. As $b$, $c$ and $d$ are consecutive in $\T_G$, either $e\prec_{\T_G} b$ or $d\prec_{\T_G} e$. We observe that as $e$ is non-adjacent to $c$, in both cases $\T_G$ is not  $\mathcal{P}_{\sf proper}$-free: contradiction.
\end{proof}

\subsection{Relationship between proper chordal graphs and subclasses of chordal graphs.}

Let us now discuss the relationship between proper chordal graphs and important subclasses of chordal graphs, namely interval graphs, directed path graphs and strongly chordal graphs. \autoref{fig_classes} summarizes these relations.

\begin{figure}[h]
\begin{center}
\begin{tikzpicture}
\tikzstyle{sommet}=[circle, draw, fill=black, inner sep=0pt, minimum width=2pt]


\begin{scope}[rotate=275,yshift=1.7cm,xshift=-0cm]
  \def\eggheight{2.8cm}
  \path[draw, circle, fill=gray!20]
  plot[domain=-pi:pi,samples=100]
  ({.5*\eggheight *cos(\x/3.5 r)*sin(\x r)},{-\eggheight*(cos(\x r))})
  -- cycle;
\node[rotate=15, very thick] at (0.8,1) {\tiny  proper chordal};
\end{scope}

\begin{scope}[rotate=270,yshift=0.1cm,xshift=0.1cm]
  \def\eggheight{1cm}
  \path[draw, circle]
  plot[domain=-pi:pi,samples=100]
  ({.75*\eggheight *cos(\x/4 r)*sin(\x r)},{-\eggheight*(cos(\x r))})
  -- cycle;
  
\node[] at (0,0) {\tiny proper interval};

\end{scope}

\begin{scope}[rotate=270,yshift=0.5cm,xshift=-0.4cm]
  \def\eggheight{1.9cm}
  \path[draw, circle]
  plot[domain=-pi:pi,samples=100]
  ({.8*\eggheight *cos(\x/4 r)*sin(\x r)},{-\eggheight*(cos(\x r))})
  -- cycle;

\node[] at (-1.2,-0.2) {\tiny interval};

\begin{scope}[rotate=120,scale=0.3,xshift=-2cm,yshift=3.5cm]
\foreach \i in {0,36,72,108,144,180}{
	\draw (\i:1.2) node[sommet]{}; 
}
\foreach \i in {0,36,72,108,144}{
	\draw (\i:1.2) -- (30+\i:1.2);
}
\draw (0:0) node[sommet]{}; 
\foreach \i in {0,36,72,108,144,180}{
	\draw (\i:1.2) -- (0:0);
}
\end{scope}

\begin{scope}[rotate=90,scale=0.3,xshift=-0.5cm,yshift=1cm]
\foreach \i in {0,45,90,135,180}{
	\draw (\i:1.2) node[sommet]{}; 
}
\foreach \i in {0,45,90,135}{
	\draw (\i:1.2) -- (45+\i:1.2);
}
\draw (0:0) node[sommet]{}; 
\foreach \i in {0,45,90,135,180}{
	\draw (\i:1.2) -- (0:0);
}
\draw (45:1.2) -- (135:1.2); 
\end{scope}

\begin{scope}[xshift=-0.2cm,yshift=1.1cm,scale=0.3]
\draw (0,0) node[sommet]{};
\draw (90:1) node[sommet]{};
\draw (210:1) node[sommet]{};
\draw (330:1) node[sommet]{};
	 
\draw (0,0) -- (90:1);
\draw (0,0) -- (210:1);
\draw (0,0) -- (330:1);
\end{scope}
\end{scope}

\begin{scope}[rotate=270,yshift=.95cm,xshift=-0.60cm]
  \def\eggheight{2.75cm}
  \path[draw, circle]
  plot[domain=-pi:pi,samples=100]
  ({.78*\eggheight *cos(\x/4 r)*sin(\x r)},{-\eggheight*(cos(\x r))})
  -- cycle;

\node[] at (-1.6,-0.4) {\tiny rooted directed path};

\begin{scope}[xshift=-0.1cm,yshift=1.9cm,scale=0.25,rotate=14]
\draw (0,0) node[sommet]{};
\draw (90:1) node[sommet]{};
\draw (210:1) node[sommet]{};
\draw (330:1) node[sommet]{};
\draw (90:2) node[sommet]{};
\draw (210:2) node[sommet]{};
\draw (330:2) node[sommet]{};
	 
\draw (0,0) -- (90:1) -- (90:2);
\draw (0,0) -- (210:1) -- (210:2);
\draw (0,0) -- (330:1) -- (330:2);
\end{scope}

%

\begin{scope}[xshift=-1.25cm,yshift=1.15cm,scale=0.25,rotate=80]
\foreach \i in {0,36,72,108,144,180}{
	\draw (\i:1.2) node[sommet]{}; 
}
\foreach \i in {0,36,72,108,144}{
	\draw (\i:1.2) -- (30+\i:1.2);
}
\draw (0:0) node[sommet]{}; 
\foreach \i in {0,36,72,108,144,180}{
	\draw (\i:1.2) -- (0:0);
}
\draw (-90:1.2) node[sommet]{}; 
\draw (0:2.4) node[sommet]{}; 
\draw (180:2.4) node[sommet]{}; 

\draw (-90:1.2) -- (0:0);
\draw (0:2.4) -- (0:1.2) ; 
\draw (180:2.4) -- (180:1.2) ; 

\end{scope}

\end{scope}

\begin{scope}[rotate=270,yshift=1.4cm,xshift=-0.9cm]
  \def\eggheight{3.5cm}
  \path[draw, circle]
  plot[domain=-pi:pi,samples=100]
  ({.78*\eggheight *cos(\x/4 r)*sin(\x r)},{-\eggheight*(cos(\x r))})
  -- cycle;

\node[] at (-2.1,-0.3) {\tiny strongly chordal};

\begin{scope}[xshift=0.7cm,yshift=2.6cm,scale=0.24,rotate=-30]

\draw (0:0) node[sommet]{}; 
\foreach \i in {60,120,180,270,330}{
	\draw (\i:1.2) node[sommet]{}; 
	\draw (0,0) -- (\i:1.2);
}

\draw (60:1.2) -- (120:1.2)
	 -- (180:1.2)
	 -- (270:1.2)
	 -- (330:1.2);

\begin{scope}[xshift=-1.2cm]
\foreach \i in {270,210}{
	\draw (\i:1.2) node[sommet]{}; 
	\draw (0,0) -- (\i:1.2);
}
\draw (270:1.2) -- (210:1.2);
\draw (270:1.2) -- (0:1.2);

\end{scope}
\end{scope}

\begin{scope}[xshift=-1.3cm,yshift=1.6cm,scale=0.25,rotate=-65]
\draw (0,0) node[sommet]{};
\draw (90:1) node[sommet]{};
\draw (210:1) node[sommet]{};
\draw (330:1) node[sommet]{};
\draw (90:2) node[sommet]{};
\draw (210:2) node[sommet]{};
\draw (330:2) node[sommet]{};
	 
\draw (90:1) -- (210:1);
\draw (210:1) -- (330:1);
\draw (330:1) -- (90:1);
\draw[] (0,0) .. controls (50:1) .. (90:2) ;
\draw[] (0,0) .. controls (170:1) .. (210:2) ;
\draw[] (0,0) .. controls (290:1) .. (330:2) ;

\draw (0,0) -- (90:1) -- (90:2);
\draw (0,0) -- (210:1) -- (210:2);
\draw (0,0) -- (330:1) -- (330:2);
\end{scope}

\end{scope}

\begin{scope}[rotate=270,yshift=1.9cm,xshift=-1.2cm]
  \def\eggheight{4.3cm}
  \path[draw, circle]
  plot[domain=-pi:pi,samples=100]
  ({.78*\eggheight *cos(\x/4 r)*sin(\x r)},{-\eggheight*(cos(\x r))})
  -- cycle;
\node[] at (-2.7,-0.3) {\tiny chordal};

\begin{scope}[xshift=1cm,yshift=3.2cm,scale=0.3,rotate=-30]
\draw (90:1) node[sommet]{};
\draw (210:1) node[sommet]{};
\draw (330:1) node[sommet]{};
\draw (270:0.5) node[sommet]{};
\draw (150:0.5) node[sommet]{};
\draw (30:0.5) node[sommet]{};

\draw (90:1) -- (210:1);
\draw (90:1) -- (330:1);
\draw (210:1) -- (330:1);
\draw (150:0.5) -- (30:0.5);
\draw (150:0.5) -- (270:0.5);
\draw (270:0.5) -- (30:0.5);
\end{scope}

\end{scope}

\end{tikzpicture}
\end{center}
\caption{\label{fig_classes} Relationship between proper chordal graphs and subclasses of chordal graphs.}
\end{figure}
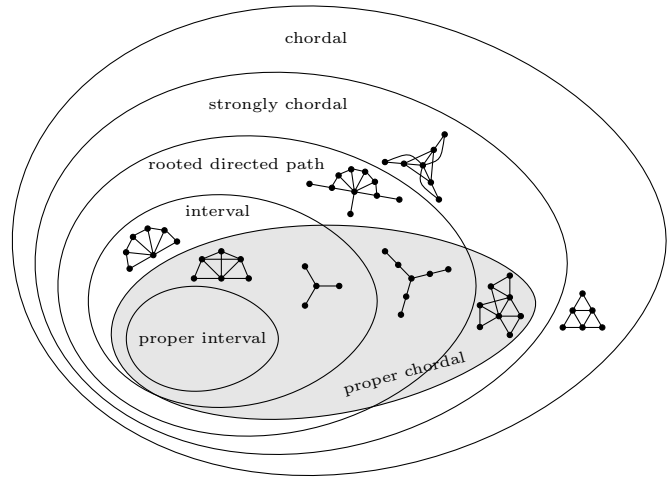

\paragraph{Interval graphs.} Trees are proper chordal graphs but not interval graphs. For $k\geq 3$, the \emph{$k$-fan} is the graph obtained by adding a universal vertex to the path on $k+1$ vertices (see \autoref{fig_6fan}). It is easy to see that for every $k\geq 3$, the $k$-fan is an interval graph. By \autoref{obs_interval}, we show that the $5$-fan is not a proper chordal graphs. It follows that proper chordal graphs and interval graphs form incomparable graph classes.

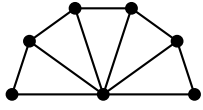
\begin{figure}[ht]
\begin{center}
\begin{tikzpicture}[thick,scale=1]
\tikzstyle{sommet}=[circle, draw, fill=black, inner sep=0pt, minimum width=4pt]

\foreach \i in {0,36,72,108,144,180}{
	\draw (\i:1.2) node[sommet]{}; 
}
\foreach \i in {0,36,72,108,144}{
	\draw (\i:1.2) -- (36+\i:1.2);
}
\draw (0:0) node[sommet]{}; 
\foreach \i in {0,36,72,108,144,180}{
	\draw (\i:1.2) -- (0:0);
}
\end{tikzpicture}
\end{center}
\caption{\label{fig_6fan} The $5$-fan.}
\end{figure}

\begin{observation} \label{obs_interval} 
The $k$-fan, for $k\geq 5$,  is an interval graph but not a proper chordal graph.
\end{observation}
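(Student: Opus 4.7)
The interval representation is straightforward: assigning $u = [0, k+2]$ and $v_i = [i, i + \tfrac{3}{2}]$ for $0 \leq i \leq k$ yields a valid interval model of the $k$-fan.

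For the claim that the $k$-fan is not proper chordal for $k \geq 5$, I would suppose for contradiction that $\T_G = (T, r, \rho_G)$ is an indifference tree-layout of the $k$-fan. Using property~4 of \autoref{lem_indifference_treelayout} together with the fact that $u$ is universal: for any chain $u \prec_{\T} v_i \prec_{\T} v_j$, applying property~4 to $(x, y) = (u, v_i)$ forces $v_j \in N(v_i)$, hence $|i-j|=1$; symmetrically for chains $v_i \prec_{\T} v_j \prec_{\T} u$. It follows that (a) the set $A := A_{\T}(u) \cap \{v_0, \ldots, v_k\}$ satisfies $|A| \leq 2$ with consecutive indices, and (b) each child $c = v_p$ of $u$ in $T$ has a subtree $T_c$ containing at most three $v$'s, all with indices in $\{p-1, p, p+1\}$. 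Moreover, since any edge $v_iv_{i+1}$ with both endpoints in $B := \{v_0,\ldots,v_k\}\setminus A$ must be realized as ancestor--descendant and hence lies in a common subtree $T_c$, each connected component of the $v$-path minus $A$ has at most three vertices. This already rules out $|A|=0$, $|A|=1$ with $a \in \{0, k\}$, and $|A|=2$ with $\{a, a+1\}$ at an endpoint, for every $k \geq 4$.

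For the remaining ``interior'' positions of $A$, I would use property~2 of \autoref{lem_indifference_treelayout}: for each $v_a \in A$ whose $v$-path neighbor $v_{a-1}$ lies in $B$, the requirement that $N[v_a]$ induces a connected subtree of $T$ forces $v_{a-1}$ to be a $T$-child of $u$, since $v_{a-1}$'s only candidate $T$-neighbor inside $N[v_a]$ is $u$ (the vertices $v_a, v_{a+1}$ lie strictly above $u$, and no $T$-descendant of $v_{a-1}$ belongs to $N[v_a]$). Iterating with (a) and (b), $v_{a-2}$ (if in $B$) must be the $T$-child of $v_{a-1}$, and $v_{a-3}$ (if in $B$) admits no valid placement: neither as a further $T$-child of $v_{a-1}$ (index difference $2$ contradicts (b)), nor as a $T$-grandchild (a chain of three $v$'s below $u$ contradicts (a)), and placing it in a distinct subtree of $u$ would leave the edge $v_{a-3}v_{a-2}$ unrealized. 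Applying the symmetric argument to the right of $A$ leaves at most two $B$-vertices on each side, so $k + 1 = |A| + |B| \leq 2 + 2 + 2 = 6$, forcing $k \leq 5$; and for $k = 5$ the only configuration surviving these bounds is $|A| = 2$ with $A = \{v_2, v_3\}$. In that borderline case, property~2 applied to $N[v_1]$ forces $v_2$ to be the $T$-parent of $u$, while property~2 applied to $N[v_4]$ forces $v_3$ to be the $T$-parent of $u$; since $u$ has a unique parent in $T$, this is a contradiction.

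The main technical challenge is organizing the case analysis cleanly. The key insight is the dual use of \autoref{lem_indifference_treelayout}: property~4 (combined with universality of $u$) bounds chain length and subtree size, while property~2 pins down the precise $T$-position of each $A$-vertex relative to $u$ and its $B$-neighbors.
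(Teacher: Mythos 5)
Your proof is correct, but it takes a genuinely different route from the paper's. The paper argues via property~3 of \autoref{lem_indifference_treelayout} (maximal cliques occupy consecutive positions on a root-to-leaf path): it first shows that the $4$-fan has a \emph{unique} indifference tree-layout, in which the universal vertex and the path endpoints end up non-consecutive, then observes that the $5$-fan would have to extend that layout with its new maximal clique consecutive, which is impossible; the statement for all $k\geq 5$ then follows by heredity. You instead work directly on the $k$-fan for every $k\geq 5$ using properties~2 and~4, converting universality of $u$ into the rule that any two path vertices comparable through $u$ must be path-adjacent, and then counting how many path vertices fit above $u$ and inside each subtree hanging below $u$. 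Your version is longer and more case-driven, but it is self-contained (no appeal to heredity or to the uniqueness of the $4$-fan's layout), it supplies an explicit interval model, and the count $k+1\leq 2+2+2$ makes transparent exactly where $k=4$ squeaks through and $k=5$ fails. Two small repairs: the justification that $v_{a-1}$ must be a $T$-child of $u$ (``$v_a,v_{a+1}$ lie strictly above $u$'') is only verbatim correct when $|A|=2$; in the $|A|=1$ case $v_{a+1}$ lies in $B$ and must additionally be excluded as the $T$-parent of $v_{a-1}$, which your chain rule does at once since $u\prec_{\T} v_{a+1}\prec_{\T} v_{a-1}$ would force the indices $a+1$ and $a-1$ to differ by one. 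Also, the grandchild subcase for $v_{a-3}$ contradicts your item (b), not (a), and the endpoint configuration $|A|=2$ is only excluded by the size-three bound for $k\geq 5$, not $k=4$ --- harmless here, since only $k\geq 5$ is needed.
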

\begin{proof}
First, for every $k$, the $k$-fan is an interval graph. As proper chordal graphs is an hereditary class, it suffices to prove that the $5$-fan is not proper chordal. We first prove that the $4$-fan $G$ has a unique indifference tree-layout $\T_G$.
Let $v$ be the universal vertex of $G$ and $v_1,\dots v_5$ be the vertices of the path appearing in the natural ordering. For $i\in[1,4]$, we let $C_i$ denote the maximal clique $\{v,v_i,v_{i+1}\}$. Suppose that $\T_G=(T,r,\rho_G)$ is an indifference tree-layout of $G$. By \autoref{lem_indifference_treelayout}, every maximal clique $C_i$  has to occur consecutively on a path of $T$ from $r$ to some leaf. Observe that $C_1\cap C_2$ and $C_3\cap C_3$ implies that $v_2\prec_{\T_G} v\prec_{\T_G} v_3$ or that $v_3\prec_{\T_G} v\prec_{\T_G} v_2$. Similarly, $C_2\cap C_3$ and $C_3\cap C_4$ implies that $v_3\prec_{\T_G} v\prec_{\T_G} v_4$ or that $v_4\prec_{\T_G} v\prec_{\T_G} v_3$. It follows that $v_3\prec_{\T_G} v$, $v\prec_{\T_G} v_2$ and $v\prec_{\T_G} v_4$ but $v_2$ and $v_4$ are not on a common path from the root of $T$. In turn, this forces $v_2\prec_{\T_G} v_1$ and $v_4\prec_{\T_G} v_5$ and this is the unique indifference tree-layout of the $4$-fan $G$. 

Now a $5$-fan $H$ can be obtained from  the $4$-fan $G$ by adding a vertex $v_6$ adjacent to $v_5$ and $v$, forming a new maximal clique $C_5=\{v,v_5,v_6\}$. Clearly, $H$ is a proper chordal graph if and only if $\T_G$ could be extended to an indifference tree-layout of $H$ where $C_5$ is consecutive, which is not possible since in $\T_G$, the vertices $v$ and $v_5$ are not consecutive.
\end{proof}

\paragraph{Rooted directed path graphs.} Rooted directed path graphs form an interesting graph class that is sandwiched between interval graphs and chordal graphs. A graph $G=(V,E)$ is a \emph{directed path graph} if and only if it is the intersection graph of a set of directed subpaths (a subpath from the root to a leaf) of a rooted tree~\cite{MonmaW86}.

\begin{figure}[ht]
\begin{center}
\begin{tikzpicture}[thick,scale=1]
\tikzstyle{sommet}=[circle, draw, fill=black, inner sep=0pt, minimum width=4pt]

\draw (0:0) node[sommet]{}; 
\foreach \i in {60,120,180,270,330}{
	\draw (\i:1.2) node[sommet]{}; 
	\draw (0,0) -- (\i:1.2);
}

\node[right] (a) at (0:0.1) {$a$};
\node[right] (b) at (60:1.3) {$b$};
\node[left] (c) at (120:1.3) {$c$};
\node[left] (d) at (170:1.3) {$d$};
\node[below] (e) at (270:1.3) {$e$};
\node[right] (f) at (330:1.3) {$f$};

\draw (60:1.2) -- (120:1.2)
	 -- (180:1.2)
	 -- (270:1.2)
	 -- (330:1.2);

\begin{scope}[xshift=-1.2cm]
\foreach \i in {270,210}{
	\draw (\i:1.2) node[sommet]{}; 
	\draw (0,0) -- (\i:1.2);
}
\draw (270:1.2) -- (210:1.2);
\draw (270:1.2) -- (0:1.2);

\node[below] (g) at (270:1.3) {$g$};
\node[left] (h) at (210:1.3) {$h$};
\end{scope}

\begin{scope}[xshift=5cm,yshift=-1cm]

\draw (90:3) -- (90:2) -- (90:1) -- (0:0) -- (210:1) -- (239:1.7) ;
\draw (0:0) -- (330:1) -- (301:1.7) ;

\node[right] (h) at (90:3) {$h$};
\node[right] (g) at (90:2) {$g$};
\node[right] (d) at (90:1) {$d$};
\node[right] (a) at (30:0.1) {$a$};
\node[left] (c) at (210:1) {$c$};
\node[left] (b) at (240:1.7) {$b$};
\node[right] (e) at (330:1) {$e$};
\node[right] (f) at (300:1.7) {$f$};

\draw[red,thick] (90:3) .. controls (0.25,2.5) .. (90:2) ;
\draw[red,thick] (90:3) .. controls (0.8,2) .. (90:1) ;
\draw[red,thick] (90:2) .. controls (0.25,1.5) .. (90:1) ;
\draw[red,thick] (0,0) .. controls (0.25,0.5) .. (90:1) ;
\draw[red,thick] (0,0) .. controls (-0.8,1) .. (90:2) ;
\draw[red,thick] (0,0) .. controls (-0.8,1) .. (90:2) ;
\draw[red,thick] (0,0) .. controls (-0.8,1) .. (90:2) ;

\draw[red,thick] (0,1) .. controls (-0.6,0.2) .. (210:1) ;
\draw[red,thick] (0,1) .. controls (0.6,0.2) .. (330:1) ;
\draw[red,thick] (0,0) .. controls (-0.2,-0.4) .. (210:1) ;
\draw[red,thick] (0,0) .. controls (0.2,-0.4) .. (330:1) ;

\draw[red,thick] (0,0) .. controls (-0.2,-1) .. (239:1.7) ;
\draw[red,thick] (0,0) .. controls (0.2,-1) .. (301:1.7) ;
\draw[red,thick] (210:1) .. controls (-0.6,-0.9) .. (239:1.7) ;
\draw[red,thick] (330:1) .. controls (0.6,-0.9) .. (301:1.7) ;

\draw (90:3) node[sommet]{};
\draw (90:2) node[sommet]{};
\draw (90:1) node[sommet]{};
\draw (0:0) node[sommet]{};
\draw (210:1) node[sommet]{};
\draw (239:1.7) node[sommet]{};
\draw (330:1) node[sommet]{};
\draw (301:1.7) node[sommet]{};

\end{scope}

\end{tikzpicture}
\end{center}
\caption{\label{fig_cevenol} The C\'evenol graph and an indifference tree-layout of the Cévenol graph.}
\end{figure}
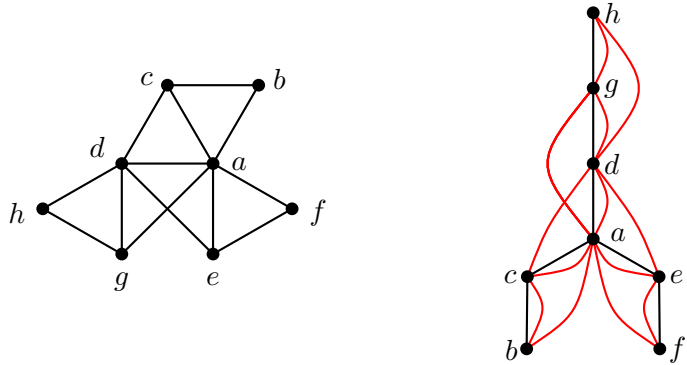

\begin{observation}
The C\'evenol graph is a proper chordal graph but not a rooted directed path graph
\end{observation}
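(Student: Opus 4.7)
The proof naturally splits into two parts. For the positive part, I would verify that the tree-layout $\T_G=(T,r,\rho_G)$ depicted on the right of \autoref{fig_cevenol}---rooted at $h$, with the chain $h\prec_{\T_G} g\prec_{\T_G} d\prec_{\T_G} a$ and $a$ branching into the two chains $c\prec_{\T_G} b$ and $e\prec_{\T_G} f$---is an indifference tree-layout. By item $(2)$ of \autoref{lem_indifference_treelayout}, it is enough to check that, for every vertex $v$, the set $N[v]$ induces a connected subtree of $T$. This reduces to a routine case-by-case verification across the eight closed neighborhoods: for example $N[a]=\{a,b,c,d,e,f,g\}$ induces $T$ minus the root $h$, $N[d]=\{a,c,d,e,g,h\}$ induces the subtree on $\{h,g,d,a,c,e\}$, $N[b]=\{a,b,c\}$ induces the subtree $\{a,c,b\}$, and similarly for the remaining vertices.

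For the negative part, the plan is to appeal to the clique-tree characterization of rooted directed path graphs due to Monma and Wei~\cite{MonmaW86}: a chordal graph $G$ is a rooted directed path graph if and only if there exists a clique tree $T_c$ of $G$ and a rooting of $T_c$ such that, for every vertex $x\in V(G)$, the maximal cliques of $G$ containing $x$ form a chain in $T_c$ (that is, a set totally ordered by the ancestor relation). The six maximal cliques of the C\'evenol graph are
\[
K_1=\{a,b,c\},\ K_2=\{a,c,d\},\ K_3=\{a,d,e\},\ K_4=\{a,e,f\},\ K_5=\{a,d,g\},\ K_6=\{d,g,h\},
\]
and a direct computation identifies the minimal separators as $\{a,c\}$, $\{a,e\}$, $\{d,g\}$ (each of multiplicity one in any clique tree) and $\{a,d\}$ of multiplicity two. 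By the standard theory of chordal graphs, every clique tree of $G$ must therefore contain the edges $K_1K_2$, $K_3K_4$, $K_5K_6$ together with exactly two of the three candidate edges $K_2K_3$, $K_2K_5$, $K_3K_5$, yielding exactly three possible clique trees.

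The crux of the argument is then a short case analysis. Note that the cliques containing $a$ are $\{K_1,K_2,K_3,K_4,K_5\}$ while those containing $d$ are $\{K_2,K_3,K_5,K_6\}$. In the clique tree with extra edges $\{K_2K_3,K_2K_5\}$, the node $K_2$ has three neighbors $K_1,K_3,K_5$ all lying in $a$'s cliques; in the one with extra edges $\{K_2K_3,K_3K_5\}$, the node $K_3$ has three neighbors $K_2,K_4,K_5$ all lying in $a$'s cliques; and in the one with extra edges $\{K_2K_5,K_3K_5\}$, the node $K_5$ has three neighbors $K_2,K_3,K_6$ all lying in $d$'s cliques. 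In each case the subtree induced by the cliques containing $a$ or $d$ has a node of degree three, hence cannot become a chain under any rooting, contradicting the Monma--Wei characterization. I expect the main obstacle to be the justification that these three trees exhaust all clique trees of $G$, but this is immediate once the minimal separators and their multiplicities have been computed.
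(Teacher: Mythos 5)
Your proof is correct, and it supplies precisely what the paper leaves unsaid: the paper's own proof consists of pointing at the tree-layout in the figure for the positive direction and, for the negative direction, of the bare assertion that in every tree-intersection model the subtree associated to $a$ or to $d$ must have a branching node. Your positive part is the same verification (strictly speaking, before invoking item (2) of \autoref{lem_indifference_treelayout}, which is stated for tree-layouts, one should also confirm that every edge of the C\'evenol graph joins an ancestor--descendant pair of the displayed tree; this is immediate). Your negative part puts rigour behind the paper's one-sentence claim by routing through the Gavril/Monma--Wei clique-tree characterization: the maximal cliques and minimal separators are as you list them, the separators $\{a,c\},\{a,e\},\{d,g\}$ of multiplicity one and $\{a,d\}$ of multiplicity two do force the edges $K_1K_2$, $K_3K_4$, $K_5K_6$ together with exactly two of $K_2K_3$, $K_2K_5$, $K_3K_5$, so there are exactly three clique trees, and in each one the cliques through $a$ or through $d$ induce a subtree with a degree-three node, which no rooting can totally order by ancestry. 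This is the same obstruction the paper invokes, but your clique-tree enumeration is what makes the quantification over all tree models actually checkable; the trade-off is that you import the standard separator-multiplicity theory and the Monma--Wei characterization, neither of which the paper states explicitly.
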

\begin{proof}
The tree-layout of \autoref{fig_cevenol} certifies that the C\'evenol graph is a proper chordal graph. To see that the C\'evenol graph is not a rooted directed path graph, it suffices to observe that each of its tree-intersection models has a subtree with a branching node (either the subtree associated to $a$ or the one associated to $d$).
\end{proof}

\paragraph{Strongly chordal graphs.}
A vertex $v$ of a graph $G=(V,E)$ is \emph{simple} if for all $x,y\in N[v]$, $N[x]\subseteq N[y]$ or $N[y]\subseteq N[x]$. Observe that a simple vertex is a simplicial vertex.  A simplicial elimination ordering $\L$ is a \emph{strong perfect elimination ordering} if for every $w$, $x$, $y$, $z$ such that $w\infL x\infL y\infL z$, if $wy\in E$, $wz\in E$ and $xy\in E$ then $xz\in E$. A graph is strongly chordal if and only if it has a strong perfect elimination ordering~\cite{Farber83chara}.
Strongly chordal graphs are also characterized as being the $k$-sun free chordal graphs. As a consequence of \autoref{obs_ksun}, we observe that every proper chordal graph is a strongly chordal graph. Moreover the inclusion between proper chordal graphs and strongly chordal is strict as the $5$-fan is strongly chordal, but not proper chordal (see \autoref{obs_interval}).


\section{\FPQ-trees and \FPQ-hierarchies}
\label{sec_FPQ}

Let $\mathcal{P}$ be a set of patterns. In general, a graph $G\in\frak{L}(\mathcal{P})$ admits several $\mathcal{P}$-free layouts. A basic example is the complete graph $K_{\ell}$ on $\ell$ vertices, which is a proper interval graph. It is easy to observe that every layout of $K_{\ell}$ is an indifference layout (i.e. a $\mathcal{P}_{\sf proper}$-free layout), but also a  $\mathcal{P}_{\sf int}$-free layout and a  $\mathcal{P}_{\sf chordal}$-free layout. Let us discuss in more details the case of proper interval graphs and interval graphs. 

\paragraph{Proper interval graphs.} Two vertices $x$ and $y$ of a graph $G$ are \emph{true-twins} if $N[x]=N[y]$. It is easy to see that the true-twin relation is an equivalence relation. If $G$ contains some true-twins, then, by \autoref{lem_indifference_layout}, the vertices of any equivalence class occurs consecutively (and in arbitrary order) in an indifference layout. It follows that for proper interval graphs, the set of indifference layouts depends on the true-twin equivalence classes. Indeed, a proper interval graph $G$ without any pair of true-twins has a unique (up to reversal) indifference layout. 

\paragraph{Interval graphs.} In the case of interval graphs, the set of intersection models, and hence of $\mathcal{P}_{\sf int}$-free layouts, is structured by means of \emph{modules}~\cite{Gallai67}, in a similar way to the true-twin equivalence classes for $\mathcal{P}_{\sf proper}$-free layouts. A subset $M$ of vertices of a graph $G$ is a \emph{module} if for every $x\notin M$, either $M\subseteq N(x)$ or $M\subseteq \overline{N}(x)$. Observe that a true-twin equivalence class is a module. A graph may have exponentially many modules. For example, every subset of vertices of the complete graph is a module. Hsu~\cite{Hsu95} proved that interval graphs having a unique intersection model are those without any trivial module.

The set of modules of a graph forms a so-called \emph{partitive family}~\cite{CheinHM81} and can thereby be represented through a linear size tree, called the \emph{modular decomposition tree}~(see \cite{HabibP10} for a survey on modular decomposition). To recognize interval graphs in linear time, Booth and Lueker~\cite{BoothL76}  introduced the concept of \PQ{}-trees which is closely related to the modular decomposition tree or more generally to the theory of (weakly-)partitive families~\cite{CheinHM81,CunninghamE80}. Basically, a \PQ{}-tree on a set $X$ is a labelled ordered tree having $X$ as its leaf set. Since every ordered tree defines a permutation of its leaf set, by defining an equivalence relation  based on the labels of the node, every \PQ{}-tree can be associated to a set of permutations of $X$.  In the context of interval graphs, $X$ is the set of maximal cliques and a \PQ{}-tree represents the set of so-called \emph{consecutive orderings of the maximal cliques} characterizing interval graphs.

\paragraph{Proper chordal graphs.} As shown by \autoref{fig_several_indifference_treelayouts}, a proper chordal graph can also have several indifference tree-layouts. In order to represent the set of $\mathcal{P}_{\sf proper}$-tree-layouts of a given graph, we will define a structure called  \emph{\FPQ{}-hierarchies}, based on \FPQ{}-trees~\cite{LiottaRT21}, a variant of \PQ{}-trees.

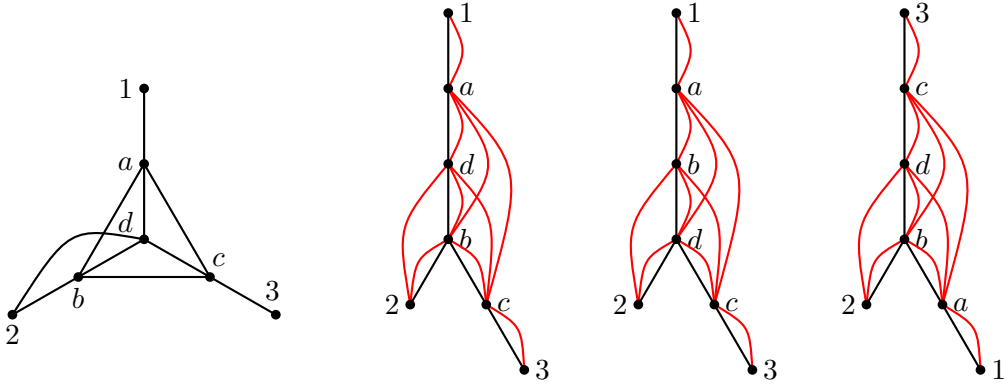
\begin{figure}[htbh]
\begin{center}
\begin{tikzpicture}[thick,scale=1]
\tikzstyle{sommet}=[circle, draw, fill=black, inner sep=1pt, minimum width=2pt]

\begin{scope}[xshift=0cm,yshift=0cm,scale=1,rotate=0]
\draw (0,0) node[sommet]{};
\draw (90:1) node[sommet]{};
\draw (210:1) node[sommet]{};
\draw (330:1) node[sommet]{};
\draw (90:2) node[sommet]{};
\draw (210:2) node[sommet]{};
\draw (330:2) node[sommet]{};

\node[left] (a) at (90:1) {$a$};
\node[left] (1) at (90:2) {$1$};
\node[below] (b) at (210:1) {$b$};
\node[below] (2) at (210:2) {$2$};
\node[right] (c) at (340:0.8) {$c$};
\node[right] (4) at (335:1.6) {$3$};
\node[above] (d) at (190:0.25) {$d$};

\draw (90:1) -- (210:1);
\draw (210:1) -- (330:1);
\draw (330:1) -- (90:1);
\draw[] (0,0) .. controls (170:1) .. (210:2) ;

\draw (0,0) -- (90:1) -- (90:2);
\draw (0,0) -- (210:1) -- (210:2) ;
\draw (0,0) -- (330:1) -- (330:2);

\end{scope}

\begin{scope}[xshift=4cm,yshift=0cm]
\draw (90:3) -- (90:2) -- (90:1) -- (90:0) -- (-60:1) -- (-60:2) ;
\draw (90:0) -- (-120:1) ;
\node[right] (1) at (90:3) {$1$};
\node[right] (a) at (90:2) {$a$};
\node[right] (b) at (90:1) {$d$};
\node[right] (b) at (90:0) {$b$};
\node[right] (c) at (-60:1) {$c$};
\node[right] (4) at (-60:2) {$3$};
\node[left] (2) at (-120:1) {$2$};

\draw[red,thick] (90:3) .. controls (0.25,2.5) .. (90:2) ;
\draw[red,thick] (90:2) .. controls (0.25,1.5) .. (90:1) ;
\draw[red,thick] (90:2) .. controls (0.7,1) .. (90:0) ;
\draw[red,thick] (90:2) .. controls (1,0.95) .. (300:1) ;
\draw[red,thick] (0,0) .. controls (0.25,0.5) .. (90:1) ;

\draw[red,thick] (90:1) .. controls (25:0.7) .. (-60:1) ;

\draw[red,thick] (0,0) .. controls (-150:0.5) .. (-120:1) ;
\draw[red,thick] (90:1) .. controls (165:0.7) .. (-120:1) ;

\draw[red,thick] (0,0) .. controls (-30:0.5) .. (-60:1) ;
\draw[red,thick] (-60:1) .. controls (-50:1.5) .. (-60:2) ;

\draw (90:3) node[sommet]{};
\draw (90:2) node[sommet]{};
\draw (90:1) node[sommet]{};
\draw (90:0) node[sommet]{};
\draw (-60:1) node[sommet]{};
\draw (-60:2) node[sommet]{};
\draw (-120:1) node[sommet]{};

\end{scope}

\begin{scope}[xshift=7cm,yshift=0cm]
\draw (90:3) -- (90:2) -- (90:1) -- (90:0) -- (-60:1) -- (-60:2) ;
\draw (90:0) -- (-120:1)  ;
\node[right] (1) at (90:3) {$1$};
\node[right] (a) at (90:2) {$a$};
\node[right] (b) at (90:1) {$b$};
\node[right] (b) at (90:0) {$d$};
\node[right] (c) at (-60:1) {$c$};
\node[right] (3) at (-60:2) {$3$};
\node[left] (2) at (-120:1) {$2$};

\draw[red,thick] (90:3) .. controls (0.25,2.5) .. (90:2) ;
\draw[red,thick] (90:2) .. controls (0.25,1.5) .. (90:1) ;
\draw[red,thick] (90:2) .. controls (0.7,1) .. (90:0) ;
\draw[red,thick] (90:2) .. controls (1,0.95) .. (300:1) ;
\draw[red,thick] (0,0) .. controls (0.25,0.5) .. (90:1) ;

\draw[red,thick] (90:1) .. controls (25:0.7) .. (-60:1) ;

\draw[red,thick] (0,0) .. controls (-150:0.5) .. (-120:1) ;
\draw[red,thick] (90:1) .. controls (165:0.7) .. (-120:1) ;

\draw[red,thick] (0,0) .. controls (-30:0.5) .. (-60:1) ;
\draw[red,thick] (-60:1) .. controls (-50:1.5) .. (-60:2) ;

\draw (90:3) node[sommet]{};
\draw (90:2) node[sommet]{};
\draw (90:1) node[sommet]{};
\draw (90:0) node[sommet]{};
\draw (-60:1) node[sommet]{};
\draw (-60:2) node[sommet]{};
\draw (-120:1) node[sommet]{};

\end{scope}

\begin{scope}[xshift=10cm,yshift=0cm]
\draw (90:3) -- (90:2) -- (90:1) -- (90:0) -- (-60:1) -- (-60:2) ;
\draw (90:0) -- (-120:1)    ;
\node[right] (1) at (90:3) {$3$};
\node[right] (a) at (90:2) {$c$};
\node[right] (b) at (90:1) {$d$};
\node[right] (b) at (90:0) {$b$};
\node[right] (c) at (-60:1) {$a$};
\node[right] (3) at (-60:2) {$1$};
\node[left] (2) at (-120:1) {$2$};

\draw[red,thick] (90:3) .. controls (0.25,2.5) .. (90:2) ;
\draw[red,thick] (90:2) .. controls (0.25,1.5) .. (90:1) ;
\draw[red,thick] (90:2) .. controls (0.7,1) .. (90:0) ;
\draw[red,thick] (90:2) .. controls (1,0.95) .. (300:1) ;
\draw[red,thick] (0,0) .. controls (0.25,0.5) .. (90:1) ;

\draw[red,thick] (90:1) .. controls (25:0.7) .. (-60:1) ;

\draw[red,thick] (0,0) .. controls (-150:0.5) .. (-120:1) ;
\draw[red,thick] (90:1) .. controls (165:0.7) .. (-120:1) ;

\draw[red,thick] (0,0) .. controls (-30:0.5) .. (-60:1) ;
\draw[red,thick] (-60:1) .. controls (-50:1.5) .. (-60:2) ;

\draw (90:3) node[sommet]{};
\draw (90:2) node[sommet]{};
\draw (90:1) node[sommet]{};
\draw (90:0) node[sommet]{};
\draw (-60:1) node[sommet]{};
\draw (-60:2) node[sommet]{};
\draw (-120:1) node[sommet]{};

\end{scope}

\end{tikzpicture}
\end{center}
\caption{\label{fig_several_indifference_treelayouts} A graph $G$ with three possible indifference tree-layouts, two of them rooted at vertex $1$, the third one at vertex $3$.}
\end{figure}

\subsection{\FPQ{}-trees.}


%

An \emph{\FPQ{}-tree} on the ground set $X$ is a labelled, ordered tree $\mathsf{T}$ such that its leaf set $\mathcal{L}(\mathsf{T})$ is mapped to $X$. The internal nodes of $\mathsf{T}$ are of three types, \textsf{F}-nodes, \textsf{P}-nodes, and \textsf{Q}-nodes. If $|X|=1$, then $\mathsf{T}$ is the tree defined by a leaf and a \textsf{Q}-node as the root. Otherwise,  \textsf{F}-nodes and \textsf{Q}-nodes have at least two children while \textsf{P}-nodes have at least three children.

Let $\mathsf{T}$ and $\mathsf{T'}$ be two \FPQ{}-trees. We say that $\mathsf{T}$ and $\mathsf{T}'$ are isomorphic if they are isomorphic as labelled trees. We say that $\mathsf{T}$ and $\mathsf{T'}$  are \emph{equivalent}, denoted $\mathsf{T}\equiv_{\FPQ{}}\mathsf{T'}$, if one can be turned into a labelled tree isomorphic to the other by a series of the following two operations: $\perm(u)$ which permutes in any possible way the children of a \textsf{P}-node $u$; and $\rev(u)$ which reverses the ordering of the children of a \textsf{Q}-node $u$. It follows that the equivalence class of an \FPQ{}-tree $\mathsf{T}$ on $X$ defines a set $\mathfrak{S}_{\FPQ{}}(\mathsf{T})$ of permutations of $X$.

Let $\mathfrak{S}$ be a subset of permutations of $X$. A subset $I\subseteq X$ is a \emph{factor} of $\mathfrak{S}$ if in every permutation of $\mathfrak{S}$, the elements of $I$ occur consecutively. It is well known that the set of factors of a set of permutations form a so called \emph{weakly-partitive family}~\cite{CheinHM81,CunninghamE80}. As a consequence, we have the following property, which was also proved in~\cite{LuekerB79}.

\begin{lemma} \label{lem_partitive} \cite{CheinHM81,CunninghamE80,LuekerB79}
Let $\mathfrak{S}_\mathsf{T}$ be the subset of permutations of a non-empty set $X$ associated to a \PQ{}-tree $\mathsf{T}$. Then a subset $I\subseteq X$ is a factor of $\mathfrak{S}_\mathsf{T}$ if and only if there exists an internal node $u$ of $\mathsf{T}$ such that 
\begin{itemize}
\item either $I=\mathcal{L}_{\mathsf{T}}(u)$; 
\item or $u$ is a \textsf{Q}-node and there exists a set of children $v_1,\dots, v_s$ of $u$ that are consecutive in $<_{T,u}$ and such that $I=\bigcup_{1\leq i\leq s} \mathcal{L}_{\mathsf{T}}(v_i)$.
\end{itemize}
We observe that $u=\lca_{\mathsf{T}}(I)$.
\end{lemma}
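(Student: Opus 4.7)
\medskip
\noindent\textbf{Proof plan.}
The easy direction is $(\Leftarrow)$. If $I=\mathcal{L}_{\mathsf T}(u)$ for an internal node $u$, then in every $\sigma\in\mathfrak{S}_{\mathsf T}$ the leaves descending from $u$ form a contiguous block, because each $\perm$ or $\rev$ operation permutes entire subtrees of children and hence preserves subtree-contiguity inductively. If $u$ is a \textsf{Q}-node and $v_1,\dots,v_s$ are consecutive in $\sigma^u_{(T,u)}$, then in any tree equivalent to $\mathsf{T}$ the children of $u$ appear in the order $v_1,\dots,v_s$ or reversed, so the blocks $\mathcal{L}_{\mathsf T}(v_1),\dots,\mathcal{L}_{\mathsf T}(v_s)$ stay consecutive (in one of the two orders), and each block is itself contiguous by the previous observation.

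For $(\Rightarrow)$ I assume $|I|\ge 2$ (a single leaf is trivially a factor, with $u=\lca_{\mathsf T}(I)$ its parent if one wants an internal node). Set $u=\lca_{\mathsf T}(I)$, let $v_1,\dots,v_p$ be the children of $u$ (in the ordering imposed by $\sigma^u_{(T,u)}$ if $u$ is a \textsf{Q}-node), and write $I_i=I\cap\mathcal{L}_{\mathsf T}(v_i)$ and $J=\{i:I_i\neq\emptyset\}$. By definition of the \textsf{lca}, $|J|\ge 2$. The proof proceeds in two steps:
\begin{enumerate}
\item[(a)] Show $I_i=\mathcal{L}_{\mathsf T}(v_i)$ for every $i\in J$.
\item[(b)] Show that either $J=\{1,\dots,p\}$ (giving $I=\mathcal{L}_{\mathsf T}(u)$), or $u$ is a \textsf{Q}-node and $J$ is a set of consecutive indices.
\end{enumerate}

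For step (a), I would argue by contradiction: pick $i\in J$ with $I_i\subsetneq\mathcal{L}_{\mathsf T}(v_i)$, take $x\in I_i$, $y\in\mathcal{L}_{\mathsf T}(v_i)\setminus I$ and $z\in I_j$ for some $j\in J\setminus\{i\}$. In every $\sigma\in\mathfrak{S}_{\mathsf T}$ the block $B_i$ corresponding to $\mathcal{L}_{\mathsf T}(v_i)$ is contiguous and disjoint from $z$; since $I$ is a factor, $I\cap B_i$ must form the end of $B_i$ on the side adjacent to the rest of $I$. The key lemma is a sub-claim stating that within $\mathcal{L}_{\mathsf T}(v_i)$ we can, using only $\perm/\rev$ operations inside the subtree rooted at $v_i$, realise an internal order of $B_i$ in which $y$ appears on the side of $z$ and $x$ on the opposite side; combined with a top-level arrangement that places $v_j$ on the required side of $v_i$, this yields a permutation of $\mathfrak{S}_{\mathsf T}$ where $y\notin I$ separates $x$ and $z$ inside $I$'s would-be block, contradicting that $I$ is a factor. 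Step (b) is then short: if $u$ is a \textsf{P}-node we may, whenever $J\neq\{1,\dots,p\}$, insert a child $v_k$ with $k\notin J$ between two children indexed by $J$ to destroy consecutiveness; if $u$ is a \textsf{Q}-node, the ordering of its children is fixed up to reversal, and the assumption that $I$ is a factor forces $J$ to be an interval of $\{1,\dots,p\}$.

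The main obstacle is the sub-claim used in step (a): showing that within the subtree rooted at $v_i$ one can always place a prescribed leaf $x$ on a prescribed side of $B_i$ while keeping another leaf $y$ elsewhere. This requires an induction on the depth of the subtree, with care in the \textsf{Q}-node case where only reversal (not arbitrary permutation) is permitted, so $x$ can only be brought to an extremity of its parent's block if its ancestor path allows it; the right formulation is that for any pair $x\neq y$ in $\mathcal{L}_{\mathsf T}(v_i)$, some admissible ordering puts $x$ strictly between $y$ and one of the two extremities of $B_i$, which is exactly what is needed to prevent $I\cap B_i$ from being a consistent prefix/suffix across all admissible orderings.
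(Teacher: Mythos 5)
The paper does not prove this lemma: it is imported from the literature on weakly partitive families (\cite{CheinHM81,CunninghamE80,LuekerB79}), so there is no in-paper argument to compare against. Judged on its own, your plan has the right architecture: the backward direction is correct, the reduction of the forward direction to the two steps (a) and (b) via $u=\lca_{\mathsf T}(I)$ and $J=\{i : I\cap\mathcal{L}_{\mathsf T}(v_i)\neq\emptyset\}$ is the standard one, and step (b) is handled correctly.

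The problem is the piece you yourself flag as the ``main obstacle'': the sub-claim driving step (a) is left unproved, and as you formulate it (``some admissible ordering puts $x$ strictly between $y$ and one of the two extremities of $B_i$'') it is actually \emph{false} in general --- take $\mathcal{L}_{\mathsf T}(v_i)=\{x,y\}$, so that $x$ and $y$ are themselves the two extremities and nothing lies strictly between $y$ and either extremity. The induction on the depth of $v_i$ that you sketch would therefore not go through as stated, and in any case it aims at controlling the interior of $B_i$, which you cannot do (a \textsf{Q}-node deep inside $v_i$ may pin $x$ to the middle of its block) and do not need to do. What step (a) actually requires is weaker and needs no induction: fix \emph{any} admissible ordering $\sigma$; inside the interval $B_i$ the leaf $y$ falls on one of the two sides of $x$. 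Now apply a single operation at the node $u$ itself --- $\perm(u)$ if $u$ is a \textsf{P}-node, $\rev(u)$ if it is a \textsf{Q}-node --- to place the child subtree containing $z$ on that same side of $v_i$. Since the least common ancestor of any two leaves of $B_i$ lies strictly below $u$, this operation does not disturb the internal order of $B_i$; it only flips the position of $z$ relative to the whole block. In the resulting admissible ordering $y$ lies strictly between $x$ and $z$, both of which are in $I$ while $y\notin I$, contradicting that $I$ is a factor. With this one-line replacement for your sub-claim the proof is complete. (A minor remark on the statement rather than your proof: a singleton $I=\{x\}$ is always a factor but is only captured by the second bullet with $s=1$ when the parent of $x$ is a \textsf{Q}-node; the lemma is implicitly about factors of size at least two, as your assumption $|I|\ge 2$ correctly reflects.)
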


Given a set $\mathcal{S} \subseteq 2^{X}$ of subsets of the ground set $X$, we let ${\sf Convex}(\mathcal{S})$ denote the set of permutations of $X$ such that for every $S \in \mathcal{S}$, $S$ is a factor of ${\sf Convex}(\mathcal{S})$.

\begin{lemma}\label{lem_pq_algo} 
Let $X$ be a non-empty set and let ${\cal S} \subseteq 2^{X}$. In  linear time in $|\mathcal{S}|$, we can compute a \PQ{}-tree $\mathsf{T}$ on $X$ such that $\mathfrak{S}_{\mathsf{T}}= {\sf Convex}({\cal S})$ or decide that ${\sf Convex}({\cal S})=\emptyset$.
\end{lemma}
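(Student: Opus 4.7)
The plan is to reduce the statement to the classical construction of Booth and Lueker~\cite{BoothL76}, which is essentially tailored to exactly this problem. Concretely, I would initialize $\mathsf{T}_0$ to be the trivial \PQ{}-tree on $X$ consisting of a single \textsf{P}-node whose children are the $|X|$ leaves labelled by the elements of $X$; by definition $\mathfrak{S}_{\mathsf{T}_0}$ is the set of all permutations of $X$, so $\mathfrak{S}_{\mathsf{T}_0}\supseteq {\sf Convex}(\mathcal{S})$. Then I would enumerate the constraints $S_1,\dots,S_m$ of $\mathcal{S}$ and iteratively apply the REDUCE procedure of Booth and Lueker to obtain a sequence of \PQ{}-trees $\mathsf{T}_1,\dots,\mathsf{T}_m$ with the invariant that $\mathfrak{S}_{\mathsf{T}_i}$ is exactly the subset of $\mathfrak{S}_{\mathsf{T}_{i-1}}$ in which $S_i$ appears as a factor. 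If some call to REDUCE fails, then no permutation of $\mathfrak{S}_{\mathsf{T}_{i-1}}$ can realize $S_i$ as a factor, and we report ${\sf Convex}(\mathcal{S})=\emptyset$; otherwise we output $\mathsf{T}=\mathsf{T}_m$.

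Correctness follows from the defining property of REDUCE together with a straightforward induction: by the invariant, $\mathfrak{S}_{\mathsf{T}_m}$ is exactly the set of permutations of $X$ in which every $S_i$ occurs as a factor, which by definition is ${\sf Convex}(\mathcal{S})$. Conversely, if some REDUCE fails at step $i$, then ${\sf Convex}(\{S_1,\dots,S_i\})=\emptyset$, hence a fortiori ${\sf Convex}(\mathcal{S})=\emptyset$.

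For complexity, Booth and Lueker~\cite{BoothL76} show that, with the amortized analysis based on templates, the total time of the $m$ REDUCE calls is $O(|X|+\sum_{i=1}^{m}|S_i|)$, which is linear in $|\mathcal{S}|$ under the convention that $|\mathcal{S}|$ counts the total size of the input family. Since our intended use (building and manipulating \FPQ{}-hierarchies later on) only differs by the presence of \textsf{F}-nodes—which play no role in the classical reduction since we start from a pure \textsf{P}-node—no modification of the Booth--Lueker machinery is needed here. The only conceptual obstacle is really bookkeeping, namely making sure that the input specification of $\mathcal{S}$ (as a list of elements grouped by constraint) matches the format required by REDUCE; this is routine.
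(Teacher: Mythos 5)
Your proposal is correct and takes essentially the same route as the paper: the paper simply invokes known consecutive-ones/\PQ{}-tree machinery (citing the partition-refinement algorithm of Habib et al.\ rather than Booth--Lueker's REDUCE, but these are interchangeable for this purpose). Your more explicit description of the iterative REDUCE invariant is a faithful expansion of what the paper leaves implicit.
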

\begin{proof}
This follows from known recognition algorithms of convex bipartite graphs (see~\cite{HabibMPV00}).
\end{proof}

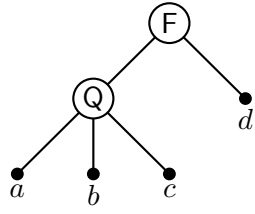
\begin{figure}[ht]
\begin{center}
\begin{tikzpicture}[thick,scale=1]
\tikzstyle{sommet}=[circle, draw, fill=black, inner sep=0pt, minimum width=4pt]          
\tikzstyle{Onode}=[circle, draw, fill=white, inner sep=0pt, minimum width=15pt]          
\tikzstyle{Pnode}=[circle, draw, fill=white, inner sep=0pt, minimum width=15pt]          
\tikzstyle{Qnode}=[circle, draw, fill=white, inner sep=0pt, minimum width=15pt]          

\begin{scope}[xshift=0cm,yshift=0cm]

\draw (0,0) -- (1,1);
\draw (1,1) -- (2,2);
\draw (1,1) -- (1,0);
\draw (1,1) -- (2,0);
\draw (2,2) -- (3,1);

\draw (0,0) node[sommet]{};
\draw (1,0) node[sommet]{};
\draw (2,0) node[sommet]{};
\draw (1,1) node[Qnode]{};
\draw (3,1) node[sommet]{};
\draw (2,2) node[Onode]{};

\node[below] (a) at (0,0) {$a$};
\node[below] (b) at (1,0) {$b$};
\node[below] (c) at (2,0) {$c$};
\node[below] (d) at (3,1) {$d$};
\node[] (o) at (2,2) {\textsf{F}};
\node[] (p) at (1,1) {\textsf{Q}};
\end{scope}

\end{tikzpicture}
\end{center}
\caption{\label{fig_OPQ} An \FPQ-tree $\mathsf{T}$ with $\mathfrak{S}_{\FPQ}(\mathsf{T})=\{abcd,cbad\}$. The set of non-trivial common factors of $\mathfrak{S}_\mathsf{T}$ is $\mathcal{I}=\{\{a,b,c\},\{a,b\},\{b,c\}\}$. But observe that ${\sf Convex}(\mathcal{I})$ also contains the permutations $dabc$ and $dcba$ as well.}
\end{figure}

A set $\mathcal{N} \subseteq 2^{X}$ of subsets of $X$ is \emph{nested} if for every $Y,Z\in\mathcal{N}$, either $Y\subseteq Z$ or $Z\subseteq Y$.
Let $\mathcal{C}$ be a collection of nested sets $\mathcal{N}_i\subseteq 2^X$ ($1\leq i\leq k$), denoted $\mathcal{C}=\langle \mathcal{N}_1,\dots,\mathcal{N}_k\rangle$. Observe that a subset $Y\subseteq X$ may occur in several nested sets of $\mathcal{C}$. We let  $\mathcal{S}\subseteq 2^{X}$ denote the union of the nested sets of $\mathcal{C}$, that is $\mathcal{S}=\bigcup_{1\leq i\leq k}\mathcal{N}_i$. We say that a permutation $\sigma\in\mathsf{Convex}(\mathcal{S})$ is $\mathcal{C}$-\emph{nested} if for every $1\leq i\leq k$ and every pair of sets $Y,Z\in\mathcal{N}_i$ such that $Z\subset Y$,  then $Y\setminus Z\prec_{\sigma} Z$. We let $\textsf{Nested-Convex}(\mathcal{C},\mathcal{S})$ denote the subset of permutations of $\textsf{Convex}(\mathcal{S})$ that are $\mathcal{C}$-nested.

\begin{lemma}\label{lem_OPQ_nested_convex}
Let ${\cal C}=\langle \mathcal{N}_1,\dots,\mathcal{N}_k\rangle$ be a collection of nested sets such that for every $1\leq i\leq k$, $\mathcal{N}_i\subset 2^X$. If $\textsf{Nested-Convex}(\mathcal{C},\mathcal{S})\neq\emptyset$, with $\mathcal{S}=\bigcup_{1\leq i\leq k}\mathcal{N}_i$, then there exists an \FPQ{}-tree $\mathsf{T}$ on $X$ such that $\mathfrak{S}_{\mathsf{T}}= \textsf{Nested-Convex}(\mathcal{C},{\cal S})$. Moreover, such an \FPQ-tree, when it exists, can be computed in polynomial time.
\end{lemma}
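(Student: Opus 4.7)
My plan is to reduce the problem to~\autoref{lem_pq_algo} and then perform a sequence of local refinements. First I would invoke~\autoref{lem_pq_algo} on the family $\mathcal{S}$ to obtain a \PQ{}-tree $\mathsf{T}_0$ on $X$ with $\mathfrak{S}_{\mathsf{T}_0}=\mathsf{Convex}(\mathcal{S})$; this step is well defined because $\textsf{Nested-Convex}(\mathcal{C},\mathcal{S})\subseteq\mathsf{Convex}(\mathcal{S})$ is non-empty by hypothesis. The remaining task is to refine $\mathsf{T}_0$ into an \FPQ{}-tree $\mathsf{T}$ whose equivalence class additionally encodes the orientation constraints coming from each nested family $\mathcal{N}_i$.

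Next I would process each chain $\mathcal{N}_i=\{Y_{i,1}\supsetneq\dots\supsetneq Y_{i,t_i}\}$ independently (by hypothesis $\mathcal{N}_i$ is totally ordered by inclusion). For each consecutive pair $Y_{i,j}\supsetneq Y_{i,j+1}$, the constraint asserts that $Y_{i,j+1}$ must occupy one specific (rightmost) end of the factor $Y_{i,j}$ in every admissible permutation. Using~\autoref{lem_partitive}, I would locate both factors in the current tree: each is either the leaf set of some internal node or a consecutive range of children of a \textsf{Q}-node. Then I would walk the unique path in the tree from the location of $Y_{i,j+1}$ up toward the location of $Y_{i,j}$, fixing the local orientation at each node encountered. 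More concretely, a \textsf{Q}-node whose two reversals are now distinguishable by the constraint is replaced by an \textsf{F}-node with the same, now unreversible, sequence of children; and a \textsf{P}-node for which one child $v$ must be placed last is replaced by an \textsf{F}-node with two children, the first gathering the remaining siblings as a \textsf{P}-node (or a \textsf{Q}-node, if only two remain) and the second being $v$. The boundary case in which $Y_{i,j}$ is a range of children of a \textsf{Q}-node is treated by fixing the orientation of that \textsf{Q}-node and, if $Y_{i,j+1}$ is strictly contained in the range, recursing inside the rightmost child of the range.

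For correctness I would establish the two inclusions $\mathfrak{S}_\mathsf{T}\subseteq\textsf{Nested-Convex}(\mathcal{C},\mathcal{S})\subseteq\mathfrak{S}_\mathsf{T}$. The first is immediate, since each refinement step only restricts the set of represented permutations, so $\mathfrak{S}_\mathsf{T}\subseteq\mathfrak{S}_{\mathsf{T}_0}=\mathsf{Convex}(\mathcal{S})$, and by construction every nested ordering $Y_{i,j}\setminus Y_{i,j+1}\prec Y_{i,j+1}$ is enforced by the inserted \textsf{F}-nodes. For the second, any $\sigma\in\textsf{Nested-Convex}(\mathcal{C},\mathcal{S})$ lies in $\mathfrak{S}_{\mathsf{T}_0}$ and satisfies every nested constraint, so it agrees with each orientation I fixed and hence belongs to $\mathfrak{S}_\mathsf{T}$.

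The main obstacle will be proving consistency of the refinement, i.e., that no \textsf{Q}- or \textsf{P}-node is forced into two mutually incompatible orientations by successive constraints. This is precisely where the non-emptiness hypothesis comes in: any witness $\sigma\in\textsf{Nested-Convex}(\mathcal{C},\mathcal{S})$ certifies that every enforced orientation is simultaneously realizable; conversely, if the procedure ever detects a contradictory fixing, it correctly reports $\textsf{Nested-Convex}(\mathcal{C},\mathcal{S})=\emptyset$, covering the ``when it exists'' clause of the statement. Polynomial running time finally follows because~\autoref{lem_pq_algo} is linear, $\mathsf{T}_0$ has size $O(|X|+|\mathcal{S}|)$, and each of the polynomially many nested pairs is processed in polynomial time via an \lca{} computation and a path walk bounded by the tree depth.
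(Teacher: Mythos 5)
Your proof is sound in outline but follows a genuinely different route from the paper's. The paper first \emph{augments} the family: for each non-trivial chain $\mathcal{N}_i$ it adds the difference set $S_i'=S_i^{\max}\setminus S_i^{\min}$ to $\mathcal{S}$ and only then runs the \PQ-tree algorithm of \autoref{lem_pq_algo}. The point of this augmentation is that it forces the split between $S_i^{\min}$ and its complement to surface at a single node of the tree, and a short counting argument (the paper's Claim~\ref{cl_Pnode}) shows that node cannot be a \textsf{P}-node; hence one freeze of a \textsf{Q}-node into an \textsf{F}-node per chain suffices, the intermediate members of the chain being aligned automatically because they are already factors nested between $S_i^{\min}$ and $S_i^{\max}$. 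You instead build the \PQ-tree of $\mathcal{S}$ alone and perform the restructuring by hand, walking the lca-path for every consecutive pair of every chain, freezing \textsf{Q}-nodes and splitting \textsf{P}-nodes into \textsf{F}-nodes. This is more constructive and avoids the paper's Claim, but it is also where the extra burden lies: you need (i) a version of \autoref{lem_partitive} valid for the intermediate \FPQ-trees produced by your surgery (or an argument that factors of $\mathcal{S}$ keep their location through the surgery), since the factor--node correspondence changes once you introduce new internal nodes; (ii) the explicit (easy) observation that enforcing only consecutive pairs of each chain suffices, the remaining pairs following by transitivity of the suffix relation among nested factors; and (iii) the fact that each local fixing is a \emph{necessary} condition on every member of $\textsf{Nested-Convex}(\mathcal{C},\mathcal{S})$ — which is what makes your non-emptiness witness rule out conflicting fixings. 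You state (iii) correctly but leave (i) and (ii) implicit; with those spelled out your argument is complete, at the cost of noticeably more case analysis than the paper's augment-then-freeze strategy.
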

\begin{proof}

We first prove the following claim. Let $X$ be a non-empty set and consider ${\cal S} \subseteq 2^{X}$ containing two subsets $S_{1}$, $S_{2}$ of $S$ such that $S_{1} \subset S_{2}$. We set $S' = S_{2} \setminus S_{1}$ and $\mathcal{S}'={\cal S}\cup\{S'\}$. 

\begin{claim}
${\sf Convex}({\cal S}') = \{ \sigma\in {\sf Convex}({\cal S}) \mid \emph{$S_{2} \setminus S_{1} \prec_{\sigma} S_{1}$ or $S_{1}\prec_{\sigma} S_{2} \setminus S_{1}$}\}.$
\end{claim}

First observe that if $\sigma\in {\sf Convex}( {\cal S})$, satisfies $S_{2} \setminus S_{1} \prec_{\sigma} S_{1}$ or $S_{1} \prec_{\sigma} S_{2} \setminus S_{1}$, then $\sigma\in\textsf{Convex}({\cal S}')$. Now consider $\sigma' \in {\sf Convex}({\cal S}')$. Since $S_{1}$ and $S_{2} \setminus S_{1} $ are disjoint sets that both appear as intervals in $\sigma'$, we have that either $S_{1} \prec_{\sigma'} S_{2} \setminus S_{1} $ or $S_{2} \setminus S_{1}  \prec_{\sigma'} S_{1}$, proving the claim. 

\smallskip
Let us now consider a collection  ${\cal C}=\langle \mathcal{N}_1,\dots,\mathcal{N}_k\rangle$ of nested sets on $X$. Let $\mathcal{N}_i\in\mathcal{C}$ be a non trivial nested set, i.e. containing at least two subsets of $X$. Let $S_i^{\min}$ and $S_i^{\max}$ be respectively the largest and the smallest subset of $\mathcal{N}_i$. We set $S'_i=S_i^{\max}\setminus S_i^{\min}$. For every non-trivial nested set $\mathcal{N}_i$, we add to $\mathcal{S}$ the subset $S_i'$ resulting in $\mathcal{S}'\in 2^{X}$. By the observation above, we have $\textsf{Nested-Convex}(\mathcal{C},\mathcal{S})\subseteq\textsf{Convex}({\cal S}') $. By \autoref{lem_pq_algo}, we can compute (in linear time) a \PQ-tree $\mathsf{T}_1$ such that $\mathfrak{S}_{\T_1}=\textsf{Convex}(\mathcal{S}')$.

To compute an \FPQ{}-tree $\mathsf{T}$ on $X$ such that $\mathfrak{S}_{\mathsf{T}}= \textsf{Nested-Convex}(\mathcal{C},{\cal S})$,  the rest of the algorithm consists in freezing some \textsf{Q}-nodes of $\mathsf{T}_1$ into \textsf{F}-nodes. To that aim, we first prove the following claim.

\begin{claim} \label{cl_Pnode}
If $\mathcal{N}_i\in\mathcal{C}$ is non-trivial, then in $\mathsf{T}_1$, the least common ancestor $u$ of the elements of $S_i^{\max}$ is a \textsf{Q}-node. 
\end{claim}

Suppose towards a contradiction that $u$ is a \textsf{P}-node. By \autoref{lem_partitive}, this implies that $\mathcal{L}_{\mathsf{T}_1}(u)=S_i^{\max}$. By definition of $S_i'$, $S_i'$ and $S_i^{\min}$ partition the set $S_i^{\max}$, and thereby $\mathcal{L}_{\mathsf{T}_1}(u)$. Let $u_1$ and $u_2$ be the least common ancestors of the elements of $S_i^{\min}$ and of the elements of $S_i'$, respectively. Observe that for \autoref{lem_partitive} to also holds for $S'_i$ and $S_i^{\min}$, $u_1$ and $u_2$ have to be two distinct children of $u$. But as a \textsf{P}-node, $u$ has at least three children, implying that $\mathcal{L}_{\mathsf{T}_1}(u_1)\cup \mathcal{L}_{\mathsf{T}_1}(u_2)\neq \mathcal{L}_{\mathsf{T}_1}(u)$: a contradiction.

\smallskip
Claim \ref{cl_Pnode} allows to process $\mathsf{T}_1$ and $\mathcal{C}$ to compute the desired \FPQ-tree $\mathsf{T}$ as follows. Initially, we set $\mathsf{T}=\mathsf{T}_1$. For every non-trivial nested set $\mathcal{N}_i\in \mathcal{C}$, let $u_i$ be the node that is the least common ancestor of the element of $\mathcal{S}_i^{\max}$. By Claim \ref{cl_Pnode}, $u_i$ is not a \textsf{P}-node. If $u_i$ is a \textsf{Q}-node, then we freeze it into an \textsf{F}-node so that for every permutation $\sigma\in\mathfrak{S}_{\mathsf{T}}$ we have that $S_i'\prec_{\sigma} S_i^{\min}$. Otherwise, $u_i$ is an \textsf{F}-node and we check if in every permutation $\sigma\in\mathfrak{S}_{\mathsf{T}}$ we have that $S_i'\prec_{\sigma} S_i^{\min}$. This can be done easily in polynomial time.
\end{proof}

\subsection{\FPQ-hierarchies.} 

A \emph{hierarchy of ordered trees} $\H$ is defined on a set $\mathcal{T}=\{T_0,T_1,\dots, T_p\}$ of non-trivial ordered trees arranged in an edge-labelled tree, called the \emph{skeleton tree} $S_{\H}$. 
More formally, for $0<i\leq p$, the root $r_i$ of $T_i$ is attached, through a \emph{skeleton edge} $e_i$, to an internal node $f_i$ of some tree $T_j$ with $j<i$. Suppose that $e_i=r_if_i$ is the sketelon edge linking the root $r_i$ of $T_i$ to a node $f_i$ of $T_j$ having $c$ children. Then the label of $e_i$ is a pair of integers $I(e_i)=(a_i,b_i)\in [c]\times[c]$ with $a_i\leq b_i$. The contraction of the trees of $\mathcal{T}$ in a single node each, results in the skeleton tree $S_{\H}$.



From a hierarchy of ordered trees $\H$, we define a rooted tree $T_{\H}$ whose node set is $\bigcup_{0\leq i\leq p}\mathcal{L}(T_i)$ and that is built as follows. The root of $T_{\H}$ is $\ell_0$ the first leaf of $\mathcal{L}(T_0)$ in $\sigma_{T_0}$. For every $0\leq i\leq p$, the permutation $\sigma_{T_i}$ of $\mathcal{L}(T_i)$, defined by $T_i$, is a path of $T_{\H}$. Finally, for $1\leq i\leq p$, let $\ell_i$ be the first leaf of $\mathcal{L}(T_i)$ in $\sigma_{T_i}$. Suppose $T_i$ is connected in $\H$ to $T_j$ through the skeleton edge $e_i=r_if_i$ with label $I(e_i)=(a_i,b_i)$. Let $u_j$ the $b_i$-th child of $f_i$ in $T_j$ and $\ell$ be the leaf of $T_j$ that is a descendant of $u_j$ and largest in $\sigma_{T_j}$. Then set $\ell$ as the parent of $\ell_i$ in $T_{\H}$. See \autoref{fig_OPQ_hierarchy} for an example. 

An \emph{\FPQ-hierarchy} is a hierarchy of \FPQ-trees with an additional constraint on the labels of the skeleton edges. Let $e_i=r_if_i$ be the skeleton from the root $r_i$ of $T_i$ to the node $f_i$ of $T_j$ with $j\leq i$. If $f_j$ is a \textsf{P}-node with $c$ children, then $I(e_i)=(1,c)$. As in the case of \FPQ-trees, we say two \FPQ-hierarchies $\mathsf{H}$ and $\mathsf{H'}$ are isomorphic if they are isomorphic as labeled ordered trees. That is the types of the nodes, the skeleton edges and their labels are preserved. We say that $\mathsf{H}$ and $\mathsf{H'}$ are equivalent, denoted $\mathsf{H}\approx_{\FPQ}\mathsf{H'}$, if one can be turned into an \FPQ-hierarchy isomorphic to the other by a series of $\perm(u)$ and $\rev(u)$ operations (with $u$ being respectively a \textsf{P}-node and a \textsf{Q}-node) to modify relative ordering of the tree-children of $u$. Suppose that $u$ is a \textsf{Q}-node with $c$ children incident to a skeleton edge $e$. Then applying $\rev(u)$ transforms $I(e)=(a,b)$ into the new label $I^c(e)=(c+1-b,c+1-a)$. It follows that the equivalence class of an \FPQ{}-hierarchy $\mathsf{H}$ on the set $\mathcal{T}=\{\mathsf{T}_0,\mathsf{T}_1,\dots, \mathsf{T}_p\}$ of \FPQ-trees defines a set $\mathfrak{T}_{\FPQ{}}(\mathsf{H})$ of rooted trees on $\bigcup_{0\leq i\leq p}\mathcal{L}(\mathsf{T}_i)$. Observe that since reversing a \textsf{Q}-node modifies the labels of the incident skeleton edges, two rooted trees of $\mathfrak{T}_{\FPQ{}}(\mathsf{H})$ may not be isomorphic (see \autoref{fig_OPQ_hierarchy}).

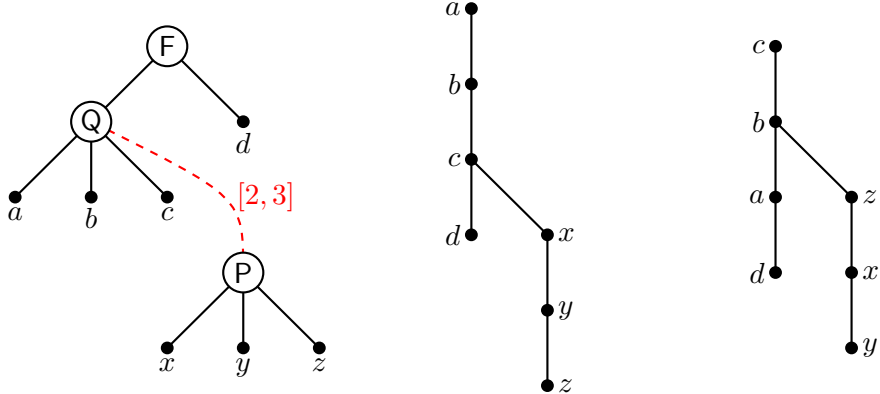
\begin{figure}[ht]
\begin{center}
\begin{tikzpicture}[thick,scale=1]
\tikzstyle{sommet}=[circle, draw, fill=black, inner sep=0pt, minimum width=4pt]          
\tikzstyle{Onode}=[circle, draw, fill=white, inner sep=0pt, minimum width=15pt]          
\tikzstyle{Pnode}=[circle, draw, fill=white, inner sep=0pt, minimum width=15pt]          
\tikzstyle{Qnode}=[circle, draw, fill=white, inner sep=0pt, minimum width=15pt]

\draw[red,thick,dashed] (3,-1) .. controls (3,0) .. (1,1) ;
\node[red] (a) at (3.3,0) {$[2,3]$};

\begin{scope}[xshift=0cm,yshift=0cm]

\draw (0,0) -- (1,1);
\draw (1,1) -- (2,2);
\draw (1,1) -- (1,0);
\draw (1,1) -- (2,0);
\draw (2,2) -- (3,1);


\draw (0,0) node[sommet]{};
\draw (1,0) node[sommet]{};
\draw (2,0) node[sommet]{};
\draw (3,1) node[sommet]{};
\draw (1,1) node[Pnode]{};
\node[] (p) at (1,1) {\textsf{Q}};
\draw (2,2) node[Onode]{};
\node[] (o) at (2,2) {\textsf{F}};


\node[below] (a) at (0,0) {$a$};
\node[below] (b) at (1,0) {$b$};
\node[below] (c) at (2,0) {$c$};
\node[below] (d) at (3,1) {$d$};

\end{scope}

\begin{scope}[xshift=1cm,yshift=0cm]

\draw (2,-1) -- (1,-2);
\draw (2,-1) -- (2,-2);
\draw (2,-1) -- (3,-2);

\draw (1,-2) node[sommet]{};
\draw (2,-2) node[sommet]{};
\draw (3,-2) node[sommet]{};
\draw (2,-1) node[Qnode]{};
\node[] (q) at (2,-1) {\textsf{P}};

\node[below] (e) at (1,-2) {$x$};
\node[below] (f) at (2,-2) {$y$};
\node[below] (g) at (3,-2) {$z$};
\end{scope}

\begin{scope}[xshift=6cm,yshift=-0.5cm]
\draw (0,3) node[sommet]{};
\draw (0,2) node[sommet]{};
\draw (0,1) node[sommet]{};
\draw (0,0) node[sommet]{};

\draw (1,0) node[sommet]{};
\draw (1,-1) node[sommet]{};
\draw (1,-2) node[sommet]{};

\draw (0,3) -- (0,0);
\draw (0,1) -- (1,0);
\draw (1,0) -- (1,-2);

\node[left] (a) at (0,3) {$a$};
\node[left] (a) at (0,2) {$b$};
\node[left] (a) at (0,1) {$c$};
\node[left] (a) at (0,0) {$d$};
\node[right] (a) at (1,0) {$x$};
\node[right] (a) at (1,-1) {$y$};
\node[right] (a) at (1,-2) {$z$};
\end{scope}

\begin{scope}[xshift=10cm,yshift=-1cm]
\draw (0,3) node[sommet]{};
\draw (0,2) node[sommet]{};
\draw (0,1) node[sommet]{};
\draw (0,0) node[sommet]{};

\draw (1,1) node[sommet]{};
\draw (1,0) node[sommet]{};
\draw (1,-1) node[sommet]{};

\draw (0,3) -- (0,0);
\draw (0,2) -- (1,1);
\draw (1,1) -- (1,-1);

\node[left] (a) at (0,3) {$c$};
\node[left] (a) at (0,2) {$b$};
\node[left] (a) at (0,1) {$a$};
\node[left] (a) at (0,0) {$d$};
\node[right] (a) at (1,1) {$z$};
\node[right] (a) at (1,0) {$x$};
\node[right] (a) at (1,-1) {$y$};
\end{scope}

\end{tikzpicture}
\end{center}
\caption{\label{fig_OPQ_hierarchy} An \FPQ-hierarchy $\mathsf{H}$. The  set $\mathfrak{T}_{\FPQ}(\mathsf{H})$ contains 12 rooted trees, two of which are depicted.
Observe that from the left to the right tree, the ordering on the leaves of the \textsf{Q}-node is reversed and  that the ordering on the leaves of the \textsf{P}-nodes are different. In both trees however, the path containing $\{x,y,z\}$ is attached below  the leaves $\{b,c\}$ since these leaves form the interval $[2,3]$ of the \textsf{Q}-node and this interval is  the label  of the unique skeleton edge.
}
\end{figure}

%


\section{Compact representation of the set of indifference tree-layouts}

In this section, we show how, for a given proper chordal graph $G$, an \FPQ{}-hierarchy $\H$ can be constructed to represent for a given vertex $x \in V$ the set of indifference tree-layout rooted at a vertex $x$ (if such an indifference tree-layout exists). To that aim, we first provide a characterization of indifference tree-layout alternative to~\autoref{lem_indifference_treelayout}. This characterization naturally leads us to define the notion of \emph{block} that for a fixed vertex $x$ of a proper chordal graph, drives the structure and the combinatorics of the set of indifference tree-layouts rooted at $x$.

\subsection{Blocks and indifference tree-layouts}

Let $S$ be a non-empty vertex subset of a connected graph $G=(V,E)$ and let $C$ be a connected component of $G-S$. We say that $x\in C$ is \emph{$S$-maximal} if for every vertex $y\in C$, $N(y)\cap S\subseteq N(x)\cap S$. Observe that if $C$ contains two distinct $S$-maximal vertices $x$ and $y$, then $N(x)\cap S=N(y)\cap S$. 

\begin{definition}\label{def_block}
Let $S$ be a subset of vertices of a graph $G=(V,E)$ and let $C$ be a connected component of $G-S$. A maximal subset of vertices $X\subseteq C$ is an \emph{$S$-block}, if every vertex of $X$ is $S$-maximal and $(N(S)\cap C)$-universal. 
\end{definition}

We let the reader observe that a connected component $C$ of $G-S$ may not contain an $S$-block. 
However, if $C$ contains an $S$-block, then it is uniquely defined. 

\begin{lemma} \label{lem_tree_layout}
Let $\T_G=(T,r,\rho_G)$ be a tree-layout of a graph $G=(V,E)$. Then $\T_G$ is an indifference tree-layout if and only if for every vertex $x$ distinct from $\rho_G^{-1}(r)$, $x$ is $A_{\T_G}(x)$-maximal and $(N(A_{\T_G}(x))\cap D_{\T_G}[x])$-universal.
\end{lemma}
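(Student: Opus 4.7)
The plan is to exploit the equivalence between items (1) and (4) of \autoref{lem_indifference_treelayout}, which lets me shuttle between the indifference property and local neighbourhood containments along ancestor chains.

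For the forward direction, I would fix $x \neq \rho_G^{-1}(r)$ and set $S = A_{\T_G}(x)$. The universality condition is almost immediate: if $y \in N(S) \cap D_{\T_G}[x]$ with $y \neq x$, then $y$ is a strict descendant of $x$ that has some neighbour $z \in S$, so $z \prec_{\T_G} x$, and applying item~(4) of \autoref{lem_indifference_treelayout} to the pair $z \prec_{\T_G} x$ in the form $N(z) \cap D_{\T_G}(x) \subseteq N(x) \cap D_{\T_G}(x)$ yields $xy \in E$.

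The maximality condition requires a bit more care, since it quantifies over the connected component $C$ of $G-S$ containing $x$. The key subclaim, which uses only that $\T_G$ is a tree-layout and not yet any indifference property, is that $C \setminus \{x\} \subseteq D_{\T_G}(x)$. I would prove this by induction on the length of a shortest path $x = v_0, v_1, \dots, v_k = v$ in $G - S$: a neighbour $v_1$ of $x$ outside $S$ cannot be an ancestor of $x$ and is therefore a strict descendant; for the inductive step, $v_{i+1}$ is comparable to $v_i$ in $\T_G$, and if $v_i \prec_{\T_G} v_{i+1}$ then $v_{i+1}$ is a descendant of $v_i \in D_{\T_G}(x)$, while if $v_{i+1} \prec_{\T_G} v_i$ then $v_{i+1}$ is an ancestor of $v_i$ which, being outside $S$ and distinct from $x$, must lie strictly between $x$ and $v_i$. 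Once the subclaim is in hand, any $z \in N(v) \cap S$ yields the configuration $z \prec_{\T_G} x \prec_{\T_G} v$ with $zv \in E$, and item~(4) applied to $x \prec_{\T_G} v$ delivers $zx \in E$, establishing maximality.

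For the backward direction, I would take an arbitrary triple $a \prec_{\T_G} b \prec_{\T_G} c$ with $ac \in E$ and show $ab, bc \in E$, which simultaneously rules out both patterns of $\mathcal{P}_{\sf proper}$. The vertex $b$ is not the root because $a$ is an ancestor of it, so the hypotheses apply at $b$. The universality condition at $b$ settles $bc$: $c \in D_{\T_G}(b) \subseteq D_{\T_G}[b]$ and $a \in A_{\T_G}(b) \cap N(c)$, so $c \in N(A_{\T_G}(b)) \cap D_{\T_G}[b]$ and hence $bc \in E$. Using $bc \in E$ to observe that $c$ lies in the connected component of $b$ in $G - A_{\T_G}(b)$, the maximality condition at $b$ applied to $c$ together with the neighbour $a \in N(c) \cap A_{\T_G}(b)$ then forces $ab \in E$. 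The main obstacle in the whole argument is really the tree-layout subclaim $C \setminus \{x\} \subseteq D_{\T_G}(x)$; everything else is a direct rewriting of item~(4) of \autoref{lem_indifference_treelayout}.
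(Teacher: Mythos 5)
Your proof is correct and follows essentially the same route as the paper's: both directions hinge on the observation that the connected component of $x$ in $G-A_{\T_G}(x)$ is contained in $D_{\T_G}[x]$, combined with the neighbourhood-containment form of the indifference property (item (4) of \autoref{lem_indifference_treelayout}, which the paper uses in its raw forbidden-pattern form). You are in fact slightly more careful than the paper, which merely asserts as an observation the descendant subclaim that you prove by induction, and which glosses over the need to first establish $bc\in E$ via universality before maximality can be applied to $c$.
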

\begin{proof}
Suppose that every vertex $x$ distinct from $\rho_G^{-1}(r)$, is $A_{\T_G}(x)$-maximal and $A_{\T_G}(x)$-universal.
Let $y$, $z$ be two distinct vertices of $G$ such that $y\prec_{\T_G} x\prec_{\T_G} z$ and $yz\in E$. Observe that $y\in A_{\T_G}(x)$ and 
$z\in N(A_{\T_G}(x))$. As $x$ is $A_{\T_G}(x)$-maximal and $(N(A_{\T_G}(x))\cap D_{\T_G}[x])$-universal, $yz\in E$ implies that $xy\in E$ and $xz\in E$. So $\T_G$ is an indifference tree-layout.

To prove the reverse, consider $x\in V$  distinct from $\rho_G^{-1}(r)$ and $C$ the connected component of $G-A_{\T_G}(x)$ containing $x$. First observe that as $\T_G$ is a tree-layout of $G$, every vertex $z\in C$ is a descendant of $x$ is $\T_G$. Suppose that either $x$ is not $A_{\T_G}(x)$-maximal or not $(N(A_{\T_G}(x))\cap D_{\T_G}[x])$-universal.
 In the former case, there exist two vertices $y\in A_{\T_G}(x)$ and $z\in C$ such that $yz\in E$ but $xy\notin E$.  In the latter case, there exists $y\in A_{\T}(x)$ and $z\in C$ such that $yz\in E$ but $xy\notin E$.
 By the above observation, we have that $z\prec_{\T_G} x\prec_{\T_G} y$ and so in both cases $\T_G$ is not an indifference tree-layout.
\end{proof}

So \autoref{lem_tree_layout} and the  discussion above shows that if $\T_G=(T,r,\rho_G)$ is  an indifference tree-layout of a proper chordal graph $G=(V,E)$, then every vertex $x$ distinct from $\rho_G^{-1}(r)$ can be associated with a non-empty $A_{\T_G}(x)$-block which contains $x$. Hereafter, we let $B_{\T_G}(x)$
denote the $A_{\T_G}(x)$-block containing $x$. If $x=\rho_G^{-1}(r)$, then we set $B_{\T_G}(x)=\{x\}$.

\begin{lemma} \label{lem_block_consecutive}
Let $\T_G=(T,r,\rho_G)$ be an indifference tree-layout of a connected proper chordal graph $G=(V,E)$. For every vertex $x\in V$, the vertices of the block $B_{\T_G}(x)$ appear consecutively on a path rooted at $x$ and induces a clique in $G$.
\end{lemma}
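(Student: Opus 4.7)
My plan is to exploit the characterizations of indifference tree-layouts given in \autoref{lem_indifference_treelayout} together with \autoref{lem_tree_layout}. The case $x = \rho_G^{-1}(r)$ is immediate since $B_{\T_G}(x) = \{x\}$ by convention, so I focus on $x \neq \rho_G^{-1}(r)$ and let $C$ be the connected component of $G - A_{\T_G}(x)$ containing $x$.

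I would first establish two structural facts. The first is $C \subseteq D_{\T_G}[x]$, proved by induction along a path in $C$ starting at $x$: every edge of $G$ forces ancestor-comparability in $\T_G$, and no vertex of $C$ lies in $A_{\T_G}(x)$, so the path remains in the subtree rooted at $x$. The second is $B_{\T_G}(x) \subseteq N(A_{\T_G}(x)) \cap C$: the connectedness of $G$ forces some vertex of $C$ to have a neighbour in $A_{\T_G}(x)$, and the $A_{\T_G}(x)$-maximality of $x$ given by \autoref{lem_tree_layout} transfers this to $x$, and then to every vertex of $B_{\T_G}(x)$ via their own $A_{\T_G}(x)$-maximality.

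The clique property follows at once: for any distinct $y, z \in B_{\T_G}(x) \subseteq N(A_{\T_G}(x)) \cap C$, the $(N(A_{\T_G}(x)) \cap C)$-universality of $y$ yields the edge $yz$. Since $\T_G$ is a tree-layout, adjacent vertices are always ancestor-comparable, so $B_{\T_G}(x)$ is a chain under $\prec_{\T_G}$. Together with $B_{\T_G}(x) \subseteq D_{\T_G}[x]$ and the fact that $x \in B_{\T_G}(x)$, this chain lies on a single path of $T$ rooted at $x$.

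The main obstacle is proving consecutiveness on this path: whenever $y, w, z$ are nodes of $T$ with $y \prec_{\T_G} w \prec_{\T_G} z$ and $y, z \in B_{\T_G}(x)$, I must show $w \in B_{\T_G}(x)$. I would verify the three conditions of \autoref{def_block} for $w$ in turn. Applying the $\mathcal{P}_{\sf indifference}$-freeness of $\T_G$ to the triple $(y, w, z)$ together with the edge $yz$ yields $yw, wz \in E$, hence $w \in C$. For the $A_{\T_G}(x)$-maximality of $w$ I would invoke \autoref{lem_indifference_treelayout}(2): any $a \in N(y) \cap A_{\T_G}(x)$ is also adjacent to $z$ by the $A_{\T_G}(x)$-maximality of $z$, so the connected subtree $N[a]$ of $T$ contains both $a$ and $z$, and consequently every node on the $T$-path between them, including $w$; this forces $aw \in E$. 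The universality of $w$ is obtained by the same connectedness argument applied to $N[v]$ for each $v \in (N(A_{\T_G}(x)) \cap C) \setminus \{w\}$: such a $v$ is adjacent to both $y$ and $z$ by their universality, so $N[v]$ is a connected subtree of $T$ containing $y$ and $z$, and $w$ lies on the $T$-path between them. The maximality clause of \autoref{def_block} then forces $w \in B_{\T_G}(x)$, completing the proof.
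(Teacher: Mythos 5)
Your proof is correct and follows essentially the same route as the paper's: deduce the clique property from the $(N(A_{\T_G}(x))\cap C)$-universality of block vertices, observe that a clique lies on a root-to-leaf path of any tree-layout, and then show that any vertex sandwiched between two block vertices itself satisfies the conditions of \autoref{def_block}. Your write-up is somewhat more explicit than the paper's (in particular in verifying $C\subseteq D_{\T_G}[x]$ and in checking the three block conditions for the intermediate vertex via the connected-subtree characterization of \autoref{lem_indifference_treelayout}), but the underlying argument is the same.
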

\begin{proof}
The property trivially holds for $x=\rho_G^{-1}(r)$. Let us consider $x\neq \rho_G^{-1}(r)$ and let $C$ be the connected component of $G-A_{\T_G}(x)$ containing $x$. By the definition of blocks, every vertex of $B_{\T_G}(x)$ is $(N(A_{\T_G}(x))\cap D_{\T_G}[x])$-universal. 
This implies that $B_{\T_G}(x)$ induces a clique.  Observe that in every tree-layout, the vertices of a clique appear on a path from the root to a leaf. 
By definition of $B_{\T_G}(x)$, every vertex of $B_{\T_G}(x)$ distinct from $x$ is a descendant of $x$ in $\T_G$. Let $z$ be the lowest vertex of $B_{\T_G}(x)$ in $\T_G$ and let $y\in V$ be a vertex such that $x\prec_{\T_G} y\prec_{\T_G} z$. Then observe that by the indifference property of $\T_G$, $y$ is $A_{\T_G}(x)$-maximal and $y$ is adjacent to every vertex of $N(A_{T_G}(x))$ that is a descendant of $x$. It follows that $y\in B_{\T_G}(x)$.
\end{proof}

\begin{lemma} \label{lem_maximal_block}
Let $\T_G=(T,r,\rho_G)$ be an indifference tree-layout of a connected proper chordal graph $G=(V,E)$. If $y\in B_{\T_G}(x)$, then $B_{\T_G}(y)\subset B_{\T_G}(x)$.
\end{lemma}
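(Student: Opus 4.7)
The plan is to verify, for an arbitrary $z \in B_{\T_G}(y)$, the three conditions characterizing membership in $B_{\T_G}(x)$: (i) $z$ lies in the component $C_x$ of $G - A_{\T_G}(x)$ containing $x$, (ii) $z$ is $A_{\T_G}(x)$-maximal, and (iii) $z$ is adjacent to every vertex of $(N(A_{\T_G}(x)) \cap C_x) \setminus \{z\}$. Since by \autoref{def_block} these are vertex-wise conditions and the block is their uniquely determined maximal realisation, establishing (i)--(iii) will immediately give $z \in B_{\T_G}(x)$. The trivial case $y = x$ (which includes $x = \rho_G^{-1}(r)$, since then $B_{\T_G}(x) = \{x\}$) gives equality. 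Otherwise, by \autoref{lem_block_consecutive}, $y$ lies strictly below $x$ on the path supporting $B_{\T_G}(x) = \{x = x_0, x_1, \ldots, x_k\}$, say $y = x_j$ with $j \geq 1$, so that $A_{\T_G}(y) = A_{\T_G}(x) \cup \{x_0, \ldots, x_{j-1}\}$. If $z = y$ there is nothing to prove, so I may assume $z$ is a strict descendant of $y$ in $\T_G$.

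For (i), $z \in C_y$ gives a path from $z$ to $y$ inside $G - A_{\T_G}(y) \subseteq G - A_{\T_G}(x)$, and concatenating it with the clique path $y, x_{j-1}, \ldots, x_0 = x$ of $B_{\T_G}(x)$ (which lives entirely in $G - A_{\T_G}(x)$) shows $z \in C_x$. For (ii), I take $w \in C_x$ and $u \in N(w) \cap A_{\T_G}(x)$; since $y \in B_{\T_G}(x)$ is $A_{\T_G}(x)$-maximal, $u \in N(y)$, and then $u \in A_{\T_G}(y)$ combined with $y \in C_y$ allows me to apply the $A_{\T_G}(y)$-maximality of $z$ (which holds because $z \in B_{\T_G}(y)$) at the witness $y$, giving $u \in N(z)$.

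For (iii), I take $v \in (N(A_{\T_G}(x)) \cap C_x) \setminus \{z\}$ and split into three subcases according to the location of $v$. If $v \in \{x_0, \ldots, x_{j-1}\}$, then $v \in A_{\T_G}(y)$ and $vy \in E$ since $B_{\T_G}(x)$ is a clique, whence $A_{\T_G}(y)$-maximality of $z$ at $y$ yields $vz \in E$. If $v \in C_y$, then $v \in N(A_{\T_G}(y)) \cap C_y$ and the universality of $z$ inside $B_{\T_G}(y)$ gives $vz \in E$ directly. The remaining subcase, $v \in C_x \setminus (C_y \cup \{x_0, \ldots, x_{j-1}\})$, I will show is vacuous: since $y \in C_y$ forces $v \neq y$, the universality of $y$ in $B_{\T_G}(x)$ yields $vy \in E$; this edge lies in $G - A_{\T_G}(y)$ (because $v \notin A_{\T_G}(y)$), so $v$ is connected to $y$ inside $G - A_{\T_G}(y)$, forcing $v \in C_y$ and a contradiction.

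The main obstacle is precisely this last subcase: ruling out a vertex of $C_x$ that has a neighbor in $A_{\T_G}(x)$ but escapes both the ancestor-chain $\{x_0, \ldots, x_{j-1}\}$ and the component $C_y$. The resolution relies on transferring adjacency through $y$: the universality property that $y$ inherits from $B_{\T_G}(x)$ is strong enough to force any such $v$ to be adjacent to $y$, which combined with $v \notin A_{\T_G}(y)$ drags $v$ back into $C_y$ and closes the argument.
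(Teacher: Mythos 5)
Your proof is correct and follows essentially the same route as the paper's: both reduce membership of $z\in B_{\T_G}(y)$ in $B_{\T_G}(x)$ to the defining block conditions, exploiting that $A_{\T_G}(y)=A_{\T_G}(x)\cup\{x_0,\dots,x_{j-1}\}$ (via the consecutiveness of $B_{\T_G}(x)$ on a path) and that $y$ itself is $A_{\T_G}(x)$-maximal and universal, so that adjacency can be transferred through $y$. The only difference is presentational -- the paper argues by contradiction from a single witness $v$ with $vz\notin E$ and asserts $v\in N(A_{\T_G}(y))\cap(D_{\T_G}(y)\cup\{y\})$, whereas your direct three-way split on the location of $v$ is slightly more complete, since it explicitly disposes of the possibility that $v$ lies among $x_1,\dots,x_{j-1}$ (an ancestor of $y$), a case the paper's assertion passes over.
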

\begin{proof}
Suppose that there exists $z\in B_{\T_G}(y)\setminus B_{\T_G}(x)$. Let us observe that $x$, $y$ and $z$ are pairwise adjacent: $xy\in E$ because $y\in B_{\T_G}(x)$; $yz\in E$ because  $z\in B_{\T_G}(y)$; and finally, $xz\in E$ because $x\in A_{\T_G}(y)$ and $N(z)\cap A_{\T_G}(y)=N(y)\cap A_{\T_G}(y)$. As $A_{\T_G}(x)\subset A_{\T_G}(y)$, we also have that $N(x)\cap A_{\T_G}(x)=N(y)\cap A_{\T_G}(x)=N(z)\cap A_{\T_G}(x)$. So the fact that $z\notin B_{\T_G}(x)$, implies that there exists $v\in N(A_{\T_G}(x))\cap (D_{\T_G}(x)\cup\{x\})$ such that $vz\notin E$. As $xz\in E$, we know that $x\neq v$. As $y\in B_{\T_G}(x)$, we also have $yv\in E$. It follows that $v\in N(A_{\T_G}(y))\cap  (D_{\T_G}(y)\cup\{y\})$. Finally, since $z\in B_{\T_G}(y)$, we also have $vz\in E$: contradiction.
\end{proof}

\begin{lemma} \label{lem_connected_subtree}
Let $\T_G=(T,r,\rho_G)$ be an indifference tree-layout of a connected proper chordal graph $G=(V,E)$. For every vertex $x$, $G[D_{\T_G}(x)]$ is connected.
\end{lemma}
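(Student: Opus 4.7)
The plan is induction on $|D_{\T_G}[x]|$, showing that $G[D_{\T_G}[x]]$ is connected (I read the statement in its closed-neighborhood form, since the strict reading already fails for the center of $K_{1,3}$ realized as the root of an indifference tree-layout with three children). The base case, $x$ a leaf of $T$, is immediate. For the inductive step, let $u_1,\ldots, u_k$ be the children of $x$ in $T$. The inductive hypothesis provides that each $G[D_{\T_G}[u_i]]$ is connected, so it suffices to show that, for every $i$, the vertex $x$ has some neighbor in $D_{\T_G}[u_i]$.

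Fix an index $i$. Since $G$ is connected and $D_{\T_G}[u_i]$ is a proper subset of $V$, some edge $vw\in E$ crosses the boundary with $v\in D_{\T_G}[u_i]$ and $w\notin D_{\T_G}[u_i]$. The tree-layout property implies that one of $v,w$ is an ancestor of the other; since every descendant of $v$ lies in $D_{\T_G}[u_i]$, the vertex $w$ must be a strict ancestor of $v$. Inspecting the structure of $T$, an ancestor of $v$ outside $D_{\T_G}[u_i]$ is either $x$ itself or a proper ancestor of $x$. If $w=x$, the edge $xv$ is already the one we want.

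Otherwise $w \prec_{\T_G} x \prec_{\T_G} v$ with $wv\in E$. I would then invoke item (4) of \autoref{lem_indifference_treelayout} applied to the pair $x\prec_{\T_G} v$, which gives $w\in N(v)\cap A_{\T_G}(x) \subseteq N(x)\cap A_{\T_G}(x)$, hence $wx\in E$. Excluding the forbidden pattern $\langle 12,13,\overline{23}\rangle$ on the chain $w\prec_{\T_G} x \prec_{\T_G} v$ with $wx, wv\in E$ then forces $xv\in E$, producing the required edge and completing the induction.

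The only real subtlety is ruling out the possibility that $D_{\T_G}[u_i]$ is attached to the rest of $G$ only through edges to proper ancestors of $x$ that bypass $x$; this is precisely what the indifference characterization prevents via a single application of the lemma above, so I do not expect any deeper obstacle.
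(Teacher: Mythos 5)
Your proof is correct, and you were right to read the statement in its closed form: as literally written with the strict descendant set $D_{\T_G}(x)$ the lemma fails already for $K_{1,3}$ rooted at its centre, and the paper's own argument in fact establishes the closed version. What the paper actually proves is the stronger intermediate claim that \emph{every parent--child pair of $T$ is an edge of $G$}: assuming a tree-edge $xy$ with $xy\notin E$, the indifference property forces $N(x)\cap D_{\T_G}[y]=\emptyset$ (the displayed ``$N(x)\cap D_{\T_G}(x)$'' is evidently a typo for this) and then $N(z)\cap A_{\T_G}(y)=\emptyset$ for all $z\in D_{\T_G}(y)$, so $D_{\T_G}[y]$ would be a union of connected components of $G$, contradicting connectivity; connectivity of every $G[D_{\T_G}[x]]$ then follows since the subtree $T_x$ becomes a spanning connected subgraph. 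You instead run an induction over the subtrees and only establish the weaker fact that $x$ has \emph{some} neighbour in each child's subtree, obtained by taking an arbitrary crossing edge and pulling its upper endpoint down to $x$ via item (4) of \autoref{lem_indifference_treelayout} together with the exclusion of $\langle 12,13,\overline{23}\rangle$. The two arguments use exactly the same two ingredients (connectivity of $G$ plus the indifference property preventing an attachment from skipping over an intermediate vertex on a root-to-leaf path), so neither buys real generality over the other; the paper's version is slightly more economical in that it yields the reusable fact that $T_x$ is a spanning tree of $G[D_{\T_G}[x]]$, whereas yours avoids the contradiction argument and makes the corrected form of the statement explicit.
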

\begin{proof}
Let $x$ and $y$ be two vertices such that $x$ is the parent of $y$ in $\T_G$. We prove that $xy\in E$. For the sake of contradiction, suppose that $xy\notin E$. Observe that $N(x)\cap D_{\T_G}(x)=\emptyset$, as otherwise $\T$ would not be an indifference tree-layout. For the same reason, we have that for every $z\in D_{\T}(y)$, $N(z)\cap A_{\T_G}(y)=\emptyset$. This implies that $D_{\T_G}(x)$ and $A_{\T_G}(y)$ are not in the same connected component of $G$: contradiction.
\end{proof}

\subsection{The block tree of an indifference tree-layout}

Given an indifference tree-layout $\T_G=(T,r,\rho)$ of a connected proper chordal graph $G=(V,E)$, we define the set $\mathcal{B}(\T_G)$ as the set containing the inclusion-maximal blocks of $\T_G$. Observe that $\{\rho^{-1}(r)\}\in \mathcal{B}(\T_G)$. \autoref{cor_partition} is a direct consequence of \autoref{lem_maximal_block}.

\begin{corollary} \label{cor_partition}
Let $\T_G=(T,r,\rho_G)$ be an indifference tree-layout of a connected proper chordal graph $G=(V,E)$. Then, $\mathcal{B}(\T_G)$ is a partition of $V$.
\end{corollary}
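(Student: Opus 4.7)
The plan is to verify the two conditions defining a partition: that the maximal blocks cover $V$, and that distinct maximal blocks are pairwise disjoint.

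Coverage is nearly immediate. For every vertex $v\in V$, the set $B_{\T_G}(v)$ is a block containing $v$ (by definition, and using the fact that $\T_G$ is an indifference tree-layout so \autoref{lem_tree_layout} ensures such a block exists). Since $B_{\T_G}(v)$ is contained in some inclusion-maximal block, $v$ belongs to an element of $\mathcal{B}(\T_G)$.

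The substance of the argument is pairwise disjointness. Suppose two maximal blocks $B_1,B_2\in\mathcal{B}(\T_G)$ share a vertex $z$. My first move is to identify each $B_i$ with $B_{\T_G}(x_i)$ for a canonical topmost vertex $x_i$: by \autoref{lem_block_consecutive}, each block occupies a consecutive stretch of a path descending from a unique topmost vertex in $T$, and that topmost vertex is exactly the $x_i$ for which $B_i = B_{\T_G}(x_i)$. Since $z$ is a descendant (or equal) of both $x_1$ and $x_2$ in the rooted tree $T$, one must be an ancestor of the other; without loss of generality assume $x_1\preceq_{\T_G} x_2\preceq_{\T_G} z$. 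The consecutive structure of $B_1$ along the path from $x_1$ downwards, together with $x_1,z\in B_1$, then forces $x_2\in B_1$.

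At this point \autoref{lem_maximal_block} applies directly, which is precisely what the hint flags as the source of the statement: from $x_2\in B_{\T_G}(x_1)$ we deduce $B_2=B_{\T_G}(x_2)\subseteq B_{\T_G}(x_1)=B_1$, and the inclusion-maximality of $B_2$ in $\mathcal{B}(\T_G)$ forces $B_1=B_2$. Combined with coverage, this shows that $\mathcal{B}(\T_G)$ is a partition of $V$. The only mildly delicate point I anticipate is cleanly phrasing the identification of each maximal block with $B_{\T_G}(\cdot)$ of its topmost vertex, but this is an essentially bookkeeping consequence of \autoref{lem_block_consecutive} and requires no new ideas.
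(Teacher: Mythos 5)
Your proof is correct and follows the route the paper intends: the paper dismisses \autoref{cor_partition} as a direct consequence of \autoref{lem_maximal_block}, and your argument is exactly the fleshed-out version of that one-liner, using \autoref{lem_block_consecutive} to identify each maximal block with $B_{\T_G}(x)$ for its topmost vertex $x$ and then invoking \autoref{lem_maximal_block} plus maximality to conclude disjointness. No gaps; the extra detail you supply (coverage via finiteness, comparability of $x_1$ and $x_2$ as common ancestors of $z$) is exactly what the paper leaves implicit.
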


From \autoref{lem_block_consecutive} and \autoref{cor_partition}, we can define the \emph{block tree} of $\T_G$, which will be denoted $\T_{G}\!\!\mid_{\mathcal{B}_{\T}(G)}$, obtained by contracting every block of $\mathcal{B}(\T_G)$ into a single node.

\begin{lemma} \label{lem_block_tree}
Let $\T_G=(T,r,\rho_G)$ and $\T'_G=(T',r',\rho'_G)$ be two indifference tree-layouts of  a connected proper chordal graph $G=(V,E)$ both having the same root $x=\rho_G^{-1}(r)=\rho_G'^{-1}(r')$. Then $\mathcal{B}(\T_G)=\mathcal{B}(\T'_G)$ and $T_G\!\!\mid_{\mathcal{B}(\T_G)}$ and $T'_G\!\!\mid_{\mathcal{B}(\T'_G)}$ are isomorphic trees.
\end{lemma}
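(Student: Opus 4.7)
My plan is to prove by induction on the depth in the block tree that both $\mathcal{B}(\T_G)$ and $\mathcal{B}(\T'_G)$ coincide with a canonical partition determined entirely by $G$ and the common root $x$, and that the two block trees share the same parent--child structure. This will yield both $\mathcal{B}(\T_G)=\mathcal{B}(\T'_G)$ and the equality (hence isomorphism) of the two block trees under the identity map on blocks. The base case is immediate: the root block is $\{x\}$ in both layouts, by the convention stated after~\autoref{lem_tree_layout}.

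For the inductive step, fix a block $B$ at depth $d$ and assume that every block of depth at most $d$, together with its parent in the block tree, coincides between the two layouts. Let $S_B$ denote the union of the strict ancestor blocks of $B$: by the inductive hypothesis $S_B$ is well defined and identical in both layouts. Writing $v_1,\dots,v_m$ for the vertices of $B$ listed along the path from top to bottom in $T$ given by~\autoref{lem_block_consecutive}, one has $S_B=A_{\T_G}(v_1)$, and, as observed in the proof of~\autoref{th_chordal_tree}, the connected component of $G-A_{\T_G}(y)$ containing any vertex $y$ equals $D_{\T_G}[y]$. In particular the connected component $C$ of $G-S_B$ containing $B$ is precisely $D_{\T_G}[v_1]$. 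For every $v_i\in B$, the $T$-children of $v_i$ outside $B$ are the roots of subtrees of $T$; by~\autoref{lem_connected_subtree} and the tree-layout property (distinct subtrees of a tree-layout are pairwise non-adjacent in $G$), these subtrees partition $C\setminus B$ precisely into the connected components of $G[C\setminus B]$. Consequently, the children of $B$ in the block tree of $\T_G$ are in bijection with the connected components of $G[C\setminus B]$, a collection depending only on $G$, $S_B$, and $B$.

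It remains to show that the child block attached to each component $C'$ of $G[C\setminus B]$ is itself graph-theoretically determined. Let $u^*$ be the topmost vertex (in $T$) of the subtree with vertex set $C'$; then the corresponding child of $B$ in the block tree of $\T_G$ is $B_{\T_G}(u^*)$, i.e., the maximal subset of $C'$ whose vertices are all $A_{\T_G}(u^*)$-maximal and $(N(A_{\T_G}(u^*))\cap C')$-universal. A priori $A_{\T_G}(u^*)=S_B\cup\{v\in B\mid v\prec_{\T_G} u^*\}$ depends on the linear ordering of $B$ in $T$; but every $v\in B\setminus A_{\T_G}(u^*)$ lies in a sibling subtree of $u^*$ in $T$ and so, by the tree-layout property, has no neighbour in $C'$. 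Enlarging $A_{\T_G}(u^*)$ to the layout-independent set $S_B\cup B$ therefore affects neither the maximality test on $C'$ nor the set $N(\cdot)\cap C'$. Hence $B_{\T_G}(u^*)$ coincides with the unique maximal subset of $C'$ whose vertices are all $(S_B\cup B)$-maximal and $(N(S_B\cup B)\cap C')$-universal, a description depending only on $G$, $S_B$, $B$, and $C'$. Applying the same reasoning to $\T'_G$ returns the same child block, which closes the induction.

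The main obstacle is the content of the last paragraph: although $B$ and $S_B$ are fixed by the inductive hypothesis, the linear ordering of $B$ inside $T$ and the specific vertex of $B$ to which each component $C'$ attaches can genuinely differ between $\T_G$ and $\T'_G$ (compare the first two indifference tree-layouts in~\autoref{fig_several_indifference_treelayouts}). The pivotal observation that the ``below-$u^*$'' vertices of $B$ have no neighbour in $C'$ is exactly what allows the layout-dependent $A_{\T_G}(u^*)$ to be replaced by the layout-free $S_B\cup B$, making the child block a graph-theoretic invariant.
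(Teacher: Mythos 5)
Your proof follows the same strategy as the paper's: an induction that grows the block partition outward from the root block $\{x\}$, showing at each step that the next block is determined by $G$, the root, and the already-constructed blocks. You are in fact more explicit than the paper at the crucial step, namely why $B_{\T_G}(u^*)$ --- defined relative to the layout-dependent set $A_{\T_G}(u^*)$ --- admits a description relative to the layout-independent set $S_B\cup B$; the paper asserts this essentially without justification, while you supply the sibling-subtree argument.

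One intermediate assertion is false as stated, however: $S_B=A_{\T_G}(v_1)$ (and hence also $A_{\T_G}(u^*)=S_B\cup\{v\in B\mid v\prec_{\T_G}u^*\}$). A strict ancestor block of $B$ need not lie entirely on the path from $r$ to $v_1$, because a block may be attached in the \emph{middle} of its parent block, leaving part of that parent block in a sibling subtree of $v_1$. (In \autoref{fig_indif_OPQ}, the block $\{u\}$ is attached to $c$, so $d$ and $e$ belong to its parent block but not to $A_{\T}(u)$.) This is precisely the phenomenon you yourself invoke one level down when discussing where $C'$ attaches inside $B$. The error turns out to be harmless: one always has $A_{\T_G}(v_1)\subseteq S_B$, and the extra vertices of $S_B\setminus A_{\T_G}(v_1)$ lie in subtrees of $T$ disjoint from $D_{\T_G}[v_1]$, hence have no neighbours in $C\supseteq C'$; so the identities $C=D_{\T_G}[v_1]$ and $N(C')\cap A_{\T_G}(u^*)=N(C')\cap(S_B\cup B)$ survive by the very same observation you apply to $B\setminus A_{\T_G}(u^*)$. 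With that correction (and noting that the fact that the component of $G-A_{\T_G}(y)$ containing $y$ equals $D_{\T_G}[y]$ follows from \autoref{lem_connected_subtree} together with the separator property of tree-layouts, rather than from the proof of \autoref{th_chordal_tree}), the argument is sound and matches the paper's.
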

\begin{proof}
We prove the property by induction on the inclusion-maximal blocks of $\T_G$ and $\T_G'$. We have that $\{x\}$ belongs to both $\mathcal{B}(\T_G)$ and $\mathcal{B}(\T'_G)$. Let us assume that the property holds for a subset $\mathcal{B}\subset \mathcal{B}(\T_G)\cap \mathcal{B}(\T'_G)$ of maximal blocks inducing a connected subtree of  $\T_G$ and $\T_G'$ containing the root block $\{x\}$. Let $S$ be the set of vertices belonging to blocks in $\mathcal{B}$. Let $C$ be a connected component of $G-S$. By \autoref{lem_connected_subtree}, $C$ induces a subtree of $\T_G$ (respectively of $\T_G'$) such that the root $y$ (respectively $y'$) of that subtree is the child of some vertex $z$ of some block in $\mathcal{B}$ (respectively $z'$). Now observe that
as $\T_G$ and $\T_G'$ are two indifference tree-layouts of $G$, both $B_{\T_G}(y)$ and  $B_{\T_G'}(y')$ contain the vertices of $C$ that are $S$-maximal vertices and that are $N(S)\cap C$-universal. It follows that $B_{\T_G}(y)=B_{\T_G'}(y')$. Set $B=B_{\T_G}(y)$. Moreover by the indifference property, observe that $N(B)\cap S$ is consecutive in $[r,y]_{\T_G}$ and contains $z$. Similarly,  $N(B)\cap S$ is consecutive in $[r,y']_{\T_G'}$ and contains $z'$. As the induction hypothesis holds on $\mathcal{B}$, $y'$ and $z'$ belong to the same block of $\mathcal{B}$. It follows that the property is satisfied on $\mathcal{B}\cup\{B\}$ as well, implying that $\mathcal{B}(\T_G)=\mathcal{B}(\T'_G)$ and $T_G\!\!\mid_{\mathcal{B}(\T_G)}$ and $T'_G\!\!\mid_{\mathcal{B}(\T'_G)}$  are isomorphic trees.
\end{proof}

\newcommand{\Btree}{\mathsf{B}^{\sf tree}}

It follows that the block tree only depends on the root vertex and not on a given indifference tree-layout rooted at that vertex. Thereby, from now on, we let $\mathcal{B}_G(x)$ and $\Btree_G(x)$ respectively denote the set of maximal blocks and the block tree of any indifference tree-layout of $G$ rooted at vertex $x$. 
However, we observe that the block tree does not fully describe the possible set of indifference tree-layouts. The block tree does not reflect how each block is precisely attached to its parent block.

Let $\T_G=(T,r,\rho_G)$ be an indifference tree layout of a graph $G=(V,E)$. Let $B_{\T_G}(x)\in \mathcal{B}(\T_G)$ with $x$ distinct from the root of $\T_G$. We denote by $C_{\T_G}(x)$ the connected component of $G-A_{\T_G}(x)$ containing $x$. Let $C_1,\dots C_k$ be the connected components of $G[C_{\T_G}(x)\setminus B_{\T_G}(x)]$. We define $\mathcal{C}_{\T_G}(x)=\langle \mathcal{N}_1,\dots , \mathcal{N}_k\rangle$ a collection of sets of $2^{B_{\T_G}(x)}$: for every $1\leq i\leq k$ and every vertex $y\in C_i$ such that $N(y)\cap B_{\T_G}(x)\neq\emptyset$, then we add $N(y)\cap B_{\T_G}(x)$ to $\mathcal{N}_i$. We define $\mathcal{S}_x=\bigcup_{1\leq i\leq k} \mathcal{N}_i$.

\begin{lemma} \label{lem_nested_convex}
Let $\T_G=(T,r,\rho_G)$ be an indifference tree-layout of a connected proper chordal graph $G$. Let $x$ be a vertex of $G$ distinct from the root $\rho_G(r)$ such that $B_{\T_G}(x)\in \mathcal{B}_{\T_G}(G)$. Then $\mathcal{C}_{\T_G}(x)=\langle \mathcal{N}_1,\dots , \mathcal{N}_k\rangle$ is a collection of nested sets such that $\textsf{Nested-Convex}(\mathcal{C}_{\T_G}(x),\mathcal{S}_x)\neq\emptyset$.
\end{lemma}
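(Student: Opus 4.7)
I would read off the structure of $B_{\T_G}(x)$ and of the components $C_1, \dots, C_k$ directly from the tree $T$, and then take the permutation that $\T_G$ itself induces on $B_{\T_G}(x)$ as the witness for $\textsf{Nested-Convex}(\mathcal{C}_{\T_G}(x), \mathcal{S}_x)$. By \autoref{lem_block_consecutive} the block is a tree-path $x = v_0, v_1, \dots, v_s$ in which $v_{i+1}$ is the tree-child of $v_i$. Combining \autoref{lem_connected_subtree} with the fact (implicit in its proof) that in an indifference tree-layout every tree-edge is an edge of $G$, together with the fact that in any tree-layout two vertices lying in different subtrees of a common tree-parent are incomparable in $T$ and hence non-adjacent in $G$, I would show that each connected component $C_i$ of $G[C_{\T_G}(x) \setminus B_{\T_G}(x)]$ coincides with the vertex set of a subtree $T_{u_i}$ of $T$ rooted at some tree-child $u_i$ of a block vertex $v_{k(i)}$, with $u_i \neq v_{k(i)+1}$ whenever $k(i) < s$.

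\textbf{Suffix shape of $N(y) \cap B_{\T_G}(x)$.} For any $y \in C_i$, the ancestor chain of $y$ in $T$ meets $B_{\T_G}(x)$ in exactly $\{v_0, \dots, v_{k(i)}\}$, while the remaining block vertices $v_{k(i)+1}, \dots, v_s$ are incomparable with $y$ and thus non-adjacent to $y$ in $G$. By condition~(2) of \autoref{lem_indifference_treelayout}, $N[y]$ induces a connected subtree of $T$, so the neighbors of $y$ among its ancestors form a contiguous upward segment starting at the parent of $y$. Intersecting this segment with $\{v_0, \dots, v_{k(i)}\}$ yields a suffix $\{v_{a(y)}, v_{a(y)+1}, \dots, v_{k(i)}\}$ for some $a(y) \in [0, k(i)]$ (whenever the intersection is non-empty). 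Nestedness of $\mathcal{N}_i$ then follows at once, since two suffixes of a common chain are comparable by inclusion.

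\textbf{A nested-convex witness.} Finally, I would verify that the permutation $\sigma = (v_0, v_1, \dots, v_s)$ of $B_{\T_G}(x)$ inherited from $\T_G$ lies in $\textsf{Nested-Convex}(\mathcal{C}_{\T_G}(x), \mathcal{S}_x)$. Every element of $\mathcal{S}_x$ is of the form $\{v_a, \dots, v_{k(i)}\}$ and so is a factor of $\sigma$; and for any $Y = \{v_{a'}, \dots, v_{k(i)}\} \supsetneq Z = \{v_a, \dots, v_{k(i)}\}$ in the same $\mathcal{N}_i$ (with $a' < a$), the set $Y \setminus Z = \{v_{a'}, \dots, v_{a-1}\}$ precedes $Z$ in $\sigma$, as required by the $\mathcal{C}_{\T_G}(x)$-nested condition.

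\textbf{Main obstacle.} The delicate point is the suffix claim of the second step: showing that $N(y) \cap B_{\T_G}(x)$ is not an arbitrary contiguous segment of the block chain but a suffix anchored at $v_{k(i)}$. This rests both on identifying $v_{k(i)}$ as the first block vertex reached along the ancestor chain of any $y \in C_i$ (from the component-structure analysis) and on the connectedness of $N[y]$ in $T$ supplied by the indifference characterization of \autoref{lem_indifference_treelayout}.
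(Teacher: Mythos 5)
Your proposal is correct and follows essentially the same route as the paper: both arguments show that for $y,z$ in a common component $C_i$ the sets $N(y)\cap B_{\T_G}(x)$ and $N(z)\cap B_{\T_G}(x)$ are subpaths of the block path sharing the same deepest endpoint (your "suffix anchored at $v_{k(i)}$"), and both exhibit the ancestor order on $B_{\T_G}(x)$ induced by $\T_G$ itself as the witness in $\textsf{Nested-Convex}(\mathcal{C}_{\T_G}(x),\mathcal{S}_x)$. Your write-up merely makes explicit the component-structure and connected-neighbourhood details that the paper compresses into "by the indifference property".
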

\begin{proof}
We first argue that $\mathcal{C}_{\T_G}(x)$ is a collection of nested sets. Consider two vertices $y$ and $z$ of some connected component $C_i$ of $G[C_{\T_G}(x)\setminus B_{\T_G}(x)]$ that both have a neighbor in $B_{\T_G}(x)$. Observe that $\lca_{\T_G}(y,z)$ is a node mapped to a vertex of $C_i$. By the indifference property of $\T_G$, it follows that $N(y)\cap B_{\T_G}(x)$ and $N(z)\cap B_{\T_G}(x)$ form in $\T_G$ two subpaths sharing the same least boundary, implying that $\mathcal{N}_i$ is a nested set. Finally, observe  the permutation $\sigma$ of $B_{\T_G}(x)$ such that for every two vertices $y,z\in B_{\T_G}(x)$, $y\prec_{\sigma} z$ if and only if $y\in A_{\T_G}(z)$ is a permutation of $\textsf{Nested-Convex}(\mathcal{C}_{\T_G}(x),\mathcal{S}_x)$.
\end{proof}

\subsection{A canonical \FPQ{} hierarchy}

We can now prove, using \autoref{lem_partitive}, \autoref{lem_OPQ_nested_convex}, \autoref{cor_partition} and \autoref{lem_nested_convex}, the existence of a canonical $\FPQ$-hierarchy $\mathsf{H}_G(x)$ that encodes the set of indifference tree-layouts of $G$ rooted at $x$, if there exists one (see~\autoref{fig_indif_OPQ}).

\begin{theorem} \label{th_canonical}
Let $G=(V,E)$ be a proper chordal graph. If $G$ has an indifference tree-layout rooted at some vertex $x$, then there exists an \FPQ-hierarchy $\mathsf{H}_G(x)$ such that a tree-layout $\T_G=(T,r,\rho_G)$ of $G$ is an indifference tree-layout such that $\rho_G^{-1}(r)=x$ if and only if $\T_G\in\mathfrak{T}_{\FPQ}(\mathsf{H}_G(x))$.

Moreover $\mathsf{H}_G(x)$ is unique and, given an indifference tree-layout rooted at $x$, it can be computed in polynomial time.
\end{theorem}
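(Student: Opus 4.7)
The plan is to build $\mathsf{H}_G(x)$ bottom-up from the block tree $\Btree_G(x)$ using \autoref{lem_block_tree} to guarantee that the skeleton is well-defined, and \autoref{lem_OPQ_nested_convex} to describe the \textsf{FPQ}-tree attached to each block. Concretely, first I would fix any indifference tree-layout $\T_G$ of $G$ rooted at $x$ (which exists by hypothesis) and extract from it the block tree $\Btree_G(x)$: by \autoref{lem_block_tree} this tree is independent of the choice of $\T_G$. The skeleton $S_{\H}$ of $\mathsf{H}_G(x)$ is taken to be a copy of $\Btree_G(x)$; one \textsf{FPQ}-tree will be attached to each node (block).

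For a block $B = B_{\T_G}(y) \in \mathcal{B}_G(x)$, I would compute the \textsf{FPQ}-tree $\mathsf{T}_B$ as follows. By \autoref{lem_block_consecutive}, the vertices of $B$ must appear consecutively along a root-to-leaf path, so any indifference tree-layout rooted at $x$ fixes a permutation of $B$. By \autoref{lem_nested_convex}, the collection $\mathcal{C}_{\T_G}(y) = \langle \mathcal{N}_1, \dots, \mathcal{N}_k\rangle$ of neighborhoods in $B$ coming from the child components is a collection of nested sets and $\textsf{Nested-Convex}(\mathcal{C}_{\T_G}(y), \mathcal{S}_y)$ is non-empty; moreover any indifference tree-layout induces a permutation in this set. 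Applying \autoref{lem_OPQ_nested_convex} yields an \textsf{FPQ}-tree $\mathsf{T}_B$ whose equivalence class is exactly $\textsf{Nested-Convex}(\mathcal{C}_{\T_G}(y), \mathcal{S}_y)$. For each child block $B'$ of $B$ in $\Btree_G(x)$, the set $N(B') \cap B$ is a factor of every valid permutation of $B$, so by \autoref{lem_partitive} it corresponds either to a whole subtree of $\mathsf{T}_B$ (so the skeleton edge connects to the root of that subtree, labelled $(1,c)$ as required for a \textsf{P}-node or the full span of a \textsf{Q}-node) or to a consecutive interval of children of a \textsf{Q}-node $u$ in $\mathsf{T}_B$, and I would label the corresponding skeleton edge with exactly that interval $(a,b)$.

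Next I would check both inclusions of the characterization $\T_G \in \mathfrak{T}_{\FPQ}(\mathsf{H}_G(x))$ iff $\T_G$ is an indifference tree-layout rooted at $x$. The forward ($\subseteq$) direction follows because every $\T_G \in \mathfrak{T}_{\FPQ}(\mathsf{H}_G(x))$ can be produced by choosing, for each block $B$, a permutation in $\textsf{Nested-Convex}(\mathcal{C}(B),\mathcal{S}_B)$ together with a valid attachment point of each child block at the prescribed skeleton interval; one verifies via \autoref{lem_indifference_treelayout}(4) that no forbidden triple arises, using that the nested-convex condition guarantees $N(z)\cap A_{\T_G}(y) \subseteq N(y)\cap A_{\T_G}(y)$ for every pair within a block and between consecutive blocks. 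The reverse direction is essentially the construction above applied to the arbitrary indifference tree-layout: the induced block-wise permutations lie in the equivalence class of $\mathsf{T}_B$ by \autoref{lem_nested_convex}, and the attachment of each child block respects the skeleton label by definition of $\mathcal{C}_{\T_G}(y)$.

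For uniqueness I would argue that the skeleton tree is forced by \autoref{lem_block_tree}, the \textsf{FPQ}-trees $\mathsf{T}_B$ are forced by the standard uniqueness of the \textsf{PQ}-tree for a weakly-partitive family (\autoref{lem_partitive}) together with the extra freezing operations of \autoref{lem_OPQ_nested_convex}, and the skeleton-edge labels are determined by the least common ancestor in $\mathsf{T}_B$ of the vertices in $N(B') \cap B$. The polynomial-time construction then just iterates over the blocks extracted from the given indifference tree-layout, invoking the polynomial procedures of \autoref{lem_pq_algo} and \autoref{lem_OPQ_nested_convex} per block. The main obstacle I anticipate is the reverse inclusion: showing that every rooted tree produced by arbitrary \textsf{P}-node permutations and \textsf{Q}-node reversals in $\mathsf{H}_G(x)$ remains an indifference tree-layout — this requires verifying that the nested-convex condition is preserved under the $\rev$ operation (which relabels incident skeleton intervals via $(a,b)\mapsto(c+1-b, c+1-a)$) and that no new forbidden pattern is created across block boundaries.
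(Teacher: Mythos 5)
Your construction is exactly the paper's: skeleton $=$ block tree (unique by \autoref{lem_block_tree}), one \FPQ-tree per block obtained from $\textsf{Nested-Convex}(\mathcal{C}_{\T_G}(y),\mathcal{S}_y)$ via \autoref{lem_nested_convex} and \autoref{lem_OPQ_nested_convex}, skeleton-edge labels from $\lca_{\mathsf{T}_y}(N(z)\cap B_{\T_G}(y))$ via \autoref{lem_partitive}, and the same uniqueness and polynomial-time arguments. The one step you flag as the anticipated obstacle --- that every tree in $\mathfrak{T}_{\FPQ}(\mathsf{H}_G(x))$ is an indifference tree-layout --- is precisely what the paper settles by a case analysis on whether the middle vertex of a potential forbidden triple lies in the block of $y$, the block of $z$, or neither (using that blocks are cliques and the separator property of $B_{\T_G}(u)\cup A_{\T_G}(u)$), so your plan is sound and matches the paper's route.
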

\begin{proof}
Let $\T_G=(T,r,\rho_G)$ be an indifference tree-layout of $G$ such that $\rho_G^{-1}(r)=x$. We can construct an $\FPQ$-hierarchy $\mathsf{H}_G(x)$ as follows (see \autoref{fig_indif_OPQ}):
\begin{enumerate}
\item The skeleton tree of $\mathsf{H}_G(x)$ is the block tree $\Btree_{G}(x)$.
\item For each block $B_{\T_G}(y)\in \mathcal{B}_{\T_G}(G)$, with $y\neq x$, compute the \FPQ-tree $\mathsf{T}_y$ of the collection $\mathcal{C}_{\T_{G}}(y)$ of nested sets such that $\sigma_{\mathsf{T}_y}=\textsf{Nested-Convex}(\mathcal{C}_{\T_{G}}(y),\mathcal{S}_y)$.
\item Suppose that $B_{\T_G}(z)$ is a child of $B_{\T_G}(y)$ in $B^{\sf tree}_{G}(x)$. Then the parent in $\mathsf{H}_G(x)$ of the root $r_z$ of $\mathsf{T}_z$ is the node $u_{z}=\lca_{\mathsf{T}_y}(I_z)$ where $I_z=N(z)\cap B_{\T_G}(y)\in\mathcal{S}_y$. Suppose that 
$u$ has $k$ children $v_1,\dots, v_k$. We set the label of the skeleton-edge $r_zu$ to the interval $[h,j]$ such that $I_z=\bigcup_{h\leq i\leq j} \mathcal{L}_{\mathcal{T}_y}(v_i)$.
\end{enumerate}

\begin{figure}[h]
\begin{center}
\begin{tikzpicture}[thick,scale=1]
\tikzstyle{sommet}=[circle, draw, fill=black, inner sep=0pt, minimum width=4pt]          

\tikzstyle{Onode}=[circle, draw, fill=white, inner sep=0pt, minimum width=15pt]          
\tikzstyle{Pnode}=[circle, draw, fill=white, inner sep=0pt, minimum width=15pt]          
\tikzstyle{Qnode}=[circle, draw, fill=white, inner sep=0pt, minimum width=15pt]

\begin{scope}[]
\draw[fill=gray!20] (-0.5,5.6) rectangle (0.3,6.4) ;
\draw[fill=gray!20] (-0.5,4.6) rectangle (0.3,5.4) ;
\draw[fill=gray!20] (-0.5,-0.35) rectangle (0.3,4.4) ;
\draw[shift=(-120:1),fill=gray!20] (-0.5,-0.4) rectangle (0.3,0.35) ;
\draw[shift=(-120:1),fill=gray!20] (-0.5,-1.4) rectangle (0.3,-0.6) ;

\draw[shift=(-40:1),fill=gray!20] (-0.3,-0.4) rectangle (0.5,0.4) ;
\begin{scope}[shift=(90:1)]
\draw[shift=(-40:1),fill=gray!20] (-0.3,-0.4) rectangle (0.5,0.4) ;
\end{scope}
\begin{scope}[shift=(90:2)]
\draw[shift=(-40:1),fill=gray!20] (-0.3,-0.4) rectangle (0.5,0.4) ;
\end{scope}
\begin{scope}[shift=(90:5)]
\draw[shift=(-40:1),fill=gray!20] (-0.3,-0.4) rectangle (0.5,0.4) ;
\end{scope}

\node (T) at (90:-2.7) {$\T$};

\node[left] (x) at (90:6) {$x$};
\node[left] (y) at (90:5) {$y$};
\node[right,shift=(90:5)] (z) at (-40:1) {$z$};
\node[left] (5) at (90:4) {$a$};
\node[left] (4) at (90:3) {$b$};
\node[left] (3) at (90:2) {$c$};
\node[left] (2) at (90:1) {$d$};
\node[left] (1) at (90:0) {$e$};
\node[left] (u) at (-120:1) {$s$};
\node[left,shift=(-120:1)] (v) at (-90:1) {$t$};

\node[right] (a) at (-40:1) {$w$};
\node[right,shift=(90:1)] (b) at (-40:1) {$v$};
\node[right,shift=(90:2)] (c) at (-40:1) {$u$};

\draw (90:6) -- (90:5) -- (90:4) -- (90:3) -- (90:2) -- (90:1) -- (0:0) -- (-120:1) ;
\draw (0:0) -- (-40:1) ;
\draw[shift=(90:1)] (0:0) -- (-40:1) ;
\draw[shift=(90:2)] (0:0) -- (-40:1) ;
\draw[shift=(-120:1)] (90:0) -- (90:-1) ;
\draw[shift=(90:5)] (0:0) -- (-40:1) ;

\draw[red,thick] (90:0) .. controls (120:4.5) .. (90:6) ;
\draw[red,thick] (90:1) .. controls (115:4.7) .. (90:6) ;
\draw[red,thick] (90:2) .. controls (110:4.9) .. (90:6) ;
\draw[red,thick] (90:3) .. controls (105:5.1) .. (90:6) ;
\draw[red,thick] (90:4) .. controls (100:5.3) .. (90:6) ;
\draw[red,thick] (90:5) .. controls (95:5.5) .. (90:6) ;

\draw[blue,thick] (-120:1) .. controls (115:3.5) .. (90:5) ;
\draw[blue,thick] (v.east) .. controls (150:2) .. (90:3) ;

\draw[blue,thick] (a.west) .. controls (0:0.6) .. (2.east) ;
\draw[blue,thick,shift=(90:1)] (-40:1) .. controls (0:0.6) .. (90:1) ;
\draw[blue,thick,shift=(90:2)] (-40:1) .. controls (0:0.6) .. (90:1) ;
\draw[red,thick,shift=(90:5)] (-40:1) .. controls (0:0.6) .. (90:1) ;

\draw[shift=(90:5)] (-40:1) node[sommet]{};
\draw (90:6) node[sommet]{};
\draw (90:5) node[sommet]{};
\draw (90:4) node[sommet]{};
\draw (90:3) node[sommet]{};
\draw (90:2) node[sommet]{};
\draw (90:1) node[sommet]{};
\draw (0:0) node[sommet]{};
\draw (-120:1) node[sommet]{};
\draw (-40:1) node[sommet]{};
\draw[shift=(-120:1)] (90:-1) node[sommet]{};
\draw[shift=(90:1)]  (-40:1) node[sommet]{};
\draw[shift=(90:2)] (-40:1) node[sommet]{};
\end{scope}

\begin{scope}[xshift=5cm,yshift=1cm]

\begin{scope}[xshift=-0.5cm,yshift=-2.5cm]
\draw[red,thick,dashed] (0,1) .. controls (0,3) .. (0.5,4.5) ;
\node[red] (Iu) at (-0.4,2.5) {\small{$[1,2]$}};
\end{scope}

\begin{scope}[xshift=-1.5cm,yshift=-3.5cm]
\draw[red,thick,dashed] (0,1) .. controls (0.2,1.5) .. (1,2) ;
\node[red] (Iv) at (-0.2,1.6) {\small $[1,1]$};
\end{scope}

\begin{scope}[xshift=1cm,yshift=-2.5cm]
\draw[red,thick,dashed] (-0.5,1) .. controls (-0.7,1.9) ..  (0.5,3.5) ;
\node[red] (Ia) at (-1,1.7) {\small $[1,2]$};
\end{scope}

\begin{scope}[xshift=2cm,yshift=-2.5cm]
\draw[red,thick,dashed] (-0.5,1) -- (-0.5,3.5) ;
\node[red] (Ia) at (-0.9,1.7) {\small $[2,3]$};
\end{scope}

\begin{scope}[xshift=3cm,yshift=-2.5cm]
\draw[red,thick,dashed] (-0.5,1) .. controls (-0.1,1.8) .. (-1.5,3.5) ;
\node[red] (Ia) at (-0.7,1.7) {\small $[3,4]$};
\end{scope}

\begin{scope}[xshift=0cm,yshift=5cm]
\draw (-0.5,0) -- (-0.5,-1);

\draw[red,thick,dashed] (-0.5,0) -- (1,-1.5) ;
\node[red] (Iz) at (2.2,-1.8) {\small $[1,1]$};

\draw[red,thick,dashed] (1,-1.5) -- (2.5,-2.5) ;
\node[red] (Iy) at (0.1,-1.2) {\small $[1,1]$};

\draw (-0.5,0) node[Qnode]{};
\node[] (p) at (-0.5,0) {\textsf{Q}};
\node[below] (a) at (-0.5,-1) {$x$};
\end{scope}

\begin{scope}[xshift=1cm,yshift=3.5cm]
\draw (0,0) -- (0,-1);
\draw[red,thick,dashed] (0,0) -- (-1,-1.5) ;
\node[red] (I5) at (-1,-0.7) {\small $[1,1]$};
\draw (0,0) node[Qnode]{};
\node[] (p) at (0,0) {\textsf{Q}};
\node[below] (a) at (0,-1) {$y$};
\end{scope}

\begin{scope}[xshift=2.5cm,yshift=2.5cm]
\draw (0,0) -- (0,-1);
\draw (0,0) node[Qnode]{};
\node[] (pa) at (0,0) {\textsf{Q}};
\node[below] (a) at (0,-1) {$z$};
\end{scope}

\draw (1.5,1) -- (0,2);
\draw (1.5,1) -- (0,0);
\draw (1.5,1) -- (1,0);
\draw (1.5,1) -- (2,0);
\draw (1.5,1) -- (3,0);
\draw (0,2) -- (-1,1);

\draw (0,0) node[sommet]{};
\draw (1,0) node[sommet]{};
\draw (2,0) node[sommet]{};
\draw (3,0) node[sommet]{};
\draw (-1,1) node[sommet]{};

\draw (1.5,1) node[Qnode]{};
\node[] (p) at (1.5,1) {\textsf{Q}};
\draw (0,2) node[Onode]{};
\node[] (o) at (0,2) {\textsf{F}};

\node[below] (a) at (0,0) {$b$};
\node[below] (b) at (1,0) {$c$};
\node[below] (c) at (2,0) {$d$};
\node[below] (d) at (3,0) {$e$};
\node[below] (e) at (-1,1) {$a$};
\node[] (o) at (0,2) {\textsf{F}};

\begin{scope}[xshift=-0.5cm,yshift=-1.5cm]

\draw (0,0) -- (0,-1);
\draw (0,0) node[Qnode]{};
\node[] (pu) at (0,0) {\textsf{Q}};
\node[below] (a) at (0,-1) {$s$};

\end{scope}

\begin{scope}[xshift=-1.5cm,yshift=-2.5cm]
\draw (0,0) -- (0,-1);

\draw (0,0) node[Qnode]{};
\node[] (pv) at (0,0) {\textsf{Q}};
\node[below] (a) at (0,-1) {$t$};
\end{scope}

\begin{scope}[xshift=0.5cm,yshift=-1.5cm]

\draw (0,0) -- (0,-1);
\draw (0,0) node[Qnode]{};
\node[] (p) at (0,0) {\textsf{Q}};
\node[below] (a) at (0,-1) {$u$};
\end{scope}

\begin{scope}[xshift=1.5cm,yshift=-1.5cm]

\draw (0,0) -- (0,-1);
\draw (0,0) node[Qnode]{};
\node[] (p) at (0,0) {\textsf{Q}};
\node[below] (a) at (0,-1) {$v$};
\end{scope}

\begin{scope}[xshift=2.5cm,yshift=-1.5cm]

\draw (0,0) -- (0,-1);
\draw (0,0) node[Qnode]{};
\node[] (p) at (0,0) {\textsf{Q}};
\node[below] (a) at (0,-1) {$w$};
\end{scope}

\end{scope}

\begin{scope}[shift=(0:12)]
\draw[fill=gray!20] (-0.5,-0.35) rectangle (0.3,4.4) ;

\node[left] (x) at (90:6) {$x$};
\node[left] (y) at (90:5) {$y$};
\node[right,shift=(90:5)] (z) at (-40:1) {$z$};
\node[left] (5) at (90:4) {$a$};
\node[left] (4) at (90:3) {$e$};
\node[left] (3) at (90:2) {$d$};
\node[left] (2) at (90:1) {$c$};
\node[left] (1) at (90:0) {$b$};
\node[left] (u) at (-120:1) {$s$};
\node[left,shift=(-120:1)] (v) at (-90:1) {$t$};

\node[right] (a) at (-40:1) {$u$};
\node[right,shift=(90:1)] (b) at (-40:1) {$v$};
\node[right,shift=(90:2)] (c) at (-40:1) {$w$};

\draw (90:6) -- (90:5) -- (90:4) -- (90:3) -- (90:2) -- (90:1) -- (0:0) -- (-120:1) ;
\draw (0:0) -- (-40:1) ;
\draw[shift=(90:1)] (0:0) -- (-40:1) ;
\draw[shift=(90:2)] (0:0) -- (-40:1) ;
\draw[shift=(-120:1)] (90:0) -- (90:-1) ;
\draw[shift=(90:5)] (0:0) -- (-40:1) ;

%
%


\draw[shift=(90:5)] (-40:1) node[sommet]{};
\draw (90:6) node[sommet]{};
\draw (90:5) node[sommet]{};
\draw (90:4) node[sommet]{};
\draw (90:3) node[sommet]{};
\draw (90:2) node[sommet]{};
\draw (90:1) node[sommet]{};
\draw (0:0) node[sommet]{};
\draw (-120:1) node[sommet]{};
\draw (-40:1) node[sommet]{};
\draw[shift=(-120:1)] (90:-1) node[sommet]{};
\draw[shift=(90:1)]  (-40:1) node[sommet]{};
\draw[shift=(90:2)] (-40:1) node[sommet]{};

\node (TT) at (90:-2.7) {$\T'$};
\end{scope}

\end{tikzpicture}
\end{center}
\caption{\label{fig_indif_OPQ} On the left hand side, an indifference tree-layout $\T$ rooted at vertex $x$ of a proper chordal graph $G$. For every vertex, only the edge (blue or red) to its highest neighbor in $\T$ is depicted. The boxes represent the partition into blocks. The \FPQ-hierarchy $\mathsf{H}_{G}(x)$ is depicted in the middle. Observe that for the \FPQ-tree $\mathsf{T}$ of the block $B_{\T}(a)=\{a,b,c,d,e\}$, $\mathfrak{S}(\mathsf{T})=\{abcde,aedcb\}$. It follows that $\mathfrak{T}_{\FPQ}(\mathsf{H}_{G}(x))$ contains two indifference tree-layouts $\T$ and $\T'$. The latter one is shown on  the right hand side.}
\end{figure}
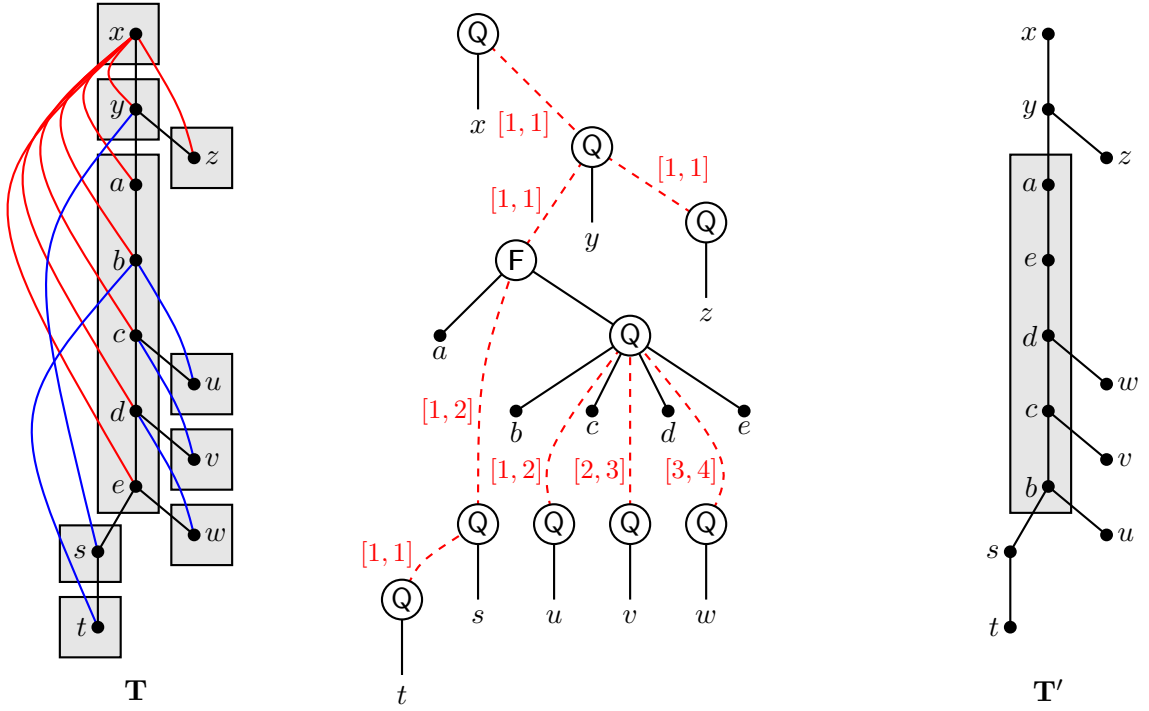

\begin{claim}
The \FPQ-hierarchy $\mathsf{H}_G(x)$ is uniquely defined and can be computed in polynomial time. 
\end{claim}

By \autoref{lem_block_tree}, the block tree $\Btree_G(x)$ is unique. Moreover, given $\T_G$, it can be computed in polynomial time. Let $y$ be a vertex distinct from $x$ such that $B_{\T_G}(y)\in \mathcal{B}_{\T_G}(G)$. By \autoref{lem_nested_convex}, the collection $\mathcal{C}_{\T_G}(y)$ verifies that $\textsf{Nested-Convex}(\mathcal{C}_{\T_G}(y),\mathcal{S}_y)\neq\emptyset$. So by \autoref{lem_OPQ_nested_convex}, the \FPQ-tree $\mathsf{T}_y$ exists and can be computed in polynomial time. Finally, suppose that $B_{\T_G}(z)$ is a child of $B_{\T_G}(y)$ in the block tree $B^{\sf tree}_G(x)$. By \autoref{lem_partitive}, the node $u_z$ is well defined and the interval $I_z$ is unique. 

It directly follows from the construction above and the definition of  $\mathfrak{T}_{\FPQ}(\mathsf{H}_{G}(x))$ that $\T_G\in\mathfrak{T}_{\FPQ}(\mathsf{H}_{G}(x))$. 

\begin{claim} \label{cl_indifference_treelayout}
If $\T_G\in\mathfrak{T}_{\FPQ}(\mathsf{H}_G(x))$ then $\T_G$ is an indifference tree-layout of $G$ rooted at $x$.
\end{claim}

Let us argue that $\T_G$ is a tree-layout of $G$. Let $y$ and $z$ be two vertices of $G$ such that neither $y\prec_{\T_G} z$ nor $z\prec_{\T_G} y$. We denote $B_y$ and $B_z$ the blocks of $\mathcal{B}_{\T_G}(G)$ respectively containing $y$ and $z$. There are two cases to consider. Suppose first that $B_y$ is an ancestor of $B_z$ in the block tree $\Btree_G(x)$. Observe that by the choice of the permutation $\sigma_{\mathsf{T}_y}$ of $B_y$ in the construction of $\T_G$, for every vertex $v\in N(z)\cap B_y$, $v \prec_{\T_G} \lca_{\T_G}(y,z)$. It follows that $yz\notin E$. So let us assume that $B_y$ and $B_z$ are not an ancestor of one another in the block tree $\Btree_G(x)$. Let $B_{\T_G}(u)\in \mathcal{B}_{\T_G}(G)$ be their least common ancestor. Observe that $B_{\T_G}(u)\cup A_{\T_G}(u)$ is a separator in $G$ for $y$ and $z$.
It follows that $yz\notin E$. Thereby, $\T_G$ is indeed a tree-layout of $G$.

To prove that $\T_G$ is an indifference tree-layout, let us consider three distinct vertices $v$, $y$ and $z$ of $V$ such that $y\prec_{\T_G} v\prec_{\T_G} z$ and $yz\in E$. 
Let $B_z\in \mathcal{B}_{\T_G}(G)$ be the block containing $z$.  Suppose that $v\in B_z$. As, by \autoref{lem_tree_layout}, every block  of $\mathcal{B}_{\T_G}(G)$ induces a clique, $vz\in E$. If $y\in B_z$, then, for the same reason, $yv\in E$. Otherwise, by definition of a block, $N(v)\cap B=N(z)\cap B$ for every block $B$ that is an ancestor of $B_y$ in $\Btree_G(x)$, implying that $yv\in E$. 
Suppose that $v\notin B_z$. Let $B_y\in \mathcal{B}_{\T_G}(G)$ be the block containing $y$. Suppose that $v\in B_y$. As $B_y$ induces a clique, $vy\in E$. Moreover, by construction of $\T_G$ and the choice of $\sigma_{\mathsf{T}_y}\in\mathfrak{S}(\mathsf{T}_y)$, where $\mathsf{T}_y$ is the \FPQ-tree associated to the block $B_y$, $N(z)\cap B_y$ is an interval that contains $D_{\T_G}(y)\cap B_y$. It follows that $zv\in E$. So assume that $v \notin B_{y}$. Observe then that by the definition of blocks, we have that $N(z)\cap B_y\subseteq N(v)\cap B_y$ and that $y$ is adjacent to every vertex of $B_z$ that has a neighbor in $B_y$, implying that $vy\in E$ and $vz\in E$. \end{proof}
%

\section{Algorithmic aspects}

This section is dedicated to the design of polynomial time algorithms for proper chordal graphs. We first tackle the recognition problem of proper chordal graphs. Then we show that using the \FPQ{}-hierarchies, we can resolve the graph isomorphism problem between two proper chordal graphs in polynomial time. In a way, this generalizes the \PQ{}-tree based graph isomorphism algorithm for interval graphs~\cite{LuekerB79}.  This latter result is of interest as it shows that proper chordal graphs extends the known limit of tractability for the graph isomorphism problem. Indeed, it is known that graph isomorphism is \textsf{GI}-complete on  strongly chordal graphs~\cite{UeharaTN05}.

\subsection{Recognition}

Given a graph $G=(V,E)$, the recognition algorithm test for every vertex $x\in V$, if $G$ has an indifference tree-layout rooted at $x$. So from now on, we assume that we are given as input the graph $G$ and a fixed vertex $x$. The algorithm is then two-step. First, the first step of the algorithm aims at computing the block tree $\Btree_G(x)$ of $G$ rooted at $x$ that would correspond to the skeleton tree of the \FPQ{}-hierarchy $\H_G(x)$ if $G$ has an indifference tree-layout rooted at $x$. Then, in the second step, instead of computing $\H_G(x)$, we verify that $\Btree_G(x)$ can indeed be turned into an indifference tree-layout of $G$. If eventually we can construct an indifference tree-layout, then $G$ is proper chordal. If $G$ is proper chordal and has an indifference tree-layout rooted at $x$, then the algorithm will succeed.


\paragraph{Computing the blocks and the block tree.} 
Let us assume that the input graph $G=(V,E)$ is proper chordal and let us consider a vertex $x\in V$ such that $x$ is the root of some indifference tree-layout of $G$. As discussed above, the first step aims at computing the skeleton tree of $\H_G(x)$ (see \autoref{th_canonical}). That skeleton tree is the block tree $\Btree_G(x)$ that can be obtained from any indifference tree-layout rooted at $x$ by contracting the blocks of $\mathcal{B}_G(x)$ into a single node each (see \autoref{lem_block_tree}). To compute $\Btree_G(x)$, we perform a search on $G$ starting at $x$ (see  Algorithm~\ref{alg_block_tree} below). At every step of the search, if the set of searched vertices is $S$, then the algorithm either identifies, in some connected component $C$ of $G-S$, a new block $S$-block of $\mathcal{B}_G(x)$ and connects it to the current block tree, or (if $C$ does not contain an $S$-block) stops and declares that there is no block tree rooted at $x$.

\begin{algorithm}[h]
{\small 
\KwIn{A graph $G=(V,E)$ and a vertex $x\in V$.}
\KwOut{The block tree $\Btree_G(x)$, if $G$ has an indifference tree-layout rooted at $x$.}
\BlankLine
set $S\leftarrow \{x\}$, $\mathcal{B}\leftarrow \big\{\{x\}\big\}$ and $\Btree\leftarrow(\mathcal{B},\emptyset)$\;
\While{$S\neq V$}{
	let $C$ be a connected component of $G-S$\;
	\eIf{$C$ contains a $S$-block $X$}{
		$S\leftarrow S\cup X$ and $\mathcal{B}\leftarrow \mathcal{B}\cup \big\{X\big\}$\;
		let $B\in \mathcal{B}$ such that $N(X)\cap B\neq\emptyset$ and that is the deepest\;
		add an edge in $\Btree$ between $B$ and $X$\;
		}{\textbf{stop} and \Return{$G$ has no indifference tree-layout rooted at $x$}\;
		}
	}
\Return{$\Btree$}\;

\caption{Block tree computation} \label{alg_block_tree}
}
\end{algorithm}

\begin{lemma} \label{lem_alg_block_tree}
Let $x$ be a vertex of a graph $G=(V,E)$. If $G$ is proper chordal and has an indifference tree-layout rooted at $x$, then Algorithm~\ref{alg_block_tree} returns the block tree $\Btree_G(x)$ that is the skeleton tree of the \FPQ{}-hierarchy $\H_G(x)$.
\end{lemma}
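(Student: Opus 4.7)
My plan is to prove the statement by induction on the iterations of the \textbf{while} loop in Algorithm~\ref{alg_block_tree}, maintaining the invariant that after each iteration $\mathcal{B}$ is precisely the set of blocks of $\mathcal{B}_G(x)$ already processed, that this set forms a connected subtree of the target block tree $\Btree_G(x)$ containing the root $\{x\}$, and that $\Btree$ is the subtree of $\Btree_G(x)$ induced by $\mathcal{B}$. Let $\T_G=(T,r,\rho_G)$ be a fixed indifference tree-layout of $G$ rooted at $x$; by \autoref{lem_block_tree}, its block tree coincides with $\Btree_G(x)$, which the algorithm must recover without knowing $\T_G$. The base case is trivial since $\mathcal{B}=\{\{x\}\}$ is the root of $\Btree_G(x)$.

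For the inductive step, assume the invariant holds and set $S=\bigcup\mathcal{B}$. The first structural observation is that, because the blocks of any indifference tree-layout appear consecutively along root-to-leaf paths (\autoref{lem_block_consecutive}) and $\mathcal{B}$ forms a subtree of $\Btree_G(x)$ containing the root, $S$ is ancestor-closed in $\T_G$. Applying \autoref{lem_connected_subtree} to each ``boundary'' vertex $w$ whose $\T_G$-parent lies in $S$, each connected component $C$ of $G-S$ has the form $C=D_{\T_G}[w]$, where $w$ is the topmost vertex of a unique maximal block $B\in\mathcal{B}_G(x)\setminus\mathcal{B}$ whose parent in $\Btree_G(x)$ already belongs to $\mathcal{B}$.

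The heart of the argument is then to show that the $S$-block of $C$ in the sense of \autoref{def_block} exists, is unique, and equals $B$. Since $S$ is ancestor-closed and $u\in C=D_{\T_G}[w]$ has no descendant in $S$, we get $N(u)\cap S = N(u)\cap A_{\T_G}(w)$ for every $u\in C$, and consequently $N(S)\cap C = N(A_{\T_G}(w))\cap D_{\T_G}[w]$. Therefore being $S$-maximal in $C$ is the same as being $A_{\T_G}(w)$-maximal in the component $C_{\T_G}(w)=C$, and being $(N(S)\cap C)$-universal coincides with being $(N(A_{\T_G}(w))\cap D_{\T_G}[w])$-universal. Combined with \autoref{lem_tree_layout}, this identifies the unique $S$-block of $C$ with the block $B_{\T_G}(w)$, which by maximality in $\mathcal{B}_G(x)$ equals $B$.

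It remains to verify the attachment step: the deepest block in $\mathcal{B}$ meeting $N(B)$ must be the $\Btree_G(x)$-parent of $B$. All vertices of $B$ share the same neighborhood $N(w)\cap A_{\T_G}(w)$ on $S$, which is entirely contained in the ancestor blocks of $B$ in $\Btree_G(x)$; the deepest such ancestor is the parent of $B$, and it does meet $N(B)$ because by \autoref{lem_connected_subtree} the $\T_G$-parent of $w$ is a neighbor of $w$ and lies in that parent block. This closes the induction, and since $\mathcal{B}_G(x)$ is finite the algorithm terminates with $S=V$ and $\Btree=\Btree_G(x)$. The main obstacle I expect is the third step, where one must marry the purely local combinatorial notion of $S$-block used by the algorithm with the global tree-layout structure; the ancestor-closedness of $S$ in $\T_G$ is the bridge that collapses ``$S$-maximal in $C$'' onto ``$A_{\T_G}(w)$-maximal in $C_{\T_G}(w)$'', and it is precisely this reduction that validates the algorithm's myopic view.
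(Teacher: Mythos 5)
Your proof is correct and follows essentially the same strategy as the paper's: induction on the iterations of the while loop, maintaining that $\mathcal{B}$ is the set of already-discovered maximal blocks forming a root-containing subtree of $\Btree_G(x)$ and that $\Btree$ is the corresponding partial block tree. The only cosmetic difference is that where the paper carries a second invariant (all blocks of $\mathcal{B}$ meeting $N(C)$ are pairwise comparable in $\Btree$) to justify the attachment step, you derive the same facts directly from the ancestor-closedness of $S$ in a fixed indifference tree-layout and the identification $C=D_{\T_G}[w]=C_{\T_G}(w)$, which also makes explicit the step (glossed over in the paper) that the $S$-block of $C$ coincides with $B_{\T_G}(w)$.
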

\begin{proof}
Suppose that $G$ is proper chordal and has an indifference tree-layout $\T_G=(T,r,\rho_G)$ rooted at $x$. From \autoref{lem_block_tree}, $\Btree_G(x)$ is well-defined and unique. To prove the correctness of   Algorithm~\ref{alg_block_tree}, we establish the following invariant. At every step, if $S$ is the set of searched vertices, then
\begin{enumerate}
\item[(1).] $\mathcal{B}\subseteq \mathcal{B}_{G}(x)$ and $\Btree=\Btree_{G[S]}(x)$ is the unique block-tree rooted at $x$ of $G[S]$;
\item[(2).] if $C$ is a connected component of $G-S$, then for every pair of block, $B$, $B'\in \mathcal{B}$ such that $N(C)\cap B\neq\emptyset$ and $N(C)\cap B'\neq\emptyset$, either $B$ is an ancestor of $B'$ in $\Btree$ or vice-versa.
\end{enumerate}

The invariant clearly holds when $S=\{x\}$. Let $C$ be a connected component of $G-S$. 
Let $y\in C$ that is the closest to the root of $\T_G$. Observe that as (1) holds on $G[S]$, $X=B_{\T_G}(y)$ is an $S$-block and it belongs to $\mathcal{B}_{G}(x)$.
By definition of $\Btree_G(x)$, observe that the parent of every block $B'$ of $\mathcal{B}_G(x)$ is the deepest block $B$ in $\Btree_G(x)$ such that $N(B')\cap B\neq\emptyset$. By (2),  $\Btree$ contains a unique deepest block $B\in \mathcal{B}$ such that $N(C)\cap B\neq\emptyset$. It follows that if we attach $X$ to $B$ in $\Btree$, we extend $\Btree$ to $\Btree_{G[S\cup X]}(x)$, the block tree of $G[S\cup X]$. So (1) is still valid after adding $X$ to the set of visited vertices. Moreover, observe that when removing $X$ from $C$ every connected component of $G[C]-X$ has neighbors in $X$, implying that (2) is maintained as well. 

Finally, we observe that every step of the algorithm can be performed in polynomial time, including testing the existence of an $S$-block in $C$.
\end{proof}

Before describing the second step of the algorithm, let us discuss some properties of $\Btree$ returned by Algorithm~\ref{alg_block_tree} when $\mathcal{B}$ is a partition of $V$. An \emph{extension} of $\Btree$ is any tree $T_{\Btree}$ obtained by substituting every node $B\in\mathcal{B}$ by an arbitrary permutation $\sigma_B$ of the vertices of $B$. In this construction, if $B$ is the parent of $B'$ in $\Btree$, $x$ is the last vertex of $B$ in $\sigma_B$ that has a neighbor in $B'$ and $x'$ is the first vertex of $B'$ in $\sigma_{B'}$, then the parent of $x'$ in $T_{\Btree}$ is a vertex of $B$ that appears after $x$ in $\sigma$.

\begin{observation} \label{obs_alg_blocktree}
Let $x$ be a vertex of a graph $G=(V,E)$. If $\Btree$ is returned by Algorithm~\ref{alg_block_tree}, then every extension $T_{\Btree}$ of $\Btree$ is a tree-layout of $G$. And moreover, if $B$, $B'$, and $B''$ are three blocks of $\mathcal{B}$ such that $B\prec_{\Btree} B'\prec_{\Btree} B''$, then for every $y\in B$, $y'\in B'$, $y''\in B''$, $yy''\in E$  implies that $y'y\in E$ and $y'y''\in E$.
\end{observation}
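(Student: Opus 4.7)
The plan is to treat the two assertions separately, leaning on the invariants from \autoref{lem_alg_block_tree} and unpacking the definitions of $S$-maximal and $(N(S)\cap C)$-universal vertices. The first step I take is a \emph{localization lemma}: for every $B'\in\mathcal{B}$, the vertex union of the subtree of $\Btree$ rooted at $B'$ is contained in $C'$, the connected component of $G-S'$ that contained $B'$ at the moment $B'$ was added (with $S'$ the visited set just before that addition). This follows by induction on the block-addition order, since any component $D$ of $G-(S'\cup B')$ that is not a subset of $C'\setminus B'$ is disjoint from $N(B')$, so by invariant (2) of \autoref{lem_alg_block_tree} the blocks later extracted from $D$ cannot have $B'$ in their ancestor chain in $\Btree$.

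For the tree-layout assertion, I first observe that every block $B\in\mathcal{B}$ induces a clique in $G$: since $G$ is connected every $S$-maximal vertex of $C$ must have a neighbor in $S$, so the vertices of $B$ all lie in $N(S)\cap C$, and being jointly $(N(S)\cap C)$-universal they are pairwise adjacent. Hence edges inside a block are realized along the path $\sigma_B$ in $T_{\Btree}$. For an edge $yz$ with $y\in B_y$, $z\in B_z$ and $B_y$ added strictly earlier, invariant (2) forces $B_y\prec_{\Btree} B_z$. I then let $B_1$ be the child of $B_y$ in $\Btree$ on the path towards $B_z$ and use the localization lemma to place $z$ inside $C_1$. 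Combining $y\in B_y\subseteq S_1$ and $yz\in E$ with the $S_1$-maximality of $B_1$'s vertices shows that $y$ is adjacent to every vertex of $B_1$. The attachment rule for extensions then forces the root of $B_1$'s path to be placed at or below $y$ in $\sigma_{B_y}$, making $z$ a descendant of $y$ in $T_{\Btree}$.

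For the three-block assertion, I again consider the moment $B'$ was added, with $S'$ the visited set and $C'$ the component of $G-S'$ containing $B'$; the localization lemma yields $B''\subseteq C'$. Then $yy''\in E$ gives $y\in N(y'')\cap S'$, and the $S'$-maximality of $y'\in B'$ propagates this to $y\in N(y')$, yielding $yy'\in E$. Symmetrically, $y''\in N(S')\cap C'$ (witnessed by $y$) and the $(N(S')\cap C')$-universality of $y'$ gives $y'y''\in E$.

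The main technical obstacle is the localization lemma: blocks are added by the algorithm in a non-canonical order, and the subtree of $\Btree$ rooted at $B'$ can grow through intermediate blocks whose components split $C'\setminus B'$ in complex ways. Showing that these ramifications never leak vertices out of $C'$ is where the real work lies; once it is in hand, both parts reduce to direct unfoldings of the $S$-maximality and $(N(S)\cap C)$-universality conditions packaged in the definition of an $S$-block.
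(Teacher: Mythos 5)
Your proof is correct and follows essentially the same route as the paper: the paper's own argument is a two-sentence sketch that derives the tree-layout property from invariant (2) of Algorithm~\ref{alg_block_tree} and the three-block property from \autoref{def_block} together with the fact that, when $B'$ is selected, $B''$ lies in the same connected component as $B'$ — which is exactly the content of your localization lemma. You simply make explicit (and prove) what the paper asserts without justification, so the extra work on the localization lemma and on tracing where child-block paths attach is a faithful, more detailed rendering of the intended argument rather than a different one.
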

\begin{proof}
The fact that $T_{\Btree}$ is a tree-layout of $G$ directly follows from property (2) of the invariant of Algorithm~\ref{alg_block_tree}. Concerning triples of vertices such as $y$, $y'$ and $y''$, observe first that $y\prec_{T_\Btree} y'\prec_{T_\Btree} y''$. The fact that $yy''\in E$  implies that $y'y\in E$ and $y'y''\in E$ follows from \autoref{def_block} of a $S$-block and the fact that at the step $B'$ is selected, then $B''$ is contained in the same connected component as $B'$.
\end{proof}

%
%

\paragraph{Nested sets.} 
From \autoref{obs_alg_blocktree}, an extension of $\Btree$ is not yet an indifference tree-layout of $G$. However, if $G$ is a proper chordal graph that has an indifference tree-layout $\T_G=(T,r,\rho_G)$ rooted at $x$, then, by \autoref{lem_alg_block_tree}, $\Btree=\Btree_G(x)$. It then follows from the proof of  \autoref{th_canonical} that $\T_G$ is an extension of $\Btree$. The second step of the algorithm consists in testing if $\Btree$ has an extension that is an indifference tree-layout. 

By \autoref{lem_alg_block_tree}, we can assume that Algorithm~\ref{alg_block_tree} has returned $\mathcal{B}_G(x)$ and $\Btree_G(x)$. To every block $B$ of $\mathcal{B}_G(x)$, we assign  a collection of nested subsets of $2^B$ which we denote $\mathcal{C}_B=\langle\mathcal{N}_1,\dots,\mathcal{N}_k\rangle$ (see Algorithm~\ref{alg_recognition} for a definition of $\mathcal{C}_B$). The task of the second step of the recognition algorithm is to verify that for every block $B$, $\textsf{Nested-Convex}(\mathcal{C}_B,\mathcal{S}_B)\neq\emptyset$ where $\mathcal{S}_B=\cup_{1\leq i\leq k}\mathcal{N}_i$. This amounts to testing whether every block of $\Btree_G(x)$ satisfies \autoref{lem_nested_convex}.

To define the collection $\mathcal{C}_B$, we need some notations. Let  $\mathcal{A}_B$ denote the subset of $\mathcal{B}_G(x)$ such that if $B'\in \mathcal{A}_B$, then $B'$ is an ancestor of $B$ in $\Btree_G(x)$. Then we set $A_B=\cup_{B'\in\mathcal{A}_B} B'$ and denote $C_B$ the connected component of $G-A_B$ containing $B$.

\begin{algorithm}[h]
{\small 
\KwIn{A graph $G=(V,E)$;}
\KwOut{Decide if $G$ is a proper chordal graph.}
\BlankLine
\ForEach{$x\in V$}{
	\If{Algorithm~\ref{alg_block_tree} applied on $G$ and $x$ returns $\Btree_G(x)$}{
		\ForEach{block $B\in\mathcal{B}_G(x)$}{
			let $C_{1},\dots,C_{k}$ be the connected components of $G[C_B]-B$\;
			\ForEach{$C_i$ in $C_1,\dots , C_k$}{
				$\mathcal{N}_i\leftarrow\{X\subseteq B\mid \exists y\in C_i, X=N(y)\cap B\}$\;
				}
			set $\mathcal{C}_B=\langle \mathcal{N}_1,\dots, \mathcal{N}_k\rangle$ and ${\cal S}_{B} = \bigcup_{i \in [k]} {\cal N}_{i}$\;
			}
		\If{$\forall B\in\mathcal{B}(x)$, $\mathcal{C}_B$ is a collection of nested sets st. $\textsf{Nested-Convex}(\mathcal{C}_B,\mathcal{S}_B)\neq\emptyset$	}{
			\textbf{stop} and \Return{$G$ is a proper chordal graph}\;
			}
		}
	}
\Return{$G$ is not a proper chordal graph}\;
\caption{Proper chordal graph recognition} \label{alg_recognition}
}
\end{algorithm}

\begin{theorem} \label{th_recognition}
We can decide in polynomial time whether a graph $G=(V,E)$ is proper chordal. Moreover, if $G$ is proper chordal, an indifference tree-layout of $G$ can be constructed in polynomial time.
\end{theorem}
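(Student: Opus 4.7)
The plan is to prove the two directions of correctness of Algorithm~\ref{alg_recognition}, explain how a witnessing tree-layout is built, and bound the running time.

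For the \emph{forward direction}, assume $G$ is proper chordal and let $\T_G$ be any indifference tree-layout of $G$, rooted at a vertex $x$. By \autoref{lem_alg_block_tree}, Algorithm~\ref{alg_block_tree} applied to $(G,x)$ returns the block tree $\Btree_G(x)$. For every block $B\in\mathcal{B}_G(x)$, the collection $\mathcal{C}_B$ built in Algorithm~\ref{alg_recognition} coincides with the collection $\mathcal{C}_{\T_G}(y)$ of \autoref{lem_nested_convex} (where $B=B_{\T_G}(y)$), so $\mathcal{C}_B$ is a collection of nested sets with $\textsf{Nested-Convex}(\mathcal{C}_B,\mathcal{S}_B)\neq\emptyset$, a witnessing permutation being the ancestor order that $\T_G$ induces on $B$. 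Thus the iteration of Algorithm~\ref{alg_recognition} at $x$ accepts.

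For the \emph{reverse direction}, assume Algorithm~\ref{alg_recognition} accepts at some vertex $x$. Then Algorithm~\ref{alg_block_tree} has returned a tree $\Btree$ whose blocks partition $V$, and every block $B$ satisfies $\textsf{Nested-Convex}(\mathcal{C}_B,\mathcal{S}_B)\neq\emptyset$. I would now explicitly construct a $\mathcal{P}_{\sf proper}$-free tree-layout by mimicking the hierarchy construction of \autoref{th_canonical}: using \autoref{lem_OPQ_nested_convex}, compute for each block $B$ an \FPQ{}-tree $\mathsf{T}_B$ realizing $\textsf{Nested-Convex}(\mathcal{C}_B,\mathcal{S}_B)$ and pick a permutation $\sigma_B$ in its equivalence class; then form an extension of $\Btree$, in the sense preceding \autoref{obs_alg_blocktree}, by replacing each node $B$ with the path ordered by $\sigma_B$ and attaching every child block $B'$ of $B$ just below the last vertex of the interval $N(B')\cap B\in\mathcal{S}_B$, which is consecutive in $\sigma_B$ by the nested-convex condition. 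Call the resulting tree-layout $\T_G$. That $\T_G$ is a valid tree-layout follows from \autoref{obs_alg_blocktree}: two vertices $y,z$ incomparable in $\T_G$ either lie in blocks that are ancestor/descendant along a branch of $\Btree$ (in which case the nested-convex choice places $N(z)\cap B_y$ strictly above $\lca_{\T_G}(y,z)$, forbidding $yz\in E$), or in blocks lying on incomparable branches of $\Btree$, in which case they are separated in $G$ by an ancestor block. That $\T_G$ is $\mathcal{P}_{\sf proper}$-free repeats the case analysis of the final claim in the proof of \autoref{th_canonical}: for any triple $y\prec_{\T_G}v\prec_{\T_G}z$ with $yz\in E$, splitting on whether $v$ lies in the block of $y$, of $z$, or in neither, and using cliqueness of blocks, universality of a block over its associated $S$-neighbourhood, and the nested-convex placement of children, forces $yv\in E$ and $vz\in E$.

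\paragraph{Complexity and main obstacle.} Algorithm~\ref{alg_block_tree} performs at most $n$ iterations, and each iteration needs only to identify an $S$-block in a connected component $C$ of $G-S$, which is polynomial because $S$-maximal vertices of $C$ and universality over $N(S)\cap C$ can be tested directly from the adjacencies. The collections $\mathcal{C}_B$ are read off from the adjacency lists, and \autoref{lem_OPQ_nested_convex} decides $\textsf{Nested-Convex}(\mathcal{C}_B,\mathcal{S}_B)\neq\emptyset$ (and returns the \FPQ{}-tree $\mathsf{T}_B$ together with a permutation $\sigma_B$) in polynomial time. The outer loop over $x\in V$ contributes a factor $n$, yielding an overall polynomial-time algorithm that also outputs an indifference tree-layout when $G$ is proper chordal. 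The main obstacle lies in the reverse direction: the algorithm only verifies the local nested-convex condition at every block, yet one must show that gluing together arbitrary witnesses $\sigma_B$ produces a globally $\mathcal{P}_{\sf proper}$-free tree-layout, which requires carefully using invariant~(2) of Algorithm~\ref{alg_block_tree} to control interactions between blocks lying on incomparable branches of $\Btree$.
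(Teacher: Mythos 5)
Your proposal is correct and follows essentially the same route as the paper: forward direction via \autoref{lem_alg_block_tree} and \autoref{lem_nested_convex}, reverse direction by re-running the construction from the proof of \autoref{th_canonical} on the computed block tree and verifying the resulting extension is an indifference tree-layout. The only difference is that you spell out the reverse direction (the gluing of the per-block permutations and the case analysis for $\mathcal{P}_{\sf proper}$-freeness) in more detail than the paper, which simply defers to the construction and Claim~\ref{cl_indifference_treelayout} inside the proof of \autoref{th_canonical}.
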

\begin{proof}
We prove that Algorithm~\ref{alg_recognition} recognizes proper chordal graphs. Suppose first that $G$ is proper chordal. Then for some vertex $x \in V$ there is an indifference tree-layout $\T_G= (T, r, \rho_G)$ rooted at $x = \rho^{-1}(r)$. By \autoref{lem_alg_block_tree}, Algorithm~\ref{alg_block_tree} computes $\mathcal{B}_G(x)$ and $\Btree_G(x)$.
Let $B \in {\cal B}_{G}(x)$. Then, we can additionally observe that for some vertex $y \in B$, $B=B_{\T_G}(y)\in {\cal B}_{\T_G}(G)$ and thereby 
${\cal C}_{B} = {\cal C}_{\T_G}(y)$. Thus Lemma \ref{lem_nested_convex} implies that $\textsf{Nested-Convex}(\mathcal{C}_{B},\mathcal{S}_{B})\neq \emptyset$. It follows that Algorithm~\ref{alg_recognition} correctly concludes that $G$ is a proper chordal graph.

Now suppose that Algorithm~\ref{alg_recognition} concludes that $G$ is a proper chordal graph. This implies that for some vertex $x\in V$: Algorithm~\ref{alg_block_tree} returns $\Btree_G(x)$; and for every block $B\in\mathcal{B}(x)$, $\mathcal{C}_B$ is a collection of nested sets such that $\textsf{Nested-Convex}(\mathcal{C}_B,\mathcal{S}_B)\neq\emptyset$. Now, using the proof of \autoref{th_canonical} and the construction therein, it is possible to compute (in polynomial time) from $\Btree_G(x)$ and $G$, an indifference tree-layout of $G$ rooted at $x$, certifying that $G$ is proper chordal.

Finally observe that every step of Algorithm~\ref{alg_recognition} (including Algorithm~\ref{alg_block_tree}, see \autoref{lem_alg_block_tree}) requires polynomial time.
\end{proof}

\subsection{Isomorphism}

We observe that, because an \FPQ-hierarchy does not carry enough information to reconstruct the original graph, two non-isomorphic proper chordal graphs $G$ and $G'$ may share an \FPQ-hierarchy (\autoref{fig_non_isomorphic}) satisfying the conditions of \autoref{th_canonical}. 
 More precisely, given an \FPQ-hierarchy of a proper chordal graph $G$ that satisfies the conditions of \autoref{th_canonical}, one can reconstruct an indifference tree-layout $\T$ of $G$. But $\T$ is not sufficient to test the adjacency between a pair of vertices. Indeed, for a given vertex $y$, we cannot retrieve $N(y)\cap A_{\T}(y)$ from $\H_{G}(x)$ since only the intersection of $N(y)$ with the parent block is present in $\H_{G}(x)$. In the example of \autoref{fig_non_isomorphic}, the vertices $w$ and $w'$ are also adjacent to $c$ and $b$, which do not belong to their parent block.

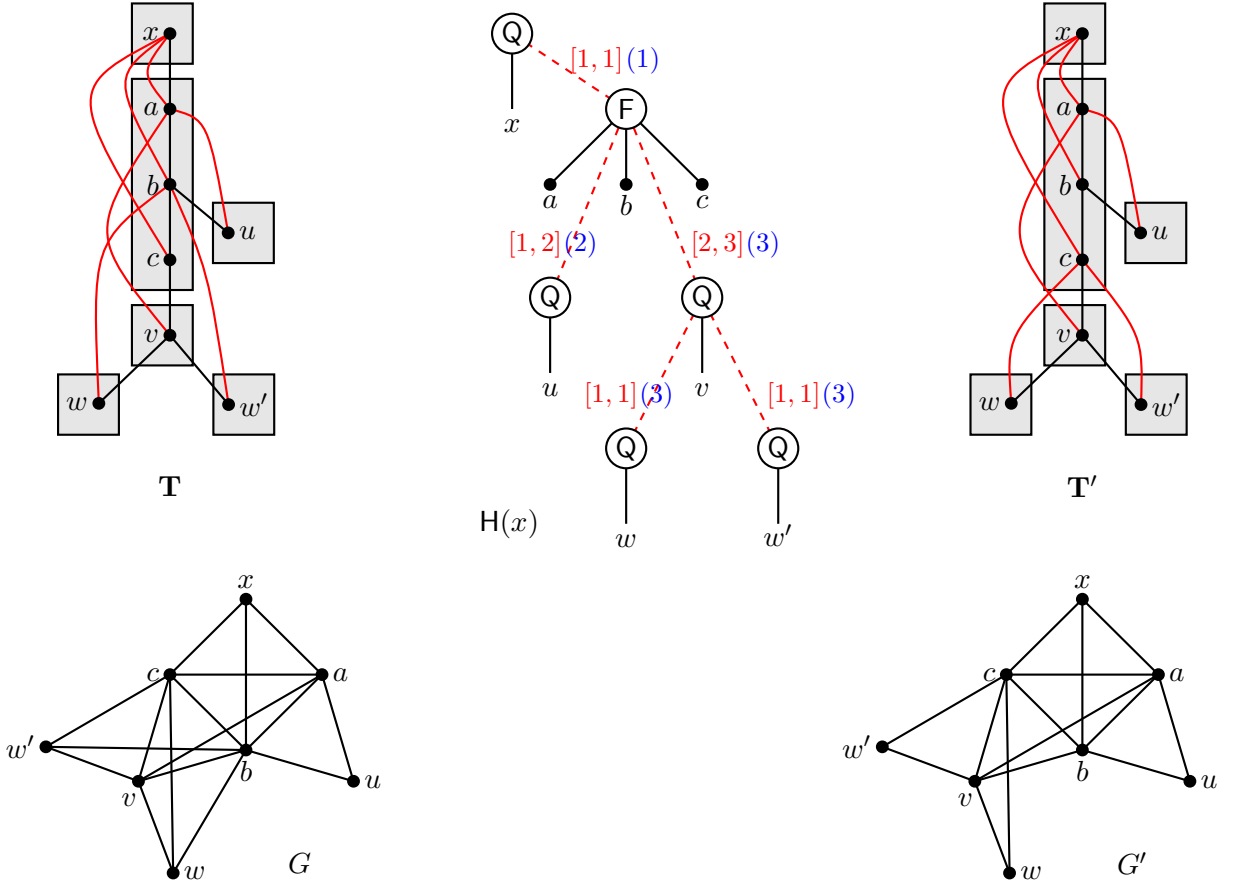
\begin{figure}[h]
\begin{center}
\begin{tikzpicture}[thick,scale=1]
\tikzstyle{sommet}=[circle, draw, fill=black, inner sep=0pt, minimum width=4pt]          

\tikzstyle{Onode}=[circle, draw, fill=white, inner sep=0pt, minimum width=15pt]          
\tikzstyle{Pnode}=[circle, draw, fill=white, inner sep=0pt, minimum width=15pt]          
\tikzstyle{Qnode}=[circle, draw, fill=white, inner sep=0pt, minimum width=15pt]

\begin{scope}[]
\draw[fill=gray!20] (-0.5,4.6) rectangle (0.3,5.4) ;
\draw[fill=gray!20] (-0.5,1.6) rectangle (0.3,4.4) ;
\draw[fill=gray!20] (-0.5,0.6) rectangle (0.3,1.4) ;

\begin{scope}[xshift=-0.4cm,yshift=1cm]
\draw[shift=(-130:1.2),fill=gray!20] (-0.3,-0.4) rectangle (0.5,0.4) ;
\end{scope}

\begin{scope}[shift=(90:1)]
\draw[shift=(-50:1.2),fill=gray!20] (-0.2,-0.4) rectangle (0.6,0.4) ;
\end{scope}

\begin{scope}[shift=(90:3)]
\draw[shift=(-40:1),fill=gray!20] (-0.2,-0.4) rectangle (0.6,0.4) ;
\end{scope}

\draw  (90:5) -- (90:4) -- (90:3) -- (90:2) -- (90:1)  ;
\draw[shift=(90:3)] (0:0) -- (-40:1) ;
\draw[shift=(90:1)] (0:0) -- (-50:1.2) ;
\draw[shift=(90:1)] (0:0) -- (-136:1.3) ;

\draw[red,thick] (90:2) .. controls (108:4.5) .. (90:5) ;
\draw[red,thick] (90:3) .. controls (100:4.5) .. (90:5) ;
\draw[red,thick] (90:4) .. controls (95:4.5) .. (90:5) ;
\draw[red,thick] (90:1) .. controls (115:2.6) .. (90:4) ;
\draw[red,thick,shift=(90:3)] (-40:1) .. controls (60:1) .. (90:1) ;
\draw[red,thick,shift=(90:1)] (-50:1.2) .. controls (60:1) .. (90:2) ;
\draw[red,thick,shift=(90:1)] (-136:1.3) .. controls (130:1.6) .. (90:2) ;

\node[left] (y) at (90:5) {$x$};
\node[left] (5) at (90:4) {$a$};
\node[left] (4) at (90:3) {$b$};
\node[left] (3) at (90:2) {$c$};
\node[left] (2) at (90:1) {$v$};
\node[right,shift=(90:3)] (u) at (-40:1) {$u$};
\node[left,shift=(90:1)] (w) at (-136:1.3) {$w$};
\node[right,shift=(90:1)] (w') at (-50:1.2) {$w'$};

\draw[shift=(90:3)] (-40:1) node[sommet]{};
\draw[shift=(90:1)] (-50:1.2) node[sommet]{};
\draw[shift=(90:1)] (-136:1.3) node[sommet]{};
\draw (90:5) node[sommet]{};
\draw (90:4) node[sommet]{};
\draw (90:3) node[sommet]{};
\draw (90:2) node[sommet]{};
\draw (90:1) node[sommet]{};

\node (T) at (90:-1) {$\T$};

\end{scope}

\begin{scope}[xshift=1cm,yshift=-3.5cm,rotate=-90]
\draw  (90:1) -- (90:-1)  ;
\draw  (0:1) -- (180:1)  ;
\draw  (0:1) -- (90:1)  ;
\draw  (0:1) -- (90:-1)  ;
\draw  (180:1) -- (90:1)  ;
\draw  (180:1) -- (90:-1)  ;
\draw  (45:2) -- (90:1)  ;
\draw  (45:2) -- (0:1)  ;
\draw  (-45:2) -- (-90:1)  ;
\draw  (-45:2) -- (0:1)  ;
\draw (-20:2.8) -- (-45:2) ;
\draw (-20:2.8) -- (0:1) ;
\draw (-20:2.8) -- (-90:1) ;

\draw (-70:2.8) -- (-45:2) ;
\draw (-70:2.8) -- (0:1) ;
\draw (-70:2.8) -- (-90:1) ;
\draw (-45:2) -- (90:1) ;

\draw (0:1) node[sommet]{};
\draw (180:1) node[sommet]{};
\draw (90:1) node[sommet]{};
\draw (-90:1) node[sommet]{};
\draw (45:2) node[sommet]{};
\draw (-45:2) node[sommet]{};
\draw (-20:2.8) node[sommet]{};
\draw (-70:2.8) node[sommet]{};

\node[above] (xx) at (180:1) {$x$};
\node[right] (aa) at (90:1) {$a$};
\node[below] (bb) at (0:1) {$b$};
\node[left] (cc) at (-90:1) {$c$};
\node[right] (uu) at (45:2) {$u$};
\node[below] (vv) at (-47:2.1) {$v$};
\node[right] (ww) at (-20:2.8) {$w$};
\node[left] (ww') at (-70:2.8) {$w'$};

\node[left] (G) at (2.5,1) {$G$};

\end{scope}

\begin{scope}[shift=(0:12)]
\draw[fill=gray!20] (-0.5,4.6) rectangle (0.3,5.4) ;
\draw[fill=gray!20] (-0.5,1.6) rectangle (0.3,4.4) ;
\draw[fill=gray!20] (-0.5,0.6) rectangle (0.3,1.4) ;

\begin{scope}[xshift=-0.4cm,yshift=1cm]
\draw[shift=(-130:1.2),fill=gray!20] (-0.3,-0.4) rectangle (0.5,0.4) ;
\end{scope}

\begin{scope}[shift=(90:1)]
\draw[shift=(-50:1.2),fill=gray!20] (-0.2,-0.4) rectangle (0.6,0.4) ;
\end{scope}

\begin{scope}[shift=(90:3)]
\draw[shift=(-40:1),fill=gray!20] (-0.2,-0.4) rectangle (0.6,0.4) ;
\end{scope}

\draw  (90:5) -- (90:4) -- (90:3) -- (90:2) -- (90:1)  ;
\draw[shift=(90:3)] (0:0) -- (-40:1) ;
\draw[shift=(90:1)] (0:0) -- (-50:1.2) ;
\draw[shift=(90:1)] (0:0) -- (-136:1.3) ;

\draw[red,thick] (90:2) .. controls (108:4.5) .. (90:5) ;
\draw[red,thick] (90:3) .. controls (100:4.5) .. (90:5) ;
\draw[red,thick] (90:4) .. controls (95:4.5) .. (90:5) ;
\draw[red,thick] (90:1) .. controls (115:2.6) .. (90:4) ;
\draw[red,thick,shift=(90:3)] (-40:1) .. controls (60:1) .. (90:1) ;
\draw[red,thick,shift=(90:1)] (-50:1.2) .. controls (0:0.8) .. (90:1) ;
\draw[red,thick,shift=(90:1)] (-136:1.3) .. controls (-180:1) .. (90:1) ;

\node[left] (y) at (90:5) {$x$};
\node[left] (5) at (90:4) {$a$};
\node[left] (4) at (90:3) {$b$};
\node[left] (3) at (90:2) {$c$};
\node[left] (2) at (90:1) {$v$};
\node[right,shift=(90:3)] (u) at (-40:1) {$u$};
\node[left,shift=(90:1)] (w) at (-136:1.3) {$w$};
\node[right,shift=(90:1)] (w') at (-50:1.2) {$w'$};

\draw[shift=(90:3)] (-40:1) node[sommet]{};
\draw[shift=(90:1)] (-50:1.2) node[sommet]{};
\draw[shift=(90:1)] (-136:1.3) node[sommet]{};
\draw (90:5) node[sommet]{};
\draw (90:4) node[sommet]{};
\draw (90:3) node[sommet]{};
\draw (90:2) node[sommet]{};
\draw (90:1) node[sommet]{};

\node (T') at (90:-1) {$\T'$};

\end{scope}

\begin{scope}[xshift=12cm,yshift=-3.5cm,rotate=-90]
\draw  (90:1) -- (90:-1)  ;
\draw  (0:1) -- (180:1)  ;
\draw  (0:1) -- (90:1)  ;
\draw  (0:1) -- (90:-1)  ;
\draw  (180:1) -- (90:1)  ;
\draw  (180:1) -- (90:-1)  ;
\draw  (45:2) -- (90:1)  ;
\draw  (45:2) -- (0:1)  ;
\draw  (-45:2) -- (-90:1)  ;
\draw  (-45:2) -- (0:1)  ;
\draw (-20:2.8) -- (-45:2) ;
\draw (-45:2) -- (90:1) ;
\draw (-70:2.8) -- (-90:1) ;
\draw (-20:2.8) -- (-90:1) ;

\draw (-70:2.8) -- (-45:2) ;

\draw (0:1) node[sommet]{};
\draw (180:1) node[sommet]{};
\draw (90:1) node[sommet]{};
\draw (-90:1) node[sommet]{};
\draw (45:2) node[sommet]{};
\draw (-45:2) node[sommet]{};
\draw (-20:2.8) node[sommet]{};
\draw (-70:2.8) node[sommet]{};

\node[above] (xx) at (180:1) {$x$};
\node[right] (aa) at (90:1) {$a$};
\node[below] (bb) at (0:1) {$b$};
\node[left] (cc) at (-90:1) {$c$};
\node[right] (uu) at (45:2) {$u$};
\node[below] (vv) at (-47:2.1) {$v$};
\node[right] (ww) at (-20:2.8) {$w$};
\node[left] (ww') at (-70:2.8) {$w'$};

\node[left] (G) at (2.5,1) {$G'$};

\end{scope}

\begin{scope}[xshift=5cm,yshift=1cm]

\begin{scope}[xshift=0cm,yshift=4cm]
\draw (-0.5,0) -- (-0.5,-1);

\draw[red,thick,dashed] (-0.5,0) -- (1,-1) ;
\node[red] (Iy) at (-30:0.7) {\small $[1,1]$};
\node[blue] (Iu) at (-16:1.27) {\small{$(1)$}};

\draw (-0.5,0) node[Qnode]{};
\node[] (p) at (-0.5,0) {\textsf{Q}};
\node[below] (a) at (-0.5,-1) {$x$};
\end{scope}


\begin{scope}[xshift=0cm,yshift=0.5cm]
\draw[red,thick,dashed] (0,0) -- (1,2.5) ;
\node[red] (Iu) at (-0.2,0.7) {\small{$[1,2]$}};
\node[blue] (Iu) at (0.4,0.7) {\small{$(2)$}};
\end{scope}


\begin{scope}[xshift=0cm,yshift=0.5cm]
\draw[red,thick,dashed] (2,0) -- (1,2.5) ;
\node[red] (Iu) at (2.2,0.7) {\small{$[2,3]$}};
\node[blue] (Iu) at (2.8,0.7) {\small{$(3)$}};
\end{scope}


\begin{scope}[xshift=1cm,yshift=-1.5cm]
\draw[red,thick,dashed] (0,0) -- (1,2) ;
\node[red] (Iu) at (-0.2,0.7) {\small{$[1,1]$}};
\node[blue] (Iu) at (0.4,0.7) {\small{$(3)$}};
\end{scope}


\begin{scope}[xshift=3cm,yshift=-1.5cm]
\draw[red,thick,dashed] (0,0) -- (-1,2) ;
\node[red] (Iu) at (0.2,0.7) {\small{$[1,1]$}};
\node[blue] (Iu) at (0.8,0.7) {\small{$(3)$}};
\end{scope}

\begin{scope}[shift=(90:2)]
\draw (1,1) -- (0,0);
\draw (1,1) -- (1,0);
\draw (1,1) -- (2,0);

\draw (0,0) node[sommet]{};
\draw (1,0) node[sommet]{};
\draw (2,0) node[sommet]{};

\draw (1,1) node[Onode]{};
\node[] (p) at (1,1) {\textsf{F}};

\node[below] (a) at (0,0) {$a$};
\node[below] (b) at (1,0) {$b$};
\node[below] (c) at (2,0) {$c$};
\end{scope}

\begin{scope}[xshift=0cm,yshift=0.5cm]

\draw (0,0) -- (0,-1);
\draw (0,0) node[Qnode]{};
\node[] (pu) at (0,0) {\textsf{Q}};
\node[below] (a) at (0,-1) {$u$};

\end{scope}

\begin{scope}[xshift=2cm,yshift=0.5cm]
\draw (0,0) -- (0,-1);

\draw (0,0) node[Qnode]{};
\node[] (pv) at (0,0) {\textsf{Q}};
\node[below] (a) at (0,-1) {$v$};
\end{scope}

\begin{scope}[xshift=1cm,yshift=-1.5cm]

\draw (0,0) -- (0,-1);
\draw (0,0) node[Qnode]{};
\node[] (p) at (0,0) {\textsf{Q}};
\node[below] (a) at (0,-1) {$w$};
\end{scope}

\begin{scope}[xshift=3cm,yshift=-1.5cm]

\draw (0,0) -- (0,-1);
\draw (0,0) node[Qnode]{};
\node[] (p) at (0,0) {\textsf{Q}};
\node[below] (a) at (0,-0.86) {$w'$};
\end{scope}

\node[left] (H) at (0,-2.5) {$\H(x)$};

\end{scope}

\end{tikzpicture}
\end{center}
\caption{\label{fig_non_isomorphic} Two proper chordal graphs $G$ and $G'$ with their respective indifference tree-layouts $\T$ and $\T'$. We observe that $G$ and $G'$ are not isomorphic, but their respective skeleton trees $T_G(x)$ and $T_{G'}(x)$ are. Moreover, $\H(x)$ is an \FPQ-hierarchy such that $\mathfrak{T}_{\FPQ}(\H(x))$ contains all the indifference tree-layouts rooted at $x$ of $G$ and of $G'$. We obtain $\H^*(x)$ for $G$ by adding to $\H(x)$ the blue labels on the skeleton edges.}
\end{figure}

\newcommand{\A}{\hat{A}}

Let $\H_{G}(x)$ be the \FPQ-hierarchy satisfying the conditions of \autoref{th_canonical}, we define the \emph{indifference \FPQ-hierarchy}, denoted $\H^*_{G}(x)$, obtained from $\H_{G}(x)$ by adding to every skeleton edge $e$, a label $\A(e)$. Suppose that $e$ is incident to the root of the \FPQ-tree of the block $B$, then 
we set $\A(e)=|N(B)\cap A_{\T}(B)|$. We say that two indifference \FPQ-hierarchies $\H^*_1$ and $\H^*_2$ are equivalent, denoted $\H^*_1\approx^*_{\FPQ} \H^*_2$, if $\H_1\approx_{\FPQ} \H_2$ and for every pair of mapped skeleton edges $e_1$ and $e_2$ we have $\A(e_1)=\A(e_2)$.

Let $\mathcal{S}_1\in 2^{X_1}$ be a set of subsets of $X_1$ and $\mathcal{S}_2 \in 2^{X_2}$ be a set of subsets of $X_2$. We say that $\mathcal{S}_1$ and $\mathcal{S}_2$ are \emph{isomorphic} if there exists a bijection $f:X_1\rightarrow X_2$ such that $S_1\in\mathcal{S}_1$ if and only if $S_2=\{f(x)\mid x\in S_1\}\in \mathcal{S}_2$. For $S_1\subseteq X_1$, we denote by $f(S_1)=\{f(y)\mid y\in S_1\}$.

\begin{lemma} \label{lem_isomorphism_equivalence}
Let $G_1=(V_1,E_1)$ and $G_2=(V_2,E_2)$  be two (connected) proper chordal graphs. Let $\H^*_1(x_1)$ be an indifference \FPQ-hierarchy of $G_1$ and  $\H^*_2(x_2)$ be an indifference \FPQ-hierarchy of $G_2$. Then $\H^*_1(x_1)\approx^*_{\FPQ} \H^*_2(x_2)$ if and only if $G_1$ and $G_2$ are isomorphic with $x_1$ mapped to $x_2$.
\end{lemma}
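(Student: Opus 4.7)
The plan is to establish both directions by arguing that the indifference \FPQ-hierarchy $\H^*_G(x)$ determines the graph $G$ up to isomorphism.

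For the forward implication, suppose $f : V_1 \to V_2$ is a graph isomorphism with $f(x_1) = x_2$. Pick any indifference tree-layout $\T_1$ of $G_1$ rooted at $x_1$; the image $\T_2 := f(\T_1)$ is a tree-layout of $G_2$ that is $\mathcal{P}_{\sf proper}$-free, because each condition of \autoref{lem_indifference_treelayout} is adjacency-based and thus preserved by $f$. The canonical construction of \autoref{th_canonical} (block tree, nested-set collections, \FPQ-trees and skeleton labels) uses only graph-theoretic data, so applying it to $(G_1,\T_1)$ and then $f$-relabeling produces an \FPQ-hierarchy that satisfies the conditions of \autoref{th_canonical} for $(G_2,x_2)$. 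By the uniqueness part of \autoref{th_canonical}, this hierarchy is $\FPQ$-equivalent to $\H_2(x_2)$, hence $\H_1(x_1) \approx_{\FPQ} \H_2(x_2)$. Finally, each label $\A(e) = |N(B)\cap A_{\T}(B)|$ is the cardinality of an adjacency-based set, hence preserved by $f$, so $\H^*_1(x_1) \approx^*_{\FPQ} \H^*_2(x_2)$.

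For the backward implication, assume $\H^*_1(x_1) \approx^*_{\FPQ} \H^*_2(x_2)$. A sequence of $\perm$ and $\rev$ operations turns $\H^*_1(x_1)$ into a labeled-ordered-tree isomorphic copy of $\H^*_2(x_2)$; this induces a bijection $f : V_1 \to V_2$ between the leaf sets (with $f(x_1)=x_2$) and singles out indifference tree-layouts $\T_1 \in \mathfrak{T}_{\FPQ}(\H_1(x_1))$ of $G_1$ and $\T_2 \in \mathfrak{T}_{\FPQ}(\H_2(x_2))$ of $G_2$ such that $f$ is an isomorphism of rooted trees from $\T_1$ to $\T_2$, respecting the block partition, the block tree, the skeleton labels $[a,b]$, and the labels $\A$. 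It remains to show $f$ preserves edges. Consider $u, v \in V_1$ with $u \prec_{\T_1} v$, and let $B_u, B_v$ be their blocks. If $B_u = B_v$, then \autoref{lem_block_consecutive} forces $uv\in E_1$ and symmetrically $f(u)f(v)\in E_2$. Otherwise $B_u$ is a strict ancestor of $B_v$ in the block tree, and the issue reduces to whether $u \in N(B_v) \cap A_{\T_1}(B_v)$. By \autoref{lem_indifference_treelayout}(2) applied to the topmost vertex $y$ of $B_v$, $N[y]$ induces a connected subtree of $T_1$, so its restriction to the root-to-$y$ path is an interval ending at $y$; hence $N(B_v)\cap A_{\T_1}(B_v)$ is a bottom suffix of the ancestor-block path of $B_v$, and once its cardinality $\A(e)$ and the sizes of the ancestor blocks are known, this suffix is uniquely determined. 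Since the $\approx^*_{\FPQ}$-equivalence preserves block sizes, the block tree, and the $\A$-labels, the same suffix is identified in $\T_2$ relative to $f(B_v)$. Thus $u \in N(B_v) \iff f(u) \in N(f(B_v))$, giving $uv \in E_1 \iff f(u)f(v) \in E_2$.

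The main obstacle is the suffix property used in the backward direction, namely that $N(B)\cap A_{\T}(B)$ is a bottom segment of the ancestor path of $B$. This rests on \autoref{lem_indifference_treelayout}(2): the closed neighborhood of any vertex induces a connected subtree of $T$, and its restriction to a root-to-vertex path is therefore an interval terminating at that vertex, which is exactly the suffix claim. Once this is in hand, the indifference \FPQ-hierarchy carries precisely the information needed to reconstruct adjacencies between vertices and their ancestors in the chosen tree-layout, making $f$ a graph isomorphism.
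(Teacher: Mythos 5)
Your proof is correct and follows essentially the same route as the paper's: the forward direction transports an indifference tree-layout through the graph isomorphism and observes that the block tree, nested-set collections, \FPQ-trees and $\A$-labels are all adjacency-determined invariants, while the backward direction extracts isomorphic tree-layouts from the equivalent hierarchies and recovers adjacency from block cliques together with the fact that $N(B)\cap A_{\T}(B)$ is the consecutive segment of $\A(e)$ ancestors immediately above the block. Your ``bottom suffix'' formulation is just a rephrasing of the paper's distance criterion $d(y_1,z_1)\leq \A(e_1)+|A_{\T_1}(z_1)\cap B_1|$, and the only (trivial) omission is the remark that incomparable pairs are non-adjacent in both graphs because $f$ preserves incomparability in the tree-layouts.
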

\begin{proof}
Suppose that $f$ is a graph isomorphism between $G_1$ and $G_2$ such that $f(x_1)=x_2$. Let $\T_1=(T,r,\rho)$ be an indifference tree-layout of $G_1$ such that $\rho^{-1}(r)=x_1$. Then as for every pair of vertices $y$ and $z$, $yz\in E_1$ if and only if $f(y)f(z)\in E_2$, $\T_2=(T,r,\rho\circ f)$ is an indifference tree-layout of $G_2$ such that $(\rho\circ f)^{-1}(r)=x_2$.
We let  $\H_1(x_1)$ and  $\H_2(x_2)$ respectively denote the  \FPQ-hierarchy of $G_1$ and $G_2$ such that $\mathfrak{T}_{\FPQ}( \H_1(x_1))$ contains $\T_1$ and  $\mathfrak{T}_{\FPQ}( \H_2(x_2))$ contains $\T_2$.

We also observe that $\mathcal{B}_{\T_1}(G_1)$ and $\mathcal{B}_{\T_2}(G_2)$ are isomorphic partitions of $V_1$ and $V_2$ respectively.
It follows that the skeleton trees $\Btree_{G_1}(x_1)$ and $\Btree_{G_2}(x_2)$ are isomorphic trees. 
So if we consider $y_1\in V_1$ such that $B_{\T_1}(y_1)\in \mathcal{B}_{\T_1}(G_1)$, then
$B_{\T_2}(y_2)\in \mathcal{B}_{\T_2}(G_2)$ where $y_2=f(y_1)$. Let $C_{\T_1}(x_1)$ be the connected component of $G_1-A_{\T_1}(x_1)$ containing $x_1$. Then the connected component of $G_2-A_{\T_2}(x_2)$ is the set $C_{\T_2}(x_2)=f(C_{\T_1}(x_1))$ and the collections $\mathcal{C}_{\T_1}(x_1)$ and $\mathcal{C}_{\T_2}(x_2)$ of nested sets are isomorphic. We set $\mathcal{S}_1=\bigcup_{\mathcal{N}_1\in\mathcal{C}_{\T_1}(x_1)} \mathcal{N}_1$ and $\mathcal{S}_2=\bigcup_{\mathcal{N}_2\in\mathcal{C}_{\T_2}(x_2)} \mathcal{N}$ and let denote  $\mathsf{T}_1$ an \FPQ-tree such that $\mathfrak{S}_{\T_1}=\textsf{Nested-Convex}(\mathcal{C}_{\T_1}(x_1),\mathcal{S}_1)$ and $\mathsf{T}_2$ an \FPQ-tree such that $\mathfrak{S}_{\T_2}=\textsf{Nested-Convex}(\mathcal{C}_{\T_2}(x_2),\mathcal{S}_2)$. Then, as $\mathcal{S}_1$ and $\mathcal{S}_2$ are isomorphic, we have that $\mathsf{T}_1\equiv_{\FPQ{}}\mathsf{T}_2$. In turn, this implies that $\mathsf{H}_1(x_1)\approx_{\FPQ}\mathsf{H}_2(x_2)$. 

To conclude, we observe that as $\mathcal{B}_{\T_1}(G_1)$ and $\mathcal{B}_{\T_2}(G_2)$ are isomorphic partitions and   $\Btree_{G_1}(x_1)$ and  $\Btree_{G_2}(x_2)$  are isomorphic trees, for every skeleton edge $e_1$ of  $\Btree_{G_1}(x_1)$, the label $A(e_1)$ is equal to the label $A(e_2)$ where $e_2$ is the corresponding skeleton edge of  $\Btree_{G_2}(x_2)$. Thereby we can conclude that if $G_1$ and $G_2$ are isomorphic graphs, then $\mathsf{H}^*_1(x_1)\approx^*_{\FPQ}\mathsf{H}^*_2(x_2)$. 

\medskip
Suppose now that $\mathsf{H}^*_1(x_1)\approx^*_{\FPQ}\mathsf{H}^*_2(x_2)$. It follows that $\mathsf{H}_2(x_2)$ can be turned into an \FPQ-hierarchy $\widetilde{\mathsf{H}}_2(x_2)$ isomorphic to $\mathsf{H}_1(x_1)$. Let $\T_1$ and $\T_2$ be respectively the trees that can be computed from $\widetilde{\mathsf{H}}_2(x_2)$ and $\mathsf{H}_1(x_1)$. By Claim~\ref{cl_indifference_treelayout} in the proof of \autoref{th_canonical}, $\T_1$ and $\T_2$ are respectively indifference tree-layouts of $G_1$ and $G_2$. We observe that $\T_1$ and $\T_2$ are isomorphic rooted trees, respectively rooted at $x_1$ and $x_2$.
Let $f$ be the isomorphism from  $\T_1$ to $\T_2$ 
We claim that $f$ is an isomorphism from $G_1$ to $G_2$. Let $y_1$ and $z_1$ be two vertices of $G_1$. We denote $y_2=f(y_1)$ and $z_2=f(z_1)$ the corresponding vertices of $G_2$. First as $\T_1$ and $\T_2$ are tree-layouts, if $y_1\not\prec_{\T_1} z_1$  and $z_1\not\prec_{\T_1} y_1$, then $y_1z_1\notin E_1$. This implies that $y_2z_2\notin E_2$. Suppose without loss of generality that $y_1\prec_{\T_1} z_1$, and hence $y_2\prec_{\T_2} z_2$.
As $\mathsf{H}^*_1(x_1)\approx^*_{\FPQ}\mathsf{H}^*_2(x_2)$, we have that $\mathcal{B}_{\T_1}(G_1)$ and $\mathcal{B}_{\T_2}(G_2)$ are isomorphic. It follows that $y_1$ and $z_1$ belong to the same block of $\mathcal{B}_{\T_1}(G_1)$ if and only if $y_2$ and $z_2$ belong to the same block of $\mathcal{B}_{\T_2}(G_2)$.
As blocks induce cliques, if $y_1$ and $z_1$ belong to the same block of $B_1\in\mathcal{B}_{\T_1}(G_1)$, then $y_1z_1\in E_2$ and $y_2z_2\in E_2$.
So suppose that the block of $B_1\in\mathcal{B}_{\T_1}(G_1)$ containing $z_1$ does not contain $y_1$. Let $B_2\in\mathcal{B}_{\T_2}(G_2)$ be the corresponding block that contains $z_2$ but not $y_2$. Let $e_1$, respectively $e_2$, be the skeleton edge of $\H_1$, respectively $\H_2$ incident to the root of the \FPQ-tree of $B_1$, respectively of $B_2$. As $\mathsf{H}^*_1(x_1)\approx^*_{\FPQ}\mathsf{H}^*_2(x_2)$, we have $\A(e_1)=\A(e_2)$. We observe that $y_1z_1\in E_1$ if and only if the distance along the path from $x_1$ to $z_1$ in $\T_1$, between $y_1$ and $z_1$ is at most $\A(e_1)+|A_{\T_1}(z_1)\cap B_1|$. Likewise, $y_2z_2\in E_2$ if and only if the distance along the path from $x_2$ to $z_2$ in $\T_2$, between $y_2$ and $z_2$ is at most $\A(e_2)+|A_{\T_2}(z_2)\cap B_2|$. Since $\T_1$ and $\T_2$ are isomorphic, and since $\mathcal{B}_{\T_1}(G_1)$ and $\mathcal{B}_{\T_2}(G_2)$ are isomorphic, this implies that $|A_{\T_1}(z_1)\cap B_1|=|A_{\T_2}(z_2)\cap B_2|$. It follows that $y_1z_1\in E_1$ if and only if $y_2z_2\in E_2$. We conclude that $G_1$ and $G_2$ are isomorphic graphs.
\end{proof}

From \autoref{lem_isomorphism_equivalence},  testing graph isomorphism on proper chordal graphs reduces to testing the equivalence between two indifference \FPQ-hierarchies. To that aim, we use a similar approach to the one developed for testing interval graph isomorphism~\cite{LuekerB79}. That is, we adapt the standard unordered tree isomorphism algorithm that assigns to every unordered tree a canonical \emph{ isomorphism code}~\cite{Valiente02,AhoHU74}. Testing isomorphism then amounts to testing equality between two isomorphism codes.

\newcommand{\code}{\mathsf{code}}
\newcommand{\size}{\mathsf{size}}
\newcommand{\info}{\mathsf{label}}
\newcommand{\type}{\mathsf{type}}
\newcommand{\inflex}{<_{\mathsf{lex}}}

Let $\H^*$ be an indifference \FPQ-hierarchy of a proper chordal graph $G=(V,E)$. Intuitively, the isomorphism code of $\H^*$ is a string obtained by concatenating information about the root node of $\H^*$ and the isomorphism codes of the sub-hierarchies rooted at its children. To guarantee the canonicity of the isomorphism code of  $\H^*$, some of the codes of these sub-hierarchies need to be sorted lexicographically. To that aim, we use the following convention:
$$\mathsf{L}\inflex \mathsf{F}\inflex\mathsf{P}\inflex\mathsf{Q}\inflex 0 \inflex 1 \dots \inflex n \inflex \dots,$$
Moreover the separating symbols (such as brackets, commas\dots) used in the isomorphism code for the sake of readability are irrelevant for the sort.

Before formally describing the isomorphism code of $\H^*$, let us remind that, in an indifference \FPQ-hierarchy, we can classify the children of any node $t$ in two categories: we call a node $t'$ a \emph{skeleton child} of $t$ if the tree edge $e=tt'$ is a skeleton edge of $\H^*$, otherwise we call it a \emph{block child} of $t$.  We observe that the block children  of a node $t$ belong with $t$ to the \FPQ-tree of some block of $\mathcal{B}_G(x)$. It follows from the definition of an \FPQ-tree, that the block children of a given node $t$ are ordered and depending on the type of $t$, these nodes can be reordered. On the contrary, the skeleton children of a node $t$ are not ordered. 

For every node $t$ of $\H^*$, we define a code, denoted $\code(t)$. We will define the isomorphism code of $\H^*$ as $\code(\H^*)=\code(r)$, where $r$ is the root node of $\H^*$.  We let $b_1, \dots, b_k$ denote the block children of node $t$ (if any, and ordered from $1$ to $k$) and $s_1, \dots, s_{\ell}$ denote the skeleton children of $t$ (if any). For a node $t$, the set of \emph{eligible permutations} of the indices $[1,k]$ of its chidlren depends on $\type(t)$:
\begin{itemize}
\item if $\type(t)=\mathsf{F}$, then the identity permutation is the unique eligible permutation;
\item if $\type(t)=\mathsf{P}$, then every permutation is eligible;
\item if $\type(t)=\mathsf{Q}$, then the identity or its reverse permutation are the two eligible permutations.
\end{itemize}

The code of $t$, denoted $\code(t)$, is obtained by minimizing with respect to $\inflex$ over all eligible permutations $\beta$ of $t$:

$$\mathsf{code}(t,\beta)=\left\{
\begin{array}{l}
\size(t)\circ\type(t)\circ~\\
~~~~~~~~\code(b_{\beta(1)})\circ \dots \circ \code(b_{\beta(k)}) \circ~\\
~~~~~~~~~~~~~~~~~~~~~~\info(s_{\pi_{\beta}(1)},\beta)\circ\code(s_{\pi_{\beta}(1)})\circ\dots\circ\info(s_{\pi_{\beta}(\ell)},\beta)\circ\code(s_{\pi_{\beta}(\ell)})
\end{array}
\right.
$$
where:
\begin{itemize}
\item $\type(t)\in\{\mathsf{L},\mathsf{F},\mathsf{P},\mathsf{Q}\}$, indicates whether $t$ a leaf ($\mathsf{L}$), a $\mathsf{F}$-node, a $\mathsf{P}$-node, or a $\mathsf{Q}$-node.
\item $\size(t)\in\mathbb{N}$, stores the number of nodes in the sub-hierarchy rooted at $t$ (including $t$).
\item $\info(s,\beta)$, with $s$ being a skeleton child $s$ of $t$ and $\beta$ being a permutation of $[1,k]$.  Let $e$ be the skeleton edge of $\H^*$ between $s$ and $t$. If $I(e)=[a,b]$, we set $I^c(e)=[k+1-b,k+1-a]$. Then, we set $\info(s)=\langle I^{\beta}(e),A(e)\rangle$, where 
$$I^{\beta}(e)=\left\{
\begin{array}{cl}
I(e), & \mbox{$\beta$ is the identity permutation} \\
I^c(e), & \mbox {otherwise.}
\end{array}
\right.
$$

\item $\pi_{\beta}$ is, for some permutation $\beta$ of $[1,k]$, a permutation of $[1,\ell]$ that minimizes, with respect to $\inflex$: 
$$\info(s_{\pi_{\beta}(1)},\beta)\circ\code(s_{\pi_{\beta}(1)})\circ\dots\circ\info(s_{\pi_{\beta}(\ell)},\beta)\circ\code(s_{\pi_{\beta}(\ell)}).$$
\end{itemize}

\medskip
\noindent
\textbf{Remark.} Observe that for a $\mathsf{P}$-node $t$, the permutation $\pi_{\beta}$ is independant of the choice of $\beta$. Indeed, thanks to \autoref{lem_partitive}, $I(e)=[1,k]$ and thereby $I^{\beta}(e)=I(e)$. So, if $t$ is an $\mathsf{F}$-node or a $\mathsf{P}$-node, then $\code(t)$ can be greedily computed in polynomial time. If $t$ is a $\mathsf{Q}$-node, $\code(t)$ can still be computed in polynomial time by comparing $\code(t,\beta)$ and $\code(t,\overline{\beta})$ where $\beta$ is the identity permutation and $\overline{\beta}$ its reverse.

\begin{lemma} \label{lem_isomorphism_code}
Let $\H_1^*$ and $\H_2^*$ be indifference \FPQ-hierarchies of the graphs $G_1$ and $G_2$ respectively.
Then $\H_1^* \approx^*_{\FPQ} \H_2^*$ if and only if $\code(\H_1^*) = \code(\H_2^*)$.
\end{lemma}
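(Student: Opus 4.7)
The plan is a double induction on the size of the hierarchies, proving the stronger statement for subhierarchies rooted at any node $t$: two subhierarchies $\H_1^*\!\!\mid_{t_1}$ and $\H_2^*\!\!\mid_{t_2}$ are $\approx^*_{\FPQ}$-equivalent if and only if $\code(t_1)=\code(t_2)$. The base case (leaf nodes) is immediate since every leaf produces the code $1\circ\mathsf{L}$ and any two leaves are equivalent.

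For the forward direction, I would observe that the operations allowed by $\approx^*_{\FPQ}$—namely $\perm(u)$ at any $\mathsf{P}$-node and $\rev(u)$ at any $\mathsf{Q}$-node, with the reversal simultaneously transforming $I(e)\mapsto I^c(e)$ on every skeleton edge $e$ incident to $u$—correspond exactly to the set of eligible permutations $\beta$ enumerated in the definition of $\code(t,\beta)$. Indeed, an $\mathsf{F}$-node admits only the identity (matching $\approx^*_{\FPQ}$), a $\mathsf{P}$-node admits every permutation, and a $\mathsf{Q}$-node admits the identity or its reverse, and $\info(s,\beta)$ is defined so that reversal at a $\mathsf{Q}$-node correctly turns $I(e)$ into $I^c(e)$ while leaving $\A(e)$ unchanged (as required by $\approx^*_{\FPQ}$). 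Taking the $\inflex$-minimum over eligible $\beta$ at $t$ and over permutations $\pi_\beta$ of the unordered skeleton children thus yields a value invariant under the $\approx^*_{\FPQ}$ operations. Combining this with the inductive hypothesis applied to each block and skeleton child yields $\code(t_1)=\code(t_2)$.

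For the backward direction, I would show that the code can be decoded unambiguously: the leading $\size(t)$ field delimits the subcode, $\type(t)$ determines the node type, and then the concatenation of block-child subcodes (whose lengths are again read from their own $\size$ prefixes) is followed by the labelled skeleton-child subcodes $\info(s,\beta)\circ\code(s)$. Thus from $\code(t_1)=\code(t_2)$ I can recover matching block children of $t_1$ and $t_2$ in the order prescribed by their respective optimal $\beta_1, \beta_2$, and matching skeleton children in the order prescribed by $\pi_{\beta_1}, \pi_{\beta_2}$. The inductive hypothesis then gives a $\approx^*_{\FPQ}$-equivalence for every pair of matched children. To assemble a global equivalence at $t$, I apply $\perm$ (resp.\ $\rev$) to realise the permutation $\beta_1^{-1}\circ\beta_2$ (resp.\ the required reversal) when the root is a $\mathsf{P}$-node (resp.\ $\mathsf{Q}$-node); the equality of the $\info$ fields ensures that after this operation the interval labels $I^{\beta}(e)$ of incident skeleton edges, together with $\A(e)$, coincide as required.

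The main obstacle will be the coordination of $\mathsf{Q}$-node reversals with skeleton-edge labels: a single $\rev(u)$ must atomically flip the orderings of $u$'s block children \emph{and} transform $I(e)\mapsto I^c(e)$ on every skeleton edge incident to $u$, and one must verify that the code's minimization does not spuriously equalise two non-equivalent hierarchies by picking mismatched local reversals. The definition of $\info(s,\beta)$ is tailored precisely to close this gap—the choice of $\beta$ at $u$ is propagated consistently to each skeleton child's $\info$ before the codes are concatenated—so the parsing step in the backward induction yields a single coherent orientation at every $\mathsf{Q}$-node, from which the required equivalence is obtained.
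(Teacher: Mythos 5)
Your proposal is correct and follows essentially the same route as the paper: the paper packages the backward direction as ``parse $\code(\H^*)$ via the $\size$ prefixes to reconstruct a representative $\H^{*(rep)}$ with $\H^*\approx^*_{\FPQ}\H^{*(rep)}$'' and the forward direction as lexicographic minimality of the code over the equivalence class, which is exactly your decoding step and your observation that the eligible permutations $\beta$ (and the induced relabelling $I^{\beta}(e)$) mirror the $\perm$/$\rev$ operations defining $\approx^*_{\FPQ}$. The only presentational difference is that you run a single induction on subhierarchies while the paper inducts once to establish decodability and then argues both directions through the canonical representative.
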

\begin{proof}
Let $\H^*$ be an indifference \FPQ-hierarchy. We claim that given $\code(\H^*)$, one can deterministically construct an \FPQ-hierarchy $\H^{*(rep)}$ such that $\H^*\approx_{\FPQ}^*\H^{*(rep)}$. To that aim, we need to parse $\code(\H^*)$ in order to identify the children of each node. This can be done using the $\size(\cdot)$ information. The smallest possible indifference \FPQ-hierarchy $\H^*$ contains a  $\mathsf{Q}$-node as root and one leaf. It can clearly be reconstructed from $\code(\H^*)=2 \mathsf{Q} 1 \mathsf{L}$. So suppose that the property is true for every indifference \FPQ-hierarchy containing $i$ nodes with $2\leq i\leq n$ and assume that $\H^*$ contains $n+1$ nodes. Let $r$ be the root of $\H^*$. Then $\size(r)\circ\type(r)\circ\size(c_1)$, where $c$ is the first child of $r$ is a prefix of $\code(\H^*)$. Then the second child $c_2$ of $r$ can be identified after reading $\size(c_1)$ times a $\type(\cdot)$ tag in $\code(\H^*)$. So every child of $r$ can be identified. Moreover, it is possible to distinguish whether a child $c_j$ of $r$ is a block child or a skeleton child by the presence or not of the string $\info(c_j)$ delimited with the special characters $\langle\cdot\rangle$ and to assign each of them the correct $\info(\cdot)$ information. Finally, observe that by construction of $\code(\H^*)$ the block children of $r$ are permuted in a way compatible with its type ($\mathsf{F}$, $\mathsf{P}$ or $\mathsf{Q}$). As every child of $r$ is the root of a sub-hierarchy containing at most $n$ nodes, by the recursive hypothesis, we can conclude that the returned \FPQ-hierarchy $\H^{*(rep)}$ verifies $\H^*\approx_{\FPQ}^*\H^{*(rep)}$. 

From the above discussion, if $\code({\H}_{1}^*) = \code({\H}_2^*)$, then ${\H}_{1}^{*(rep)}$ and $\H_2^{*(rep)}$ are isomorphic \FPQ-hierarchies (in terms of labeled ordered trees). As $\H_1^* \approx^*_{\FPQ} \H_1^{*(rep)}$ and  $\H_2^* \approx^*_{\FPQ} \H_2^{*(rep)}$, we obtain that ${\H}_{1}^{*} \approx^*_{\FPQ} {\H}_{2}^{*}$.

Suppose now that $\H_1^* \approx^*_{\FPQ} \H_2^*$. Consider $\H_1^{*(rep)}$ and $\H_2^{*(rep)}$ respectively computed from $\code({\H}_{1}^*)$ and $\code({\H}_2^*)$. It follows that $\H_1^{*(rep)}\approx^*_{\FPQ} \H_2^{*(rep)}$. Now observe that  by construction $\code(\H_1^*)=\code(\H_1^{*(rep)})$ and henceforth among the set of \FPQ-hierarchies equivalent to $\H_1^*$, $\code(\H_1^{*(rep)})$ is the smallest lexicographic one. The same holds for  $\H_1^*$ and $\H_1^{*(rep)}$. Consequently, if $\code(\H_1^*)\neq \code(\H_2^*)$, then one is smaller lexicographically than the other: a contradiction.
\end{proof}

\begin{theorem} \label{th_isomorphism}
Let $G_1$ and $G_2$ be two proper chordal graphs. One can test in polynomial time if $G_1=(V_1,E_1)$ and $G_2=(V_2,E_2)$ are isomorphic graphs.
\end{theorem}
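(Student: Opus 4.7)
The plan is to combine Lemmas~\ref{lem_isomorphism_equivalence} and~\ref{lem_isomorphism_code} to reduce graph isomorphism on proper chordal graphs to equality of canonical isomorphism codes of indifference \FPQ-hierarchies. First, I would apply Theorem~\ref{th_recognition} to verify that both $G_1$ and $G_2$ are proper chordal and, in the process, obtain an indifference tree-layout of $G_1$ rooted at some vertex $x_1 \in V_1$. From this tree-layout, I use the construction from the proof of Theorem~\ref{th_canonical} to build the \FPQ-hierarchy $\H_1(x_1)$, augment each skeleton edge $e$ with the label $\A(e)$ to obtain the indifference \FPQ-hierarchy $\H_1^*(x_1)$, and finally compute $\code(\H_1^*(x_1))$.

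Next, for each vertex $x_2 \in V_2$, I test (again via Theorem~\ref{th_recognition}) whether $G_2$ admits an indifference tree-layout rooted at $x_2$. If so, I build $\H_2^*(x_2)$, compute $\code(\H_2^*(x_2))$, and compare it with $\code(\H_1^*(x_1))$. The algorithm declares $G_1$ and $G_2$ isomorphic if some iteration yields equal codes and non-isomorphic otherwise. For correctness: if $G_1 \cong G_2$ via an isomorphism $f$, then $x_2 := f(x_1)$ is automatically a valid root of an indifference tree-layout of $G_2$, so Lemma~\ref{lem_isomorphism_equivalence} and Lemma~\ref{lem_isomorphism_code} guarantee that the iteration at this $x_2$ produces matching codes. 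Conversely, any matching pair $(x_1,x_2)$ produces $\H_1^*(x_1) \approx^*_{\FPQ} \H_2^*(x_2)$ by Lemma~\ref{lem_isomorphism_code} and therefore a graph isomorphism by Lemma~\ref{lem_isomorphism_equivalence}.

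The main obstacle I foresee is arguing that $\code(\H^*)$ can be computed in polynomial time despite the apparent combinatorial explosion of eligible permutations at each node. I would handle this by a bottom-up recursion over $\H^*$: at a leaf or an $\mathsf{F}$-node the identity permutation is forced; at a $\mathsf{Q}$-node there are only two eligible permutations (identity and its reverse), so one computes both candidate codes and keeps the lexicographically smaller one. At a $\mathsf{P}$-node the remark following the code definition is crucial: since $\mathsf{P}$-nodes always satisfy $I(e) = [1,k]$ for every incident skeleton edge, the label $I^\beta(e)$ does not depend on $\beta$, so both the block children and the skeleton children can be sorted greedily by their already-computed subcodes to realize the minimum. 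Coupled with polynomial-time recognition and hierarchy construction, and with $O(|V_2|)$ iterations over candidate roots $x_2$, the overall algorithm runs in polynomial time.
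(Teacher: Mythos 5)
Your proposal is correct and follows essentially the same route as the paper: fix one root $x_1$ for $G_1$, iterate over all candidate roots $x_2$ of $G_2$, and reduce the comparison to equality of isomorphism codes via Lemma~\ref{lem_isomorphism_code} and Lemma~\ref{lem_isomorphism_equivalence}. Your additional remarks on why the code computation is polynomial (forced permutation at \textsf{F}-nodes, only two candidates at \textsf{Q}-nodes, greedy sorting at \textsf{P}-nodes thanks to $I(e)=[1,k]$) simply make explicit what the paper relegates to the remark following the definition of $\code$.
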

\begin{proof}
The algorithm is working as follows. First, compute a tree-layout $\T_1$ of $G_1$ and the indifference \FPQ-hierarchy $\H_1^*$  such that $\T_1\in\mathfrak{T}_{\FPQ}(H_1)$. This can be done in polynomial time by~\autoref{th_recognition}. Then for every vertex $x_2\in V_2$, we test if there exists an indifference tree-layout $\T_2$ rooted at $x_2$; compute the corresponding indifference \FPQ-hierarchy $\H_2^*$ and test whether $\H_1^*\approx_{\FPQ}^* \H_2^*$. By \autoref{lem_isomorphism_code}, testing equivalence between \FPQ-hierarchies can be done by computing and comparing the isomorphism codes of $\H_1^*$ and $\H_2^*$. Moreover, this latter task can be achieved in polynomial time. By \autoref{lem_isomorphism_equivalence}, if one of these tests is positive, then we can conclude that $G_1$ and $G_2$ are isomorphic graphs.
\end{proof}

\section{Conclusion}

Our results demonstrate that proper chordal graphs form a rich class of graphs. First, its relative position with respect to important graph subclasses of chordal graphs and the fact that the isomorphism problem belongs to \textsf{P} for proper chordal graphs shows that they form a non-trivial potential island of tractability for many other algorithmic problems. In this line, we let open the status of Hamiltonian cycle, which is polynomial time solvable in proper interval graphs \cite{Bertossi83,Ibarra09} and interval graphs \cite{Keil85,BertoissiB86}, but \textsc{NP}-complete on strongly chordal graphs~\cite{Muller96}. We were only able to resolve the special case of split proper chordal graphs. An intriguing algorithmic question is whether proper chordal graphs can be recognized in linear time. Second, the canonical representation we obtained of the set of indifference tree-layouts rooted at some vertex witnesses the rich combinatorial structure of proper chordal graphs. We believe that this structure has to be further explored and could be important for the efficient resolution of more computational problems. For example, as proper chordal graphs form a hereditary class of graphs, one could wonder if the standard graph modification problems (vertex deletion, edge completion or deletion, etc.), which are \textsf{NP}-complete by \cite{LewisY80}, can be resolved in \textsf{FPT} time. The structure of proper chordal graphs is not yet fully understood. The first natural question on this aspect is to provide a forbidden induced subgraph characterization. This will involve infinite families of forbidden subgraphs. Furthermore, understanding what makes a vertex the root of an indifference tree-layout is certainly a key ingredient for a fast recognition algorithm. We would like to stress that a promising line of research is to consider further tree-layout based graph classes. For this, following the work of Damaschke~\cite{Damaschke90}, Hell et al.~\cite{HellMR14} and Feuilloley and Habib~\cite{FeuilloleyH21} on layouts, we need to investigate in a more systematic way various patterns to exclude, including rooted tree patterns. 


\bibliographystyle{plainurl}

\end{document}